\providecommand{\U}[1]{\protect\rule{.1in}{.1in}}
\newtheorem{theorem}{Theorem}
\newtheorem{definition}{Definition}
\newtheorem{proposition}{Proposition}
\newtheorem{remark}[theorem]{Remark}
\newenvironment{proof}[1][Proof]{\noindent\textbf{#1.} }{\ \rule{0.5em}{0.5em}}
\newcommand{\nc}{\newcommand}
\nc{\EPPT}{{E_{\operatorname{PPT}}}}
\nc{\EPPTone}{{E_{\operatorname{PPT}}^{(1)}}}
\nc{\EK}{{E_{\kappa}}}
\nc{\lbar}[1]{\overline{#1}}
\nc{\bra}[1]{\langle#1|}
\nc{\ket}[1]{|#1\rangle}
\nc{\ketbra}[2]{|#1\rangle\!\langle#2|}
\nc{\braket}[2]{\langle#1|#2\rangle}
\nc{\proj}[1]{| #1\rangle\!\langle #1 |}
\nc{\avg}[1]{\langle#1\rangle}
\nc{\Rank}{\operatorname{Rank}}
\nc{\smfrac}[2]{\mbox{$\frac{#1}{#2}$}}
\nc{\tr}{\operatorname{Tr}}
\nc{\ox}{\otimes}
\nc{\dg}{\dagger}
\nc{\dn}{\downarrow}
\nc{\cA}{{\cal A}}
\nc{\cB}{{\cal B}}
\nc{\cC}{{\cal C}}
\nc{\PPT}{{\operatorname{PPT}}}
\begin{document}
\preprint{ }
\title{Cost of quantum entanglement simplified}
\author{Xin Wang}
\affiliation{Institute for Quantum Computing, Baidu Research, Beijing 100193, China}
\affiliation{Joint Center for Quantum Information and Computer Science, University of
Maryland, College Park, Maryland 20742, USA}
\email{wangxin73@baidu.com}
\author{Mark M. Wilde}
\affiliation{Hearne Institute for Theoretical Physics, Department of Physics and Astronomy, and 
Center for Computation and Technology, Louisiana\ State University, Baton
Rouge, Louisiana 70803, USA}
\email{mwilde@lsu.edu}

\begin{abstract}
Quantum entanglement is a key physical resource in quantum information
processing that allows for performing basic quantum tasks such as teleportation
and quantum key distribution, which are impossible in the classical world.
Ever since the rise of quantum information theory, it has been an open problem to quantify entanglement in an information-theoretically   meaningful way. In particular, every previously defined entanglement measure bearing a precise information-theoretic meaning is not known to be efficiently computable, or if it is efficiently
computable, then it is not known to have a precise information-theoretic meaning. In this paper, we
meet this challenge by introducing an entanglement measure that has a precise
information-theoretic meaning as the exact cost required to prepare an entangled state when
two distant parties are allowed to perform quantum operations that completely
preserve the positivity of the partial transpose. 
Additionally, this
entanglement measure is efficiently computable by means of a semi-definite
program, and it bears a number of useful properties such as additivity and faithfulness. Our results bring key insights into the fundamental
entanglement structure of arbitrary quantum states, and they can
be used directly to assess and quantify the entanglement produced in
quantum-physical experiments.
\end{abstract}
\date{\today}
\startpage{1}
\endpage{10}
\maketitle

\textit{Introduction.}---Quantum entanglement is a fundamental property of quantum states that has no
classical analog. As famously remarked by Schr\"{o}dinger \cite{S35}, it is
\textquotedblleft the characteristic trait of quantum mechanics, the one that
enforces its entire departure from classical lines of
thought.\textquotedblright\ Einstein, Podolsky, and Rosen were confounded by
entanglement \cite{epr1935}, and based on this, proposed a theory alternative
to quantum mechanics, which was later ruled out by a theoretical proposal of
Bell \cite{bell1964} and experimental confirmations of Bell's test
\cite{PhysRevLett.47.460,Hensen2015,PhysRevLett.115.250401,PhysRevLett.115.250402}%
.

The aforementioned early work on understanding entanglement ended up being
foundational for the modern field of quantum information science
\cite{book2000mikeandike,W17book}, whose goal is to harness the strange
properties of quantum states for information processing tasks that are not
possible in the classical world. Due to seminal work by Bennett \textit{et
al}., we now understand quantum entanglement to be the enabling fuel for a
variety of quantum protocols such as teleportation \cite{Bennett1993}, dense
coding \cite{PhysRevLett.69.2881}, and quantum key distribution
\cite{bb84,Ekert:1991:661}.

In the fundamental protocols mentioned above, it is required for the entangled
states being consumed to be in a pure form, known as maximally entangled
states. However, in experimental practice, quantum states do
not come in this pure variety, but instead are produced as mixtures of pure
states. As such, a key goal of the resource theory of entanglement
\cite{Bennett1996c} is to understand how well mixed quantum states can be
converted to pure maximally entangled states and vice versa, by means of
\textquotedblleft free\textquotedblright\ physical operations that do not
increase entanglement.
Motivated by the \textquotedblleft distant laboratories
paradigm,\textquotedblright\ in which the two parties holding shares of a
quantum state are spatially separated, one set of physical operations that is reasonable to 
allow for free consists of those that can be implemented by local operations and
classical communication (LOCC). The characterization of entanglement as a resource in practical settings is also rooted in this distant laboratories paradigm.

There are two primary operational ways for quantifying entanglement in a
two-party quantum state $\rho_{AB}$:\ the first is known as distillable
entanglement \cite{Bennett1996c}\ and the second is known as entanglement cost
\cite{Bennett1996c,Hayden2001}. In the first approach, one is interested to
know the largest rate at which maximally entangled states can be distilled by
means of LOCC\ from the state $\rho_{AB}$. In the second, one is interested to
know the smallest rate at which maximally entangled states are required to
prepare the state $\rho_{AB}$ by means of LOCC. There are a number of
technical variations of each task that have been considered
\cite{Bennett1996c,TH00,Hayden2001,BD11},\ involving one or multiple copies of
the state $\rho_{AB}$, or for the task to be accomplished exactly or with some
error tolerance. So far, beyond the case of pure states \cite{BBPS96}, it has
been a great challenge since the publication of the seminal work in
\cite{Bennett1996c} to characterize the distillability and cost of quantum
entanglement. Much of the progress during the past two decades has to do with
finding alternative entanglement measures that bound entanglement
distillability or cost, while possessing properties that are generally agreed
upon to be reasonable
\cite{VP98,R99,R01,Vidal2002,CW04,Plenio2005b,Plenio2007,Horodecki2009a,WD16pra,Wang2016d}. Even many of the measures that have been defined are known to be difficult to compute \cite{Huang2014}.

Due to the aforementioned challenges associated with mixed-state entanglement
and the set LOCC in general \cite{CLMOW14}, researchers have looked in other
directions in order to understand the nature of entanglement. One approach was
pioneered in \cite{R01}, with the introduction of another set of free
operations that \textquotedblleft completely preserve the positivity of the
partial transpose\textquotedblright\ \cite{CVGG17} (we explain the precise meaning of this
term later). This set (abbreviated by C-PPT-P) has been considered in prior
work \cite{PhysRevLett.87.257902,Audenaert2003,PhysRevLett.89.240403,Plenio2005b,Wang2016d} on entanglement theory for at least two reasons:
\begin{enumerate}
\item The mathematical
structure of LOCC\ is difficult to work with and so enlarging the set to a
more mathematically tractable set allows for providing bounds on what one
could accomplish with LOCC. That is, such free operations  provide accessible estimates of the capabilities of LOCC in entanglement manipulation. 

\item The free states that correspond to this
enlarged set, known as positive partial transpose (PPT)\ states, in any case
do not have any useful entanglement on their own (in the sense that it is 
impossible to distill maximally entangled states from them at a non-trivial rate via LOCC).
\end{enumerate}
\noindent As observed in~\cite{R01}, an
advantage of the set C-PPT-P over LOCC\ is that performing optimizations over
it allows for incorporating the tools of semi-definite programming.

One key problem that has remained open for many years now is to characterize
the exact entanglement cost of a quantum state $\rho_{AB}$\ when
C-PPT-P\ operations are allowed for free, which is equal to the minimum rate
at which entanglement is required to prepare many perfect and identical copies
of $\rho_{AB}$ by means of these free operations. The optimal rate is known as
the PPT\ exact entanglement cost. The problem was formalized in
\cite{Audenaert2003}, where some bounds on this quantity were given and some
partial solutions were presented.\ The problem was considered further in
\cite{Matthews2008}, which however focused mainly on transformations of pure
entangled states. 

In this paper, we determine the PPT\ exact entanglement cost of an arbitrary
two-party quantum state, thus closing a longstanding investigation in
entanglement theory. We find that the solution is given by a new entanglement
measure, which we call $\kappa$-entanglement. The $\kappa$-entanglement can be
calculated by means of a semi-definite program \cite{BV04}, implying that it can be
efficiently calculated in time polynomial in the dimension of the state on
which it is being evaluated \cite{Khachiyan1980,AHK05,AK07,AHK12, LSW15}. These two properties single out the $\kappa
$-entanglement as the first entanglement measure that has a concrete
information-theoretic meaning while being efficiently calculable. The $\kappa
$-entanglement also bears a number of desirable properties, including
additivity, normalization, and faithfulness, which we expand upon later. It is
neither convex nor monogamous \cite{T04}, which calls into question whether it
is truly necessary for an entanglement measure to satisfy either of these properties.

Our results on $\kappa$-entanglement of quantum states bring new
insights regarding the structure of quantum entanglement as a physical
resource. For one, they demonstrate that entanglement can be quantified in a
precise and physically relevant operational scenario. Furthermore, they call
into question whether properties such as monogamy or convexity are really
required for entanglement measures, in light of the fact that $\kappa
$-entanglement does not have these properties while at the same time having
the aforementioned operational meaning.

We now begin the more technical part of our paper by giving some background, defining the PPT exact entanglement cost and $\kappa$-entanglement of a quantum state, and justifying how these quantities are equal. We note here that all mathematical proofs of the various statements and
properties summarized in this paper are given in the Supplementary Material \cite{ECappendix}, which also includes the following references not mentioned in the main text \cite{N99,H06book,YC18,Horodecki99,Watrous2011b,FAR11,Ishizaka2004a,CKW00,KWin04,D67,RS78,GS19,WW18,FB05}. 

Let us first recall some basic elements of quantum information. A two-party
or bipartite quantum state $\rho_{AB}$ is a unit trace, positive semi-definite
operator acting on a tensor-product Hilbert space $\mathcal{H}_{A}%
\otimes\mathcal{H}_{B}$. We say that Alice possesses system $A$ and Bob system
$B$, and we imagine that Alice and Bob are located in distant laboratories.
Such a state is separable \cite{W89} if there exists a probability
distribution $p_{X}$ and sets of states $\left\{  \sigma_{A}^{x}\right\}
_{x}$ and $\{\tau_{B}^{x}\}_{x}$ such that%
\begin{equation}
\rho_{AB}=\sum_{x}p_{X}(x)\sigma_{A}^{x}\otimes\tau_{B}^{x}.
\end{equation}
If $\rho_{AB}$ cannot be written in the above way, then it is entangled~\cite{W89}.

It is a difficult (NP-hard) computational problem to decide whether an arbitrary quantum
state is separable or entangled \cite{G03,G10}. As such, researchers have
sought out simpler, \textquotedblleft one-way\textquotedblright\ criteria to
classify entanglement of quantum states. Possibly the simplest such criterion
is the positive partial transpose criterion \cite{Per96,HHH96}. To define
this, recall that the partial transpose, with respect to a given orthonormal
basis $\{|i\rangle_{B}\}_{i}$,\ is defined as the following linear map:%
\begin{equation}
T_{B}(X_{AB}):=\sum_{i,j}\left(  I_{A}\otimes|i\rangle\langle j|_{B}\right)
X_{AB}\left(  I_{A}\otimes|i\rangle\langle j|_{B}\right)  ,
\end{equation}
which we also write as $X_{AB}^{T_{B}} \equiv T_{B}(X_{AB})$.
An operator $X_{AB}$ has positive partial transpose (PPT) if $T_{B}(X_{AB})$
is positive semi-definite. By inspecting definitions, we conclude that if a
bipartite state is separable, then it is a PPT state. By contrapositive, we
conclude that a bipartite state is entangled if it has a negative partial transpose. The PPT criterion is one-way in the sense that there exist entangled PPT states \cite{HHH98}. 

A quantum channel is a completely positive trace-preserving map, and a bipartite quantum channel $\mathcal{N}_{AB\rightarrow A^{\prime}B^{\prime}}$
accepts input systems $A$ and $B$ and outputs $A^{\prime}$ and $B^{\prime}$,
where one party Alice possesses $A$ and $A^{\prime}$ and another party Bob possesses $B$ and
$B^{\prime}$. A bipartite quantum channel $\mathcal{N}_{AB\rightarrow
A^{\prime}B^{\prime}}$ is completely positive-partial-transpose preserving
\cite{R01}, abbreviated as C-PPT-P,\ if the map $T_{B^{\prime}}%
\circ\mathcal{N}_{AB\rightarrow A^{\prime}B^{\prime}}\circ T_{B}$ is
completely positive.

In the resource theory of NPT\ (non-positive partial transpose) entanglement,
the free operations allowed are C-PPT-P\ bipartite channels and the free states are PPT states, and one of the
main goals is to determine if one bipartite state can be converted to another
either exactly or approximately by means of the free operations. The
particular task of interest to us here is the PPT exact entanglement cost. 

\textit{One-shot exact entanglement cost.}---We
begin by defining the one-shot PPT\ exact entanglement cost of a bipartite
state $\rho_{AB}$\ as the logarithm of the minimum Schmidt rank of a maximally
entangled state that is required to prepare $\rho_{AB}$ by means of a C-PPT-P
channel:%
\begin{equation*}
E_{\operatorname{PPT}}^{(1)}(\rho_{AB}):=\log_{2}\inf_{d\in\mathbb{N},\Lambda\in
\operatorname{PPT}}\left\{  d: \rho_{AB} = \Lambda_{\hat{A}\hat{B}\rightarrow AB}%
(\Phi_{\hat{A}\hat{B}}^{d})\right\}  ,
\end{equation*}
where $\Lambda\in
\operatorname{PPT}$ is a shorthand for $\Lambda$ being a C-PPT-P bipartite channel and the maximally entangled state $\Phi_{\hat{A}\hat{B}}^{d}$ is defined as%
\begin{equation}
\Phi_{\hat{A}\hat{B}}^{d}:=\frac{1}{d}\sum_{i,j}|i\rangle\langle j|_{\hat{A}%
}\otimes|i\rangle\langle j|_{\hat{B}},\label{eq:max-ent-state}%
\end{equation}
with $\{|i\rangle_{\hat{A}}\}_{i}$ and $\{|i\rangle_{\hat{B}}\}_{i}$
orthonormal bases. The (asymptotic)\ PPT\ exact entanglement cost of
$\rho_{AB}$ is defined as
\begin{equation}
E_{\operatorname{PPT}}(\rho_{AB}):=\limsup_{n\rightarrow\infty}\frac{1}%
{n}E_{\operatorname{PPT}}^{(1)}(\rho_{AB}^{\otimes n}%
).\label{eq:-asymp-PPT-cost}%
\end{equation}

By building on earlier results from \cite{Audenaert2003,Matthews2008}, our
first result is for the one-shot PPT\ exact
entanglement cost:%
\begin{proposition}\label{eq:PPT oneshot}
For a given bipartite quantum state $\rho_{AB}$, its one-shot PPT\ exact
entanglement cost is given by
\begin{multline}
E_{\operatorname{PPT}}^{(1)}(\rho_{AB})=\\
\inf\left\{
\begin{array}
[c]{c}%
\log_{2}m:G_{AB}\geq0,\ \operatorname{Tr}[G_{AB}]=1,\\
-\left(  m-1\right)  G_{AB}^{T_{B}}\leq\rho_{AB}^{T_{B}}\leq\left(
m+1\right)  G_{AB}^{T_{B}}, m\in\mathbb{N} %
\end{array}
\right\}  .
\end{multline}
\end{proposition}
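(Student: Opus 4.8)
The plan is to recast $\EPPTone(\rho_{AB})$ as a feasibility problem over the Choi operators of C-PPT-P channels and then exploit the symmetry of the maximally entangled input to collapse the search onto a single operator $G_{AB}$. Fixing a candidate Schmidt rank $d=m$, I would first record that a C-PPT-P channel $\Lambda_{\hat A\hat B\to AB}$ with $\Lambda(\Phi^d_{\hat A\hat B})=\rho_{AB}$ exists precisely when its Choi operator $J$, defined on reference systems $R_{\hat A}R_{\hat B}$ of $\hat A\hat B$, obeys $J\geq 0$ (complete positivity), $\tr_{AB}[J]=\1_R$ (trace preservation), and $J^{T_{R_{\hat B}B}}\geq 0$; this last condition encodes C-PPT-P because the Choi operator of $T_B\circ\Lambda\circ T_{\hat B}$ equals the partial transpose of $J$ over Bob's reference and output systems. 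The preparation constraint $\Lambda(\Phi^d)=\rho$ rewrites as $\bra{\Gamma}_R J\ket{\Gamma}_R=d\,\rho_{AB}$, where $\ket{\Gamma}_R=\sum_i\ket{ii}$ is the unnormalized maximally entangled vector.

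The crucial step is to symmetrize. Since $(U\ox\overline{U})\ket{\Gamma}=\ket{\Gamma}$, precomposing $\Lambda$ with the local unitary channel $X\mapsto(U\ox\overline{U})X(U\ox\overline{U})^\dg$ leaves $\Lambda(\Phi^d)=\rho$ intact; as local unitaries are C-PPT-P and the free operations are closed under composition and convex mixing, I would Haar-average over $U$ without changing feasibility or the value $d$. The averaged Choi operator then lies in the commutant of $\{W\ox\overline{W}\}$ acting on $R$, hence takes the form $J=\Phi^d_R\ox C_{AB}+(\1_R-\Phi^d_R)\ox E_{AB}$. Feeding this back into the reduced constraints forces $C_{AB}=\rho_{AB}$ from the preparation condition (using $\bra{\Gamma}\Phi^d_R\ket{\Gamma}=d$ and $\bra{\Gamma}(\1_R-\Phi^d_R)\ket{\Gamma}=0$), while trace preservation and positivity give $\tr[E_{AB}]=1$ and $E_{AB}\geq 0$, so $G_{AB}:=E_{AB}$ is a density operator.

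It then remains to unpack $J^{T_{R_{\hat B}B}}\geq 0$. Using $(\Phi^d_R)^{T_{R_{\hat B}}}=\tfrac1d F_R$ with $F_R$ the swap on $R_{\hat A}R_{\hat B}$, and decomposing $F_R=\Pi_{\mathrm{sym}}-\Pi_{\mathrm{asym}}$ into symmetric and antisymmetric projectors, a short regrouping shows that $J^{T_{R_{\hat B}B}}$ is block diagonal, with the $\Pi_{\mathrm{sym}}$ block proportional to $\rho_{AB}^{T_B}+(m-1)G_{AB}^{T_B}$ and the $\Pi_{\mathrm{asym}}$ block proportional to $(m+1)G_{AB}^{T_B}-\rho_{AB}^{T_B}$. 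Positivity of both blocks is exactly the sandwich $-(m-1)G_{AB}^{T_B}\leq\rho_{AB}^{T_B}\leq(m+1)G_{AB}^{T_B}$, and minimizing $\log_2 m$ over feasible $m\in\mathbb N$ delivers the stated formula.

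The step I expect to be the main obstacle is justifying the symmetry reduction with full rigor: confirming that the input twirl preserves the C-PPT-P property, tracking how that twirl acts on the reference systems of the Choi operator so as to land in the two-dimensional commutant, and keeping the bookkeeping of the two partial transposes straight throughout. Once the commutant form of $J$ is secured, the remaining manipulations are routine computations with the spectral projectors of the swap operator.
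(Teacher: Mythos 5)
Your proposal is correct and follows essentially the same route as the paper's proof: the $U\otimes\overline{U}$ input twirl, the resulting two-parameter structure, and the symmetric/antisymmetric decomposition of the swap operator yielding the sandwich inequalities are all identical, with your Choi-operator bookkeeping being merely a change of formalism from the paper's direct computation of the channel action. Indeed, the block form $J=\Phi^m_R\otimes\rho_{AB}+(\1_R-\Phi^m_R)\otimes G_{AB}$ is precisely the Choi operator of the paper's measure-prepare channel, so the two arguments coincide step for step.
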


The proof of this result involves an achievability and optimality part. The
achievability part constructs the channel $\Lambda_{\hat{A}\hat{B}\rightarrow
AB}\in\operatorname{PPT}$ as the following measure-prepare procedure:%
\begin{multline}
\Lambda_{\hat{A}\hat{B}\rightarrow AB}(\omega_{\hat{A}\hat{B}}):=\rho
_{AB}\operatorname{Tr}[\Phi_{\hat{A}\hat{B}}^{m}\omega_{\hat{A}\hat{B}}]\\
+G_{AB}\operatorname{Tr}[\left(  I_{\hat{A}\hat{B}}-\Phi_{\hat{A}\hat{B}}%
^{m}\right)  \omega_{\hat{A}\hat{B}}],
\end{multline}
for $G_{AB}$ a quantum state satisfying $-\left(  m-1\right)  G_{AB}^{T_{B}%
}\leq\rho_{AB}^{T_{B}}\leq\left(  m+1\right)  G_{AB}^{T_{B}}$. That
$\Lambda_{\hat{A}\hat{B}\rightarrow AB}$ is a quantum channel follows
immediately from its construction, and that $\Lambda_{\hat{A}\hat
{B}\rightarrow AB}\in\operatorname{PPT}$ follows from the constraint on
$G_{AB}$. For the optimality part, we exploit the symmetry of the maximally
entangled state $\Phi_{\hat{A}\hat{B}}^{d}$, that it is invariant under the
unitary channel $\left(  U\otimes\overline{U}\right)  \left(  \cdot\right)  \left(
U\otimes\overline{U}\right)  ^{\dag}$ for an arbitrary unitary~$U$, in order
to constrain the set of channels that we have to consider for the PPT\ exact
entanglement cost. Then by applying the constraint that $\Lambda_{\hat{A}%
\hat{B}\rightarrow AB}\in\operatorname{PPT}$, it follows that the
constructed channel is optimal.

\textit{$\kappa$-entanglement.}---The bottleneck of solving the PPT entanglement cost of a general bipartite state lies in determining the regularization of the one-shot cost, which involves evaluating the limit of a series of optimization problems. To overcome this difficulty, we introduce an  efficiently computable entanglement measure, called \textit{$\kappa$-entanglement}, defined as%
\begin{equation*}
E_{\kappa}(\rho_{AB}):=\log_{2}\inf_{S_{AB}\geq0}\left\{  \operatorname{Tr}%
[S_{AB}]:-S_{AB}^{T_{B}}\leq\rho_{AB}^{T_{B}}\leq S_{AB}^{T_{B}}\right\}  .
\end{equation*}
In particular, $E_{\kappa}$ can be computed by means of a semi-definite program (SDP)~\cite{Vandenberghe1996} (see Section \ref{app:SDP} of \cite{ECappendix} for details). SDPs can be computed efficiently by polynomial-time algorithms~\cite{Khachiyan1980,AHK05,AK07,AHK12,LSW15} and are often applied in quantum information (e.g., \cite{Fletcher2007,Watrous2009,Leung2015c,Lami2018,Fang2018,XWthesis,Wang2018,WWW19}).
The CVX software~\cite{Grant2008} allows one to compute SDPs in practice.

By observing that $\kappa$-entanglement is a relaxation of the one-shot cost in~Proposition~\ref{eq:PPT oneshot} up to small corrections, we arrive
at the following bounds on the one-shot exact entanglement cost:%
\begin{proposition}\label{prop:one-shot}
For a bipartite state $\rho_{AB}$, we have
\begin{equation}
\log_{2}\!\left(  2^{E_{\kappa}(\rho_{AB})}-1\right)  \leq E_{\operatorname{PPT}%
}^{(1)}(\rho_{AB})\leq\log_{2}\!\left(  2^{E_{\kappa}(\rho_{AB})}+2\right)
.\label{eq:one-shot-cost-kappa-ent}%
\end{equation}
\end{proposition}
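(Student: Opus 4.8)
The plan is to prove both inequalities by directly converting feasible points of the two semi-definite programs into one another. Write $w := 2^{\EK(\rho_{AB})}$ for the optimal value of the $\kappa$-program, i.e.\ the infimum of $\operatorname{Tr}[S_{AB}]$ over all $S_{AB}\geq 0$ with $-S_{AB}^{T_B}\leq \rho_{AB}^{T_B}\leq S_{AB}^{T_B}$, and recall from Proposition~\ref{eq:PPT oneshot} that $2^{\EPPTone(\rho_{AB})}$ is the smallest integer $m$ for which there is a state $G_{AB}$ obeying $-(m-1)G_{AB}^{T_B}\leq \rho_{AB}^{T_B}\leq (m+1)G_{AB}^{T_B}$. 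The single observation that drives everything is that the partial transpose of each optimization variable is \emph{automatically} positive semi-definite: adding the lower and upper operator inequalities of the one-shot program gives $2m\,G_{AB}^{T_B}\geq 0$, hence $G_{AB}^{T_B}\geq 0$ since $m\geq 1$, and the identical manipulation on the $\kappa$-constraints yields $S_{AB}^{T_B}\geq 0$. This positivity is exactly what makes scalar rescalings preserve the operator inequalities, so I would establish it first.

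For the left inequality I would take any one-shot-feasible pair $(G_{AB},m)$ and set $S_{AB}:=(m+1)G_{AB}\geq 0$. The upper $\kappa$-constraint is immediate from $\rho_{AB}^{T_B}\leq (m+1)G_{AB}^{T_B}=S_{AB}^{T_B}$, while the lower one follows from the chain $-S_{AB}^{T_B}=-(m+1)G_{AB}^{T_B}\leq -(m-1)G_{AB}^{T_B}\leq \rho_{AB}^{T_B}$, where the first inequality uses $G_{AB}^{T_B}\geq 0$. Thus $S_{AB}$ is $\kappa$-feasible with $\operatorname{Tr}[S_{AB}]=m+1$, so $w\leq m+1$ for every feasible $m$; since this bound holds for the minimal feasible $m$, we get $2^{\EPPTone(\rho_{AB})}\geq w-1$, and taking $\log_2$ gives the left-hand inequality.

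For the right inequality I would run the construction in reverse. Start from a $\kappa$-feasible $S_{AB}$ attaining (or approaching) $\operatorname{Tr}[S_{AB}]=w$, normalize to the state $G_{AB}:=S_{AB}/w$, and choose the integer $m:=\lceil w+1\rceil\leq w+2$. Because $S_{AB}^{T_B}\geq 0$ and both $m-1\geq w$ and $m+1\geq w$, rescaling preserves the inequalities: $(m+1)G_{AB}^{T_B}=\tfrac{m+1}{w}S_{AB}^{T_B}\geq S_{AB}^{T_B}\geq \rho_{AB}^{T_B}$ and, symmetrically, $-(m-1)G_{AB}^{T_B}\leq -S_{AB}^{T_B}\leq \rho_{AB}^{T_B}$. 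Hence $(G_{AB},m)$ is one-shot-feasible, so $2^{\EPPTone(\rho_{AB})}\leq m\leq w+2$, and $\log_2$ delivers the right-hand inequality.

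The genuine subtlety --- and the only place one can go wrong --- is the mismatch between the symmetric bound $\pm S_{AB}^{T_B}$ of the $\kappa$-program and the asymmetric bounds $-(m-1),\,(m+1)$ of the one-shot cost, compounded by the integrality of $m$; this is precisely what forces the $-1$ and $+2$ offsets and the ceiling $m=\lceil w+1\rceil$. The manipulations are otherwise elementary, but each scalar comparison silently relies on $G_{AB}^{T_B}\geq 0$ or $S_{AB}^{T_B}\geq 0$, so the main work is really just the opening observation. I would also record two small bookkeeping facts: that $w\geq 1$, which follows from $\operatorname{Tr}[S_{AB}]\geq \operatorname{Tr}[\rho_{AB}]=1$ (using $S_{AB}^{T_B}\geq \rho_{AB}^{T_B}$ and trace-invariance of the partial transpose) and guarantees $m\geq 2$ in the construction, and that if the $\kappa$-infimum is not attained one passes to an $\varepsilon$-optimal $S_{AB}$ and lets $\varepsilon\to 0$.
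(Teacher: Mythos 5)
Your proof is correct and follows essentially the same route as the paper's: both bounds come from converting feasible points between the two programs via the scalings $S_{AB}=(m+1)G_{AB}$ and $G_{AB}=S_{AB}/\operatorname{Tr}[S_{AB}]$ with $m\approx\operatorname{Tr}[S_{AB}]+2$, hinging on the same key observation that the constraints force $G_{AB}^{T_B}\geq 0$ and $S_{AB}^{T_B}\geq 0$ automatically. The only differences are presentational (the paper phrases it as a chain of relaxed optimizations with a floor function and an explicit NPT case split, whereas you use explicit feasible-point maps and a ceiling), not substantive.
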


This result gives a tight and efficiently computable bound for the one-shot PPT exact entanglement cost in terms of $\kappa$-entanglement. A rigorous proof can be found in \cite{ECappendix}. Thus, the inequality in Proposition~\ref{prop:one-shot} demonstrates
that\ the $\kappa$-entanglement is  closely related to the one-shot
PPT\ exact entanglement cost, and as both the operational quantity
$E_{\operatorname{PPT}}^{(1)}(\rho_{AB})$ and the entanglement measure
$E_{\kappa}(\rho_{AB})$ become larger, the gap between them disappears.

In addition to being efficiently calculable by means of a semi-definite program, the $\kappa$-entanglement possesses several properties desirable for an
entanglement measure, including monotonicity under selective
C-PPT-P operations, additivity, faithfulness, and normalization. We elaborate
on each of these briefly now. The monotonicity is the following inequality:%
\begin{equation}
E_{\kappa}(\rho_{AB})\geq\sum_{x:p(x)>0}p(x)E_{\kappa}(\rho_{A^{\prime
}B^{\prime}}^{x}),
\label{eq:monotone-C-PPT-P}
\end{equation}
where $p(x):=\operatorname{Tr}[\mathcal{P}_{AB\rightarrow A^{\prime}B^{\prime
}}^{x}(\rho_{AB})]$, the set $\{\mathcal{P}_{AB\rightarrow A^{\prime}%
B^{\prime}}^{x}\}_{x}$ consists of completely positive, trace non-increasing,
{C-PPT-P} maps such that $\sum_{x}\mathcal{P}_{AB\rightarrow A^{\prime}%
B^{\prime}}^{x}$ is trace preserving, and $\rho_{A^{\prime}B^{\prime}}%
^{x}:=\mathcal{P}_{AB\rightarrow A^{\prime}B^{\prime}}^{x}(\rho_{AB})/p(x)$.
The inequality in \eqref{eq:monotone-C-PPT-P} asserts that $\kappa$-entanglement does not increase on
average under the action of selective C-PPT-P operations, which include selective LOCC operations as a special case. Additivity is the following statement, which is critical
for establishing one of the key results of our paper:%
\begin{equation}
E_{\kappa}(\omega_{A_{1}A_{2}:B_{1}B_{2}})=E_{\kappa}(\rho_{A_{1}B_{1}%
})+E_{\kappa}(\theta_{A_{2}B_{2}}),\label{eq:additivity-states}%
\end{equation}
where $\omega_{A_{1}A_{2}:B_{1}B_{2}}:=\rho_{A_{1}B_{1}}\otimes\theta
_{A_{2}B_{2}}$ and $\rho_{A_{1}B_{1}}$ and $\theta_{A_{2}B_{2}}$ are quantum
states. Faithfulness is that $E_{\kappa}(\rho_{AB})=0$ if and only if
$\rho_{AB}$ is a PPT\ state. Finally, normalization is that $E_{\kappa}(\Phi_{AB}%
^{d})=\log_{2}d$ for $\Phi_{AB}^{d}$ a maximally entangled state of the form
in \eqref{eq:max-ent-state}. Proofs of the properties above are provided in~\cite{ECappendix}.

\textit{Exact entanglement cost.}---The PPT exact entanglement cost $\EPPT$ has been a longstanding open question  since it was first introduced in \cite{Audenaert2003}. The previously best known upper and lower bounds \cite{Audenaert2003} are tight for  general Werner states, but they are not tight in general. The difficulty of determining $\EPPT$ comes from the fact that the one-shot cost is not an SDP, and its regularization makes the problem more intractable. However, by utilizing the techniques of semi-definite optimization and relaxation, we prove that the asymptotic exact entanglement cost of a state $\rho_{AB}$ is given by $\EK(\rho_{AB})$. Specifically, by exploiting \eqref{eq:one-shot-cost-kappa-ent}, the definition of PPT\ exact entanglement cost in \eqref{eq:-asymp-PPT-cost}, and the additivity of $\kappa$-entanglement in \eqref{eq:additivity-states}, we
arrive at one of our core contributions:%
\begin{theorem}\label{thm:PPT cost}
The PPT\ exact entanglement cost of an arbitrary bipartite state $\rho_{AB}$ is given by
\begin{equation}
E_{\operatorname{PPT}}(\rho_{AB})=E_{\kappa}(\rho_{AB}%
).\label{eq:operational-main-states-kappa}%
\end{equation}
\end{theorem}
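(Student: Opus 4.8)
The plan is to sandwich the regularized one-shot cost between two sequences that both converge to $\EK(\rho_{AB})$, exploiting the additivity of $\kappa$-entanglement to linearize the bounds of Proposition~\ref{prop:one-shot} under tensor powers. First I would apply Proposition~\ref{prop:one-shot} to the state $\rho_{AB}^{\otimes n}$, which yields
\begin{equation}
\log_{2}\!\left(2^{\EK(\rho_{AB}^{\otimes n})}-1\right)\leq \EPPTone(\rho_{AB}^{\otimes n})\leq \log_{2}\!\left(2^{\EK(\rho_{AB}^{\otimes n})}+2\right).
\end{equation}
Iterating the two-copy additivity in \eqref{eq:additivity-states} gives $\EK(\rho_{AB}^{\otimes n})=n\,\EK(\rho_{AB})$, so writing $E:=\EK(\rho_{AB})$ the bounds become $\log_{2}(2^{nE}-1)\leq \EPPTone(\rho_{AB}^{\otimes n})\leq \log_{2}(2^{nE}+2)$.

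Next I would divide through by $n$ and take $n\rightarrow\infty$. Assuming $E>0$, I would factor out $2^{nE}$ inside each logarithm:
\begin{align}
\tfrac{1}{n}\log_{2}\!\left(2^{nE}-1\right)&=E+\tfrac{1}{n}\log_{2}\!\left(1-2^{-nE}\right),\\
\tfrac{1}{n}\log_{2}\!\left(2^{nE}+2\right)&=E+\tfrac{1}{n}\log_{2}\!\left(1+2^{1-nE}\right).
\end{align}
Since $2^{-nE}\rightarrow 0$ and $2^{1-nE}\rightarrow 0$, both correction terms vanish as $n\rightarrow\infty$, so the lower and upper sequences each converge to $E$. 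By the squeeze theorem the limit of $\tfrac{1}{n}\EPPTone(\rho_{AB}^{\otimes n})$ exists and equals $E$; in particular the $\limsup$ in \eqref{eq:-asymp-PPT-cost} equals $E=\EK(\rho_{AB})$, which is the claim.

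It remains to treat the boundary case $E=0$, which is where the lower bound of Proposition~\ref{prop:one-shot} degenerates (its argument becomes $\log_{2}0$) and is the only delicate point. Here I would invoke faithfulness: $\EK(\rho_{AB})=0$ means $\rho_{AB}$ is a PPT state, so the preparation map $\omega\mapsto\rho_{AB}\operatorname{Tr}[\omega]$ from a trivial (one-dimensional) input is C-PPT-P precisely because $\rho_{AB}^{T_{B}}\geq 0$; hence $\EPPTone(\rho_{AB}^{\otimes n})=0$ for every $n$ and $\EPPT(\rho_{AB})=0=\EK(\rho_{AB})$. I expect the main (albeit mild) obstacle to be the careful bookkeeping of these limits together with the $E=0$ edge case; everything else reduces to assembling Proposition~\ref{prop:one-shot} and the additivity relation \eqref{eq:additivity-states}, both of which are already in hand.
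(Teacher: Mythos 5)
Your proposal is correct and follows essentially the same route as the paper's own proof: apply the one-shot bounds of Proposition~\ref{prop:one-shot} to $\rho_{AB}^{\otimes n}$, linearize via the additivity relation $E_{\kappa}(\rho_{AB}^{\otimes n})=nE_{\kappa}(\rho_{AB})$, divide by $n$, and take the limit. Your explicit treatment of the $E_{\kappa}=0$ edge case (via faithfulness) is a welcome bit of extra care that the paper handles only implicitly; note it could also be dispatched by observing that $\EPPTone\geq 0$ together with the upper bound $\frac{1}{n}\log_{2}(2^{0}+2)\to 0$ already squeezes the limit to zero.
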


This result has two important consequences. First,
it demonstrates that $\kappa$-entanglement precisely determines the PPT\ exact entanglement cost of an arbitrary quantum state. Notably, this is the first time that an entanglement measure for general bipartite states has been proven not only to possess a direct operational meaning but also to be efficiently computable, thus solving a question that has remained open since the inception of entanglement theory over two decades ago. 
Second, note that $\EK$ is additive (cf., Eq.~\eqref{eq:additivity-states}), so that Theorem~\ref{thm:PPT cost} implies that the PPT exact entanglement cost is additive in general:
 \begin{equation}
 \EPPT(\rho_{AB} \otimes \omega_{A'B'}) = 
  \EPPT(\rho_{AB})
  +
   \EPPT(\omega_{A'B'}).
 \end{equation}

Based on Theorem~\ref{thm:PPT cost}, we further show that the PPT exact entanglement cost violates the convexity and monogamy inequalities, which gives insight to the fundamental structure of entanglement.
Recall that for an entanglement measure~$E$,
convexity is the following statement:%
\begin{equation}
E(\overline{\rho}_{AB})\leq\sum_{z}p(z)E(\rho_{AB}^{z}),
\end{equation}
where $p(z)$ is a probability distribution, $\left\{  \rho_{AB}^{z}\right\}
_{z}$ is a set of states, and $\overline{\rho}_{AB}:=\sum_{z}p(z)\rho_{AB}%
^{z}$. This is not true for the PPT exact entanglement cost. 
In particular, let us choose the two-qubit states $\rho_1:=\Phi_2$,  $\rho_2:=\frac{1}{2}(\proj{00}+\proj{11})$, and their average $\rho:=\frac{1}{2} (\rho_1+\rho_2)$.
By direct calculation, we find that
$\EPPT(\rho_1)=1$,
$\EK(\rho_2)=0$, and
$\EK(\rho)=\log_2 \frac{3}{2}$,
from which we conclude that
\begin{align}
\EK(\rho) 	> \frac 1 2(\EK(\rho_1)+\EK(\rho_2)),
\end{align}
This implies the following:
\begin{proposition}[No convexity]
\label{prop:no-convexity}
The PPT exact entanglement cost is not generally convex.
\end{proposition}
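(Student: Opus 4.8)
The plan is to invoke Theorem~\ref{thm:PPT cost}, which gives $\EPPT=\EK$, so that it suffices to produce a single triple of states on which $\EK$ fails convexity. The states $\rho_1=\Phi_2$, $\rho_2=\frac{1}{2}(\proj{00}+\proj{11})$, and $\rho=\frac{1}{2}(\rho_1+\rho_2)$ are already fixed, so the whole task reduces to verifying the three values $\EK(\rho_1)=1$, $\EK(\rho_2)=0$, $\EK(\rho)=\log_2\frac{3}{2}$ and comparing them. I would first dispatch the two easy evaluations and then focus on the single nontrivial semi-definite program.

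For $\rho_1=\Phi_2$, the normalization property $\EK(\Phi_{AB}^{d})=\log_2 d$ applies with $d=2$, giving $\EK(\rho_1)=1$. For $\rho_2$, I would note that it is a convex combination of product states, hence separable and in particular PPT, so faithfulness yields $\EK(\rho_2)=0$; concretely $\rho_2^{T_{B}}=\rho_2\geq0$, so $S_{AB}=\rho_2$ is feasible with $\tr[S_{AB}]=1$, while $\tr[S_{AB}]\geq1$ holds for any feasible $S_{AB}$.

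The substance is $\EK(\rho)$. I would compute $\rho^{T_{B}}$ explicitly and record its spectrum $\{\tfrac12,\tfrac12,\tfrac14,-\tfrac14\}$; in particular $\rho^{T_{B}}$ is full rank with $\|\rho^{T_{B}}\|_1=\frac{3}{2}$. For the upper bound $\EK(\rho)\leq\log_2\frac32$ I would use the feasible point $S_{AB}=|\rho^{T_{B}}|$, which is positive semi-definite and, being diagonal in the computational basis, satisfies $S_{AB}^{T_{B}}=|\rho^{T_{B}}|\geq\pm\,\rho^{T_{B}}$ with $\tr[S_{AB}]=\frac{3}{2}$. For the matching lower bound I would introduce the spectral projections $\Pi_+$ and $\Pi_-$ onto the positive and negative eigenspaces of $\rho^{T_{B}}$; the two SDP constraints $S_{AB}^{T_{B}}\pm\rho^{T_{B}}\geq0$ give $\tr[\Pi_+ S_{AB}^{T_{B}}]\geq\tr[\Pi_+\rho^{T_{B}}]$ and $\tr[\Pi_- S_{AB}^{T_{B}}]\geq-\tr[\Pi_-\rho^{T_{B}}]$, and since $\rho^{T_{B}}$ has no kernel we have $\Pi_++\Pi_-=I$, so $\tr[S_{AB}]=\tr[S_{AB}^{T_{B}}]\geq\|\rho^{T_{B}}\|_1=\frac{3}{2}$. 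I expect the lower bound to be the main obstacle: its clean form here hinges on the full-rankness of $\rho^{T_{B}}$, which rules out any uncontrolled contribution from a zero eigenspace, and in the general case one would instead certify optimality via the dual SDP.

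Finally, combining the three values gives $\EK(\rho)=\log_2\frac{3}{2}>\frac12=\frac12\big(\EK(\rho_1)+\EK(\rho_2)\big)$, using $\frac32>\sqrt2$. Passing back through Theorem~\ref{thm:PPT cost}, this shows $\EPPT(\rho)>\frac12\big(\EPPT(\rho_1)+\EPPT(\rho_2)\big)$, so the convexity inequality is violated and the proposition follows.
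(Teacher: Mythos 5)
Your proof is correct, and it follows the same overall strategy as the paper: invoke Theorem~\ref{thm:PPT cost} to pass from $\EPPT$ to $\EK$, and then exhibit non-convexity of $\EK$ on exactly the same triple $\rho_1=\Phi_2$, $\rho_2=\frac12(\proj{00}+\proj{11})$, $\rho=\frac12(\rho_1+\rho_2)$. Where you differ is in how the key value $\EK(\rho)=\log_2\frac32$ is certified. The paper's supplementary proof leans on Proposition~\ref{prop:connect-to-log-neg} together with the fact, cited from \cite{Ishizaka2004a}, that $|\sigma^{T_B}|^{T_B}\geq 0$ for \emph{every} two-qubit state $\sigma$, so that $\EK=E_N$ on two qubits and all three evaluations reduce to computing logarithmic negativities. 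You instead give a self-contained SDP evaluation for the one nontrivial state: your feasible point $S_{AB}=|\rho^{T_B}|$ works because this particular operator is diagonal (so you verify the binegativity condition by hand rather than citing it), and your spectral-projection argument for the lower bound is in effect an inline primal proof of $\EK\geq E_N$, which the paper proves instead via the dual SDP. The paper's route buys generality (any two-qubit counterexample to convexity of $E_N$ would do); yours buys independence from the external binegativity result and from the dual formulation. One small correction: your remark that the clean lower bound ``hinges on the full-rankness of $\rho^{T_B}$'' is overly cautious. The two constraints $S^{T_B}\pm\rho^{T_B}\geq 0$ sum to give $S^{T_B}\geq 0$, so even when $\rho^{T_B}$ has a kernel, the contribution $\operatorname{Tr}[\Pi_0 S^{T_B}]$ from the kernel projector $\Pi_0$ is automatically non-negative and the bound $\operatorname{Tr}[S_{AB}]\geq\|\rho^{T_B}\|_1$ survives unchanged; no appeal to the dual is needed in the general case either.
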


As a consequence of the finding above, the exact entanglement cost of preparing the average of two states $\rho_1$ and $\rho_2$ can sometimes be strictly larger than the average exact entanglement cost of preparing each state separately. Convexity is sometimes associated with the loss of entanglement under the discarding of classical information. However, this is only sensible for entanglement measures that obey what is known as the ``flags'' property \cite{MH05,Plenio2005b,Horodecki2009a}. Note that the $\kappa$-entanglement does not possess this property (if it were to, then it would be convex). We stress here that the $\kappa$-entanglement is monotone under LOCC, as indicated in \eqref{eq:monotone-C-PPT-P}, which implies that it does not increase when Alice and Bob discard local registers in their possession. Since local registers of course can be classical registers, we conclude that $\kappa$-entanglement does not increase under the loss of classical information in this sense. The lack of convexity for $\kappa$-entanglement simply means that in some cases, the cost of preparing the average of two states can exceed the average cost of preparing the individual states. See \cite{Plenio2005b} for further discussions about this point.

Monogamy of an entanglement measure $E$\ is as follows~\cite{T04}:%
\begin{equation}
E(\rho_{A:BC})\geq E(\rho_{A:B})+E(\rho_{A:C}),
\end{equation}
where $\rho_{ABC}$ is a tripartite state. It captures the idea that
the sum of the entanglement that Alice shares individually with Bob and
Charlie when they are all in separate laboratories cannot exceed the
entanglement that she has with them when Bob and Charlie are in the same
laboratory. Here, by utilizing $\kappa$-entanglement, we show that $E_{\operatorname{PPT}}(\psi_{AB}) + E_{\operatorname{PPT}}(\psi_{AC}) > E_{\operatorname{PPT}}(\psi_{A(BC)})$ for the tripartite state ${\ket\psi}_{ABC}=\frac{1}{2}(\ket{000}_{ABC}+\ket{011}_{ABC}+\sqrt 2 \ket {110}_{ABC})$. Thus, we have the following:
\begin{proposition}[No monogamy]
The PPT exact entanglement cost is not generally monogamous.
\end{proposition}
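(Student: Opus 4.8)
The plan is to prove the stronger quantitative claim stated just above the proposition, namely that the specific tripartite pure state $\ket{\psi}_{ABC}$ violates the monogamy inequality in the ``polygamous'' direction, $\EPPT(\psi_{AB}) + \EPPT(\psi_{AC}) > \EPPT(\psi_{A(BC)})$. Since a single counterexample suffices, the first move is to invoke Theorem~\ref{thm:PPT cost} to replace every occurrence of the asymptotic, regularized cost $\EPPT$ by the single-letter SDP quantity $\EK$. This is the essential simplification: without it one would have to control a limit of one-shot costs, whereas each of the three quantities $\EK(\psi_{AB})$, $\EK(\psi_{AC})$, and $\EK(\psi_{A(BC)})$ is then a finite-dimensional optimization that can be evaluated in closed form.

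For the cut $A:(BC)$ the state is pure. First I would rewrite $\ket\psi_{ABC}$ in the bipartite form $\frac{1}{\sqrt2}\ket0_A \frac{\ket{00}+\ket{11}}{\sqrt2} + \frac{1}{\sqrt2}\ket1_A\ket{10}_{BC}$ and read off that its Schmidt decomposition across $A|BC$ has exactly two equal Schmidt coefficients $1/\sqrt2$. Hence $\psi_{A(BC)}$ is, up to local unitaries, the Schmidt-rank-two maximally entangled state; since $\EK$ is invariant under local unitaries (they are reversible C-PPT-P maps, so the monotonicity~\eqref{eq:monotone-C-PPT-P} applies in both directions) and is normalized, I obtain $\EK(\psi_{A(BC)}) = \log_2 2 = 1$.

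The substance of the argument is evaluating $\EK$ on the two mixed two-qubit marginals $\psi_{AB}=\tr_C[\proj{\psi}]$ and $\psi_{AC}=\tr_B[\proj{\psi}]$. I would compute each marginal explicitly, form its partial transpose, and find that both $\psi_{AB}^{T_B}$ and $\psi_{AC}^{T_C}$ have spectrum $\{1/4,1/2,1/2,-1/4\}$, so each has trace norm $\|\cdot\|_1 = 3/2$. The lower bound $\EK(\rho)\ge \log_2\|\rho^{T_B}\|_1$ holds for any state directly from the SDP constraint: writing $Y=S^{T_B}$, the constraints $Y\ge \pm\rho^{T_B}$ force $Y \ge 0$ and $\tr S = \tr Y \ge \tr[P_+ Y P_+]+\tr[P_-YP_-]\ge \|\rho^{T_B}\|_1$, where $P_\pm$ project onto the positive and negative eigenspaces of $\rho^{T_B}$. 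For the matching upper bound I would exhibit the explicit feasible point $S=(|\rho^{T_B}|)^{T_B}$, which by construction satisfies $-S^{T_B}\le\rho^{T_B}\le S^{T_B}$ with $\tr S = \| \,|\rho^{T_B}|\, \|_1 = 3/2$; the one nontrivial check is that this $S$ is positive semidefinite, i.e.\ that $|\rho^{T_B}|$ itself has positive partial transpose. This is the main obstacle, because it is precisely the PPT constraint hidden in the definition of $\EK$ (the feature distinguishing $\EK$ from the logarithmic negativity) and it need not hold for a general state; here I would verify it directly by diagonalizing the single nontrivial $2\times2$ block of each partial transpose and confirming that the partial transpose of the resulting operator is block-positive. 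This yields $\EK(\psi_{AB}) = \EK(\psi_{AC}) = \log_2 \frac32$.

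Putting the pieces together gives $\EPPT(\psi_{AB}) + \EPPT(\psi_{AC}) = 2\log_2\frac32 = \log_2\frac94 > \log_2 2 = \EPPT(\psi_{A(BC)})$, since $9/4 > 2$. This contradicts the monogamy inequality and proves the proposition.
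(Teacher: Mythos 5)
Your proof is correct, and it follows essentially the same route as the paper's: the same counterexample state $\ket{\psi}_{ABC}$, the same reduction of $\EPPT$ to $\EK$ via Theorem~\ref{thm:PPT cost}, the value $\EK(\psi_{A(BC)})=1$ read off from the Schmidt decomposition across the cut $A|(BC)$, and the values $\EK(\psi_{AB})=\EK(\psi_{AC})=\log_2\frac{3}{2}$ obtained by showing that $\EK$ coincides with the logarithmic negativity on the two two-qubit marginals. The one place you genuinely diverge is in justifying that coincidence. The paper uses its Proposition~\ref{prop:connect-to-log-neg} (namely $\EK\geq E_N$ in general, with equality whenever $|\rho_{AB}^{T_B}|^{T_B}\geq 0$) together with the result of \cite{Ishizaka2004a} that \emph{every} two-qubit state satisfies $|\rho_{AB}^{T_B}|^{T_B}\geq 0$; you instead re-derive the lower bound $\EK(\rho)\geq\log_2\Vert\rho^{T_B}\Vert_1$ by a direct primal argument (the paper's proof of this step goes through the dual SDP) and verify the binegativity condition by hand for the two specific marginals. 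That verification does go through: the only nontrivial $2\times 2$ block of $|\psi_{AB}^{T_B}|^{T_B}$, supported on $\{\ket{00},\ket{11}\}$, has positive trace and determinant $1/9$, and the $\psi_{AC}$ computation is identical by symmetry of the spectra $\{1/4,1/2,1/2,-1/4\}$. Your version is more self-contained, avoiding reliance on the external two-qubit binegativity theorem, at the cost of a computation that must be redone per state; the paper's citation buys the statement for all two-qubit states at once, which it also reuses in its non-convexity proof.
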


In some literature on entanglement (see, e.g., \cite{Plenio2005b}), the properties of convexity and monogamy were thought to be essential features of entanglement, but the fact that $\kappa$-entanglement is neither convex nor monogamous, while having a clear-cut operational meaning, calls into question whether these
properties are really necessary for an entanglement measure. See~\cite{Plenio2005b,Gour2018monogamyof}\ for other discussions questioning the necessity of these two properties.

As another implication of our results, we find by example that exact PPT entanglement manipulation is irreversible. In particular, this example together with several classes of examples in Section \ref{sec:examples-states irreversible} of \cite{ECappendix} imply that $\EPPT$ is generally not equal to the logarithmic negativity $E_N$~\cite{Vidal2002,Plenio2005b}.  Consider the following rank-two state supported on the $3\times 3$ antisymmetric subspace \cite{Wang2016d}:
\begin{equation}
\rho^{v}_{AB}=\frac{1}{2}(\proj{v_1}_{AB}+\proj{v_2}_{AB})
\end{equation}
with 
\begin{align}
\ket {v_1}_{AB} & := (\ket {01}_{AB}-\ket{10}_{AB})/{\sqrt 2},\\
 \ket {v_2}_{AB} & := (\ket {02}_{AB}-\ket{20}_{AB})/{\sqrt 2}.
 \end{align}
 For the state $\rho^{v}_{AB}$, 
it holds that
\begin{multline}
E_N(\rho^{v}_{AB})
=\log_2\! \left(1+{1}/{\sqrt 2}\right)
 < \EPPT(\rho^v_{AB})=1 \\
 < \log_2 Z(\rho^v_{AB})= 
\log_2 \!\left(1+{13}/{4\sqrt 2}\right),
\end{multline}
where $\log_2 Z(\rho_{AB})$ is the previous upper bound on $\EPPT$ from \cite{Audenaert2003}. The strict inequalities above also imply that the previously best known lower and upper bounds from~\cite{Audenaert2003} are not tight. Since the logarithmic negativity is known to be an upper bound on PPT exact distillable entanglement \cite{HHH00a,Vidal2002}, we conclude that exact PPT entanglement manipulation is irreversible.

\textit{Conclusions.}---%
We have shown that the PPT exact entanglement cost is equal to the $\kappa$-entanglement, a single-letter, efficiently computable entanglement measure.
Our results constitute a significant development for entanglement theory, representing the first time that an entanglement measure has been proven to be not only efficiently computable but also to possess a direct information-theoretic meaning. Prior to our work, every other entanglement measure introduced previously possesses only one of these two properties, and thus they were either not accessible computationally or not information-theoretically meaningful. Our work closes this outstanding theoretical gap, because our entanglement measure can be calculated efficiently by semi-definite programming and it has an operational meaning as the cost of maximally entangled states needed to prepare a state. This unique feature improves our understanding of the fundamental structure and power of entanglement. 

Furthermore, we have shown that the $\kappa$-entanglement (or exact PPT entanglement cost) possesses properties such as additivity, monotonicity, faithfulness, normalization, non-convexity, and non-monogamy. These results give insight into the structure of quantum entanglement that have not previously been observed in a general operational setting and bring a significant simplification to entanglement theory. 
In particular, most prior discussions about the structure and properties of entanglement are based on entanglement measures. However, none of these measures, with the exception of the regularized relative entropy of entanglement, possesses a direct operational meaning. Thus, the connection made by Theorem~\ref{thm:PPT cost} allows for the study of the structure of entanglement via an entanglement measure possessing a direct operational meaning. Given that $E_\kappa = \EPPT$ is neither convex nor monogamous, this raises questions of whether these properties should really be required or necessary for measures of entanglement, in contrast to the discussions put forward in \cite{T04,Horodecki2009a} based on intuition.

Our results may also shed light on the open question of whether distillable entanglement is convex \cite{Shor2001}, but this remains the topic of future work. In the multi-partite setting, it is known that a version of distillable entanglement is not convex \cite{Shor2003}.

\textbf{Acknowledgements}:\ We are grateful to Renato Renner and Andreas
Winter for insightful discussions. Part of this work was done when XW was at the University of Maryland. MMW acknowledges support from the National Science Foundation
under Award Nos.~1350397 and 1907615.


%

\clearpage
\appendix
\onecolumngrid
\begin{center}
\vspace*{.5\baselineskip}
{\textbf{\large Supplemental Material: \\[3pt] The cost of quantum entanglement simplified}}\\[1pt] \quad \\
\end{center}

%

\renewcommand{\theequation}{S\arabic{equation}}
\renewcommand{\thetheorem}{S\arabic{theorem}}
\setcounter{equation}{0}
\setcounter{figure}{0}
\setcounter{table}{0}
\setcounter{section}{0}

\large

This supplementary material provides a more detailed analysis and proofs of the results stated in the main text. On occasion, we reiterate some of the steps in the main text in order to make the supplementary material more clear and self contained.

 \section{One-shot PPT exact entanglement cost}

Let $\Omega$ represent a set of free channels. Examples of interest include the set of \text{LOCC} channels or the set of completely-\text{PPT}-preserving channels. 
The one-shot exact entanglement cost of a bipartite state $\rho_{AB}$, under the $\Omega$ channels, is defined as
\begin{align}
E^{(1)}_{\Omega}(\rho_{AB})= \inf_{d\in \mathbb{N}, \Lambda\in \Omega}\left\{\log_2 d:   \rho_{AB}=\Lambda_{\hat{A}\hat{B}\to AB} (\Phi^{d}_{\hat{A}\hat{B}})\right\},
\end{align}
where $\mathbb{N}:= \{1,2,3,\ldots\}$ and $\Phi^{d}_{\hat{A}\hat{B}}=[1/d]\sum_{i,j=1}^{d}\ketbra{ii}{jj}_{\hat{A}\hat{B}}$ represents the standard maximally entangled state of Schmidt rank~$d$. The exact entanglement cost of a bipartite state $\rho_{AB}$, under the $\Omega$ channels, is defined as
\begin{align}
E_{\Omega}(\rho_{AB})= \limsup_{n \to \infty} \frac{1}{n}E^{(1)}_{\Omega}(\rho_{AB}^{\ox n}).
\label{eq:asympt-cost}
\end{align}
The exact entanglement cost under LOCC channels was previously considered in \cite{N99,TH00,H06book,YC18}, while the exact entanglement cost under completely-PPT-preserving channels was considered in \cite{Audenaert2003,Matthews2008}.

The reason for the appearance of the limit superior in \eqref{eq:asympt-cost} is as follows. The definition in \eqref{eq:asympt-cost} involves the sequence $\{E_n\}_n$ of non-negative reals
$E_n := \frac{1}
{n}E_{\operatorname{PPT}}^{(1)}(\rho_{AB}^{\otimes n}
)$. The limit of this sequence does not necessarily exist \textit{a priori}, but the limit inferior and limit superior always exist for any sequence. Thus, we should decide which  of these two possibilities is appropriate for the entanglement cost problem. It is sensible that the asymptotic cost should be sufficient to  cover the entanglement needs of all but finitely many terms in the sequence. Given this requirement, the limit superior is the appropriate limiting notion here. However, as we shall see in what follows, when the set $\Omega$ is the set of completely-\text{PPT}-preserving channels, the limit superior and limit inferior are actually equal and given by the $\kappa$-entanglement (defined in the main text and later on in Definition~\ref{def:kappa-ent}).

In \cite{Audenaert2003}, the following bounds were given for $\EPPT$:
\begin{align}
E_N(\rho_{AB})\le	\EPPT(\rho_{AB})\le \log_2  Z(\rho_{AB}),
\label{eq:ape-bnds}
\end{align}
the lower bound being the logarithmic negativity \cite{Vidal2002,Plenio2005b}, defined as  \begin{equation}
  E_{N}(\rho_{AB})\coloneqq\log_2\left \Vert \rho_{AB}^{T_{B}}\right\Vert_1,
  \label{eq:log-neg}
\end{equation}
and the upper bound defined in terms of
\begin{equation}
Z(\rho_{AB})\coloneqq \left\|\rho_{AB}^{T_B}\right\|_1 +\dim(\rho_{AB})\max\{0,-\lambda_{\min}(|\rho_{AB}^{T_B}|^{T_B})\}.
\end{equation}
Due to the presence of the dimension factor $\dim(\rho_{AB})$, the upper bound in \eqref{eq:ape-bnds} clearly only applies in the case that $\rho_{AB}$ is finite-dimensional.

In what follows, we first recast
$E^{(1)}_{\PPT}(\rho_{AB})$ as an optimization problem, by building on previous developments in \cite{Audenaert2003,Matthews2008}. After that, we bound
$E^{(1)}_{\PPT}(\rho_{AB})$
in terms of $E_\kappa$, by observing that $E_\kappa$ is a relaxation of the optimization problem for $E^{(1)}_{\PPT}(\rho_{AB})$. We then finally prove that $E_{\PPT}(\rho_{AB})$ is equal to $E_\kappa$. 

\begin{proposition}
\label{prop:exact-cost-states}
Let $\rho_{AB}$ be a bipartite state acting on a separable Hilbert space. Then the one-shot exact PPT-entanglement cost $E_{\operatorname{PPT}}^{(1)}%
(\rho_{AB})$ is given by the following
optimization:%
\begin{equation}
E_{\operatorname{PPT}}^{(1)}(\rho_{AB})=\inf_{m\in\mathbb{N}}\left\{  \log_{2}m:-\left(
m-1\right)  G_{AB}^{T_{B}}\leq\rho_{AB}^{T_{B}}\leq\left(  m+1\right)
G_{AB}^{T_{B}},\ G_{AB}\geq0,\ \operatorname{Tr}G_{AB}=1\right\}  .
\label{eq:op-quantity-PPT-cost}
\end{equation}
\end{proposition}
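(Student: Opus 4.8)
The plan is to prove the stated identity by establishing two matching inequalities, realizing in full the achievability/optimality template sketched in the main text.

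For the achievability bound $E^{(1)}_{\operatorname{PPT}}(\rho_{AB})\le\log_2 m$, I would start from an arbitrary feasible pair $(m,G_{AB})$ --- so $G_{AB}\ge0$, $\operatorname{Tr}[G_{AB}]=1$, and $-(m-1)G_{AB}^{T_B}\le\rho_{AB}^{T_B}\le(m+1)G_{AB}^{T_B}$ --- and form the measure--prepare map
\begin{equation}
\Lambda_{\hat A\hat B\to AB}(\omega):=\rho_{AB}\operatorname{Tr}[\Phi^m_{\hat A\hat B}\,\omega]+G_{AB}\operatorname{Tr}[(I_{\hat A\hat B}-\Phi^m_{\hat A\hat B})\,\omega].
\end{equation}
Because $0\le\Phi^m_{\hat A\hat B}\le I$ and $\rho_{AB},G_{AB}$ are states, this is immediately completely positive and trace preserving, and $\operatorname{Tr}[(\Phi^m)^2]=1$ gives $\Lambda(\Phi^m_{\hat A\hat B})=\rho_{AB}$. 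The substantive point is that $\Lambda\in\operatorname{PPT}$. I would verify this by computing the Choi operator of $T_B\circ\Lambda\circ T_{\hat B}$, using self-adjointness of the transpose under the Hilbert--Schmidt pairing together with the identity $(\Phi^m_{\hat A\hat B})^{T_{\hat B}}=\tfrac1m F_{\hat A\hat B}$, where $F_{\hat A\hat B}$ is the swap with spectral decomposition $F=P_+-P_-$ into symmetric and antisymmetric projectors. The Choi operator then block-diagonalizes over the orthogonal subspaces $\operatorname{range}(P_+)$ and $\operatorname{range}(P_-)$, and its positivity splits into exactly the two inequalities $\rho_{AB}^{T_B}+(m-1)G_{AB}^{T_B}\ge0$ and $(m+1)G_{AB}^{T_B}-\rho_{AB}^{T_B}\ge0$, i.e.\ the feasibility constraint. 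Hence the constructed $\Lambda$ is an admissible C-PPT-P channel and $m$ is operationally feasible.

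For the optimality bound $E^{(1)}_{\operatorname{PPT}}(\rho_{AB})\ge\log_2 m$, I would take any $\Lambda\in\operatorname{PPT}$ with $\Lambda(\Phi^m_{\hat A\hat B})=\rho_{AB}$ and symmetrize it using the invariance $(U\otimes\overline U)\Phi^m(U\otimes\overline U)^\dagger=\Phi^m$. Precomposing with the twirl $\mathcal T(\omega):=\int dU\,(U\otimes\overline U)\,\omega\,(U\otimes\overline U)^\dagger$ over the finite-dimensional input $\hat A\hat B$ yields $\Lambda\circ\mathcal T$, which is still C-PPT-P (the twirl is a mixture of local unitary conjugations, and composition preserves the complete positivity of $T_B\circ(\cdot)\circ T_{\hat B}$ since $T_{\hat B}\circ T_{\hat B}=\mathrm{id}$) and still sends $\Phi^m\mapsto\rho_{AB}$ by twirl-invariance. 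Since $\mathcal T$ projects onto the two-dimensional span of $\{\Phi^m_{\hat A\hat B},I_{\hat A\hat B}-\Phi^m_{\hat A\hat B}\}$, the symmetrized channel has precisely the measure--prepare form above with the genuine state $G_{AB}:=\Lambda\!\big(\tfrac{I_{\hat A\hat B}-\Phi^m_{\hat A\hat B}}{m^2-1}\big)$. Running the same Choi-operator equivalence in reverse, the C-PPT-P property of $\Lambda\circ\mathcal T$ forces the operator sandwich inequality on this $G_{AB}$, so $(m,G_{AB})$ is feasible for the optimization. Combining the two bounds yields the claimed equality.

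I expect the Choi-operator computation relating the C-PPT-P condition to the operator sandwich inequality to be the technical heart and the main obstacle: it hinges on correctly handling the partial-transpose identity $(\Phi^m)^{T_{\hat B}}=\tfrac1m F$ and the symmetric/antisymmetric splitting $F=P_+-P_-$, which is what causes the single operator inequality to factor into the two one-sided bounds with coefficients $m\mp1$. A secondary subtlety, relevant because $\rho_{AB}$ may act on an infinite-dimensional separable Hilbert space, is justifying the twirling reduction and the partial transpose on the output system $B$; since the twirl acts only on the finite-dimensional input $\hat A\hat B$, I expect the symmetrization to go through verbatim, with the output-side transpose handled by the standard approximation arguments already invoked elsewhere in this work.
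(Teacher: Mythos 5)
Your proposal is correct and follows essentially the same route as the paper's proof: the identical measure-and-prepare construction with $(\Phi^m_{\hat A\hat B})^{T_{\hat B}}=F_{\hat A\hat B}/m$ and the symmetric/antisymmetric splitting for achievability, and the identical $U\otimes\overline{U}$ twirling reduction with $G_{AB}=\Lambda\bigl((\1_{\hat A\hat B}-\Phi^m_{\hat A\hat B})/(m^2-1)\bigr)$ for optimality. The only cosmetic difference is that you verify the C-PPT-P condition via positivity of the Choi operator of $T_B\circ\Lambda\circ T_{\hat B}$, whereas the paper applies this map to an arbitrary positive semi-definite input on reference systems; these are equivalent formulations of complete positivity.
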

\begin{proof}
The achievability part features a construction of a completely-PPT-preserving
channel $\mathcal{P}_{\hat{A}\hat{B}\rightarrow AB}$ such that $\mathcal{P}%
_{\hat{A}\hat{B}\rightarrow AB}(\Phi_{\hat{A}\hat{B}}^{m})=\rho_{AB}$, and
then the converse part demonstrates that the constructed channel is
essentially the only form that is needed to consider for the one-shot exact
PPT-entanglement cost task. The achievability part directly employs some insights of \cite{Audenaert2003}, while the converse part directly employs insights of \cite{Matthews2008}. In what follows, we give a proof for the sake of completeness.

Let $m\geq1$ be a positive integer and $G_{AB}$ a density operator such that
the following inequalities hold%
\begin{equation}
-\left(  m-1\right)  G_{AB}^{T_{B}}\leq\rho_{AB}^{T_{B}}\leq\left(
m+1\right)  G_{AB}^{T_{B}}.\label{eq:condition-for-PPT-preserve}%
\end{equation}
Note that the inequalities in \eqref{eq:condition-for-PPT-preserve} imply the following inequality
\begin{equation}
-\left(  m-1\right)  G_{AB}^{T_{B}} \leq\left(
m+1\right)  G_{AB}^{T_{B}} 
\quad \Longleftrightarrow \quad 0  \leq \left(
m+1\right)  G_{AB}^{T_{B}} + \left(  m-1\right)  G_{AB}^{T_{B}} = m G_{AB}^{T_{B}},
\end{equation}
which in turn implies that $G_{AB}^{T_{B}} \geq 0$, so that $G_{AB}$ is a PPT state.

Then we take the completely-PPT-preserving channel $\mathcal{P}_{\hat{A}%
\hat{B}\rightarrow AB}$ to be as follows:%
\begin{equation}
\mathcal{P}_{\hat{A}\hat{B}\rightarrow AB}(X_{\hat{A}\hat{B}})=\rho
_{AB}\operatorname{Tr}[\Phi_{\hat{A}\hat{B}}^{m}X_{\hat{A}\hat{B}}%
]+G_{AB}\operatorname{Tr}[(\1_{\hat{A}\hat{B}}-\Phi_{\hat{A}\hat{B}}%
^{m})X_{\hat{A}\hat{B}}].
\end{equation}
The action of $\mathcal{P}_{\hat{A}\hat{B}\rightarrow AB}$ can be understood
as a measure-prepare channel (and is thus a channel):\ first perform the
measurement $\{\Phi_{\hat{A}\hat{B}}^{m},\1_{\hat{A}\hat{B}}-\Phi_{\hat{A}%
\hat{B}}^{m}\}$, and if the outcome $\Phi_{\hat{A}\hat{B}}^{m}$ occurs,
prepare the state $\rho_{AB}$, and otherwise, prepare the state $G_{AB}$. To
see that the channel $\mathcal{P}_{\hat{A}\hat{B}\rightarrow AB}$ is a
completely-PPT-preserving channel, we now verify that the map $T_{B}%
\circ\mathcal{P}_{\hat{A}\hat{B}\rightarrow AB}\circ T_{\hat{B}}$ is
completely positive. Let $Y_{R_{A}\hat{A}\hat{B}R_{B}}$ be a positive
semi-definite operator with $R_{A}$ isomorphic to $\hat{A}$ and $R_{B}$
isomorphic to~$\hat{B}$. Then consider that%
\begin{align}
&  (T_{B}\circ\mathcal{P}_{\hat{A}\hat{B}\rightarrow AB}\circ T_{\hat{B}%
})(Y_{R_{A}\hat{A}\hat{B}R_{B}})\nonumber\\
&  =\rho_{AB}^{T_{B}}\operatorname{Tr}_{\hat{A}\hat{B}}[\Phi_{\hat{A}\hat{B}%
}^{m}Y_{R_{A}\hat{A}\hat{B}R_{B}}^{T_{\hat{B}}}]+G_{AB}^{T_{B}}%
\operatorname{Tr}_{\hat{A}\hat{B}}[(\1_{\hat{A}\hat{B}}-\Phi_{\hat{A}\hat{B}%
}^{m})Y_{R_{A}\hat{A}\hat{B}R_{B}}^{T_{\hat{B}}}]\\
&  =\rho_{AB}^{T_{B}}\operatorname{Tr}_{\hat{A}\hat{B}}[(\Phi_{\hat{A}\hat{B}%
}^{m})^{T_{\hat{B}}}Y_{R_{A}\hat{A}\hat{B}R_{B}}]+G_{AB}^{T_{B}}%
\operatorname{Tr}_{\hat{A}\hat{B}}[(\1_{\hat{A}\hat{B}}-\Phi_{\hat{A}\hat{B}%
}^{m})^{T_{\hat{B}}}Y_{R_{A}\hat{A}\hat{B}R_{B}}]\\
&  =\frac{\rho_{AB}^{T_{B}}}{m}\operatorname{Tr}_{\hat{A}\hat{B}}[F_{\hat
{A}\hat{B}}Y_{R_{A}\hat{A}\hat{B}R_{B}}]+G_{AB}^{T_{B}}\operatorname{Tr}%
_{\hat{A}\hat{B}}[(\1_{\hat{A}\hat{B}}-F_{\hat{A}\hat{B}}/m)Y_{R_{A}\hat{A}%
\hat{B}R_{B}}]\\
&  =\frac{\rho_{AB}^{T_{B}}}{m}\operatorname{Tr}_{\hat{A}\hat{B}}[F_{\hat
{A}\hat{B}}Y_{R_{A}\hat{A}\hat{B}R_{B}}]+\frac{G_{AB}^{T_{B}}}{m}%
\operatorname{Tr}_{\hat{A}\hat{B}}[(m\1_{\hat{A}\hat{B}}-F_{\hat{A}\hat{B}%
})Y_{R_{A}\hat{A}\hat{B}R_{B}}]\\
&  =\frac{\rho_{AB}^{T_{B}}}{m}\operatorname{Tr}_{\hat{A}\hat{B}}[(\Pi
_{\hat{A}\hat{B}}^{\mathcal{S}}-\Pi_{\hat{A}\hat{B}}^{\mathcal{A}}%
)Y_{R_{A}\hat{A}\hat{B}R_{B}}]\nonumber\\
&  \qquad+\frac{G_{AB}^{T_{B}}}{m}\operatorname{Tr}_{\hat{A}\hat{B}}%
[(m(\Pi_{\hat{A}\hat{B}}^{\mathcal{S}}+\Pi_{\hat{A}\hat{B}}^{\mathcal{A}%
})-(\Pi_{\hat{A}\hat{B}}^{\mathcal{S}}-\Pi_{\hat{A}\hat{B}}^{\mathcal{A}%
}))Y_{R_{A}\hat{A}\hat{B}R_{B}}]\\
&  =\frac{1}{m}\left[  \rho_{AB}^{T_{B}}+\left(  m-1\right)  G_{AB}^{T_{B}%
}\right]  \operatorname{Tr}_{\hat{A}\hat{B}}[\Pi_{\hat{A}\hat{B}}%
^{\mathcal{S}}Y_{R_{A}\hat{A}\hat{B}R_{B}}]\nonumber\\
&  \qquad+\frac{1}{m}\left[  \left(  m+1\right)  G_{AB}^{T_{B}}-\rho
_{AB}^{T_{B}}\right]  \operatorname{Tr}_{\hat{A}\hat{B}}[\Pi_{\hat{A}\hat{B}%
}^{\mathcal{A}}Y_{R_{A}\hat{A}\hat{B}R_{B}}].
\end{align}
The third equality follows because the partial transpose of $\Phi_{\hat{A}%
\hat{B}}^{m}$ is proportional to the unitary flip or swap operator $F_{\hat{A}\hat
{B}}$. The fifth equality follows by recalling the definition of the
projections onto the symmetric and antisymmetric subspaces respectively as%
\begin{equation}
\Pi_{\hat{A}\hat{B}}^{\mathcal{S}}=\frac{\1_{\hat{A}\hat{B}}+F_{\hat{A}\hat{B}%
}}{2},\qquad\Pi_{\hat{A}\hat{B}}^{\mathcal{A}}=\frac{\1_{\hat{A}\hat{B}%
}-F_{\hat{A}\hat{B}}}{2}.
\end{equation}
As a consequence of the condition in \eqref{eq:condition-for-PPT-preserve}, it
follows that $T_{B}\circ\mathcal{P}_{\hat{A}\hat{B}\rightarrow AB}\circ
T_{\hat{B}}$ is completely positive, so that $\mathcal{P}_{\hat{A}\hat
{B}\rightarrow AB}$ is a completely-PPT-preserving channel as claimed. In
fact, we can see that $T_{B}\circ\mathcal{P}_{\hat{A}\hat{B}\rightarrow
AB}\circ T_{\hat{B}}$ is a measure-prepare channel:\ first perform the
measurement $\{\Pi_{\hat{A}\hat{B}}^{\mathcal{S}},\Pi_{\hat{A}\hat{B}%
}^{\mathcal{A}}\}$ and if the outcome $\Pi_{\hat{A}\hat{B}}^{\mathcal{S}}$
occurs, prepare the state $\frac{1}{m}[\rho_{AB}^{T_{B}}+\left(  m-1\right)
G_{AB}^{T_{B}}]$, and otherwise, prepare the state $\frac{1}{m}[\left(
m+1\right)  G_{AB}^{T_{B}}-\rho_{AB}^{T_{B}}]$. Thus, it follows that
$\mathcal{P}_{\hat{A}\hat{B}\rightarrow AB}$ accomplishes the one-shot exact
PPT-entanglement cost task, in the sense that%
\begin{equation}
\mathcal{P}_{\hat{A}\hat{B}\rightarrow AB}(\Phi_{\hat{A}\hat{B}}^{m}%
)=\rho_{AB}.
\end{equation}
By taking an infimum over all $m$ and density operators $G_{AB}$ such that
\eqref{eq:condition-for-PPT-preserve} holds, it follows that the quantity on
the right-hand side of \eqref{eq:op-quantity-PPT-cost} is greater than
or equal to $E_{\operatorname{PPT}}^{(1)}(\rho_{AB})$.

Now we prove the opposite inequality. Let $\mathcal{P}_{\hat{A}\hat
{B}\rightarrow AB}$ denote an arbitrary completely-PPT-preserving channel such
that%
\begin{equation}
\mathcal{P}_{\hat{A}\hat{B}\rightarrow AB}(\Phi_{\hat{A}\hat{B}}^{m}%
)=\rho_{AB}.\label{eq:ppt-trans-to-rho}%
\end{equation}
Let $\mathcal{T}_{\hat{A}\hat{B}}$ denote the following isotropic twirling
channel \cite{W89,Horodecki99,Watrous2011b}:%
\begin{align}
\mathcal{T}_{\hat{A}\hat{B}}(X_{\hat{A}\hat{B}}) &  =\int dU\ (U_{\hat{A}%
}\otimes\overline{U}_{\hat{B}})X_{\hat{A}\hat{B}}(U_{\hat{A}}\otimes
\overline{U}_{\hat{B}})^{\dag}\\
&  =\Phi_{\hat{A}\hat{B}}^{m}\operatorname{Tr}[\Phi_{\hat{A}\hat{B}}%
^{m}X_{\hat{A}\hat{B}}]+\frac{\1_{\hat{A}\hat{B}}-\Phi_{\hat{A}\hat{B}}^{m}%
}{m^{2}-1}\operatorname{Tr}[(\1_{\hat{A}\hat{B}}-\Phi_{\hat{A}\hat{B}}%
^{m})X_{\hat{A}\hat{B}}].
\end{align}
The channel $\mathcal{T}_{\hat{A}\hat{B}}$ is an LOCC channel, and thus is
completely-PPT-preserving. Furthermore, due to the fact that $\mathcal{T}%
_{\hat{A}\hat{B}}(\Phi_{\hat{A}\hat{B}}^{m})=\Phi_{\hat{A}\hat{B}}^{m}$, it
follows that%
\begin{equation}
(\mathcal{P}_{\hat{A}\hat{B}\rightarrow AB}\circ\mathcal{T}_{\hat{A}\hat{B}%
})(\Phi_{\hat{A}\hat{B}}^{m})=\rho_{AB}.
\end{equation}
Thus, for any completely-PPT-preserving channel $\mathcal{P}_{\hat{A}\hat
{B}\rightarrow AB}$ such that \eqref{eq:ppt-trans-to-rho} holds, there exists
another channel $\mathcal{P}_{\hat{A}\hat{B}\rightarrow AB}^{\prime
}\coloneqq\mathcal{P}_{\hat{A}\hat{B}\rightarrow AB}\circ\mathcal{T}_{\hat{A}\hat{B}%
}$ achieving the same performance, and so it suffices to focus on the channel
$\mathcal{P}_{\hat{A}\hat{B}\rightarrow AB}^{\prime}$ in order to establish an
expression for the one-shot exact PPT-entanglement cost. Then, consider that,
for any input state $\tau_{\hat{A}\hat{B}}$, we have that%
\begin{align}
&  \mathcal{P}_{\hat{A}\hat{B}\rightarrow AB}^{\prime}(\tau_{\hat{A}\hat{B}%
})\nonumber\\
&  =\mathcal{P}_{\hat{A}\hat{B}\rightarrow AB}\!\left(  \Phi_{\hat{A}\hat{B}%
}^{m}\operatorname{Tr}[\Phi_{\hat{A}\hat{B}}^{m}\tau_{\hat{A}\hat{B}}%
]+\frac{\1_{\hat{A}\hat{B}}-\Phi_{\hat{A}\hat{B}}^{m}}{m^{2}-1}%
\operatorname{Tr}[(\1_{\hat{A}\hat{B}}-\Phi_{\hat{A}\hat{B}}^{m})\tau_{\hat
{A}\hat{B}}]\right)  \\
&  =\mathcal{P}_{\hat{A}\hat{B}\rightarrow AB}(\Phi_{\hat{A}\hat{B}}%
^{m})\operatorname{Tr}[\Phi_{\hat{A}\hat{B}}^{m}\tau_{\hat{A}\hat{B}%
}]+\mathcal{P}_{\hat{A}\hat{B}\rightarrow AB}\!\left(  \frac{\1_{\hat{A}\hat{B}%
}-\Phi_{\hat{A}\hat{B}}^{m}}{m^{2}-1}\right)  \operatorname{Tr}[(\1_{\hat
{A}\hat{B}}-\Phi_{\hat{A}\hat{B}}^{m})\tau_{\hat{A}\hat{B}}]\\
&  =\rho_{AB}\operatorname{Tr}[\Phi_{\hat{A}\hat{B}}^{m}\tau_{\hat{A}\hat{B}%
}]+G_{AB}\operatorname{Tr}[(\1_{\hat{A}\hat{B}}-\Phi_{\hat{A}\hat{B}}^{m}%
)\tau_{\hat{A}\hat{B}}],
\end{align}
where we have set%
\begin{equation}
G_{AB}=\mathcal{P}_{\hat{A}\hat{B}\rightarrow AB}\!\left(  \frac{\1_{\hat{A}%
\hat{B}}-\Phi_{\hat{A}\hat{B}}^{m}}{m^{2}-1}\right)
.\label{eq:G_AB-for-converse-part}%
\end{equation}
In order for $\mathcal{P}_{\hat{A}\hat{B}\rightarrow AB}^{\prime}$ to be
completely-PPT-preserving, it is necessary that $T_{B}\circ\mathcal{P}%
_{\hat{A}\hat{B}\rightarrow AB}\circ T_{\hat{B}}$ is completely positive.
Going through the same calculations as above, we see that it is necessary for
the following operator to be positive semi-definite for an arbitrary positive
semi-definite $Y_{R_{A}\hat{A}\hat{B}R_{B}}$:%
\begin{equation}
\frac{1}{m}\left(  \left[  \rho_{AB}^{T_{B}}+\left(  m-1\right)  G_{AB}%
^{T_{B}}\right]  \operatorname{Tr}_{\hat{A}\hat{B}}[\Pi_{\hat{A}\hat{B}%
}^{\mathcal{S}}Y_{R_{A}\hat{A}\hat{B}R_{B}}]+\left[  \left(  m+1\right)
G_{AB}^{T_{B}}-\rho_{AB}^{T_{B}}\right]  \operatorname{Tr}_{\hat{A}\hat{B}%
}[\Pi_{\hat{A}\hat{B}}^{\mathcal{A}}Y_{R_{A}\hat{A}\hat{B}R_{B}}]\right)  .
\end{equation}
However, since $\Pi_{\hat{A}\hat{B}}^{\mathcal{S}}$ and $\Pi_{\hat{A}\hat{B}%
}^{\mathcal{A}}$ project onto orthogonal subspaces, this is possible only if
the condition in \eqref{eq:condition-for-PPT-preserve} holds for $G_{AB}$
given in \eqref{eq:G_AB-for-converse-part}. Thus, it follows that the quantity
on the right-hand side of \eqref{eq:op-quantity-PPT-cost}\ is less than
or equal to $E_{\operatorname{PPT}}^{(1)}(\rho_{AB})$.
\end{proof}

 \section{$\kappa$-entanglement bounds the one-shot PPT exact entanglement cost}
 

\subsection{$\kappa$-entanglement measure}
Let us first recall the definition of $\kappa$-entanglement.
\begin{definition}[$\kappa$-entanglement measure]
\label{def:kappa-ent}
Let $\rho_{AB}$ be a bipartite state acting on a separable Hilbert space.  The $\kappa$-entanglement measure is defined as follows:
\begin{equation}
\label{eq:a prime}
E_\kappa(\rho_{AB})\coloneqq \inf \{ \log_2 \tr [S_{AB}] : -S_{AB}^{T_B}\le\rho_{AB}^{T_B}\le S_{AB}^{T_B}, \, S_{AB}\ge 0 \}.
\end{equation}
\end{definition}

It is obvious from the definition of $\EPPTone(\rho_{AB})$ that it is equal to zero if $\rho_{AB}$ is a PPT state. As we show later on in Proposition~\ref{prop:faithfulness}, the $\kappa$-entanglement $E_\kappa(\rho_{AB})$ of a state $\rho_{AB}$ is non-negative and equal to zero if and only if $\rho_{AB}$ is a PPT state. Thus, it follows that $\EPPTone(\rho_{AB}) = E_\kappa(\rho_{AB}) = 0 $ if $\rho_{AB}$ is a PPT state.

The following proposition is helpful in estimating $\EPPTone(\rho_{AB})$ in the case that $\rho_{AB}$ is not a PPT state. Due to the aforementioned fact that $E_\kappa(\rho_{AB}) \geq 0$, the lower bound is meaningful even in the trivial case in which $\rho_{AB}$ is a PPT state, in the sense that the lower bound is equal to $-\infty$.

\begin{proposition}
\label{lemma: one-shot}
Let $\rho_{AB}$ be a bipartite state acting on a separable Hilbert space. Then
	\begin{align}
	\log_2 (2^{E_\kappa(\rho_{AB})}-1)\le	\EPPTone(\rho_{AB})\le \log_2 (2^{E_\kappa(\rho_{AB})}+2).
	\label{eq:bounds-for-one-shot-e-kap}
	\end{align}
\end{proposition}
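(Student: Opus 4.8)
The plan is to exploit the fact that the $\kappa$-entanglement from Definition~\ref{def:kappa-ent} is, up to integrality, the semidefinite relaxation of the optimization for $\EPPTone(\rho_{AB})$ given in Proposition~\ref{prop:exact-cost-states}: one drops the requirement that $m$ be an integer and folds the two coefficients $m-1$ and $m+1$ into a single unnormalized operator $S_{AB}$. Concretely, writing $t^\star := 2^{\EK(\rho_{AB})} = \inf\{\tr[S_{AB}] : -S_{AB}^{T_B}\le\rho_{AB}^{T_B}\le S_{AB}^{T_B},\ S_{AB}\ge 0\}$, I would prove the two inequalities by converting feasible points of one program into feasible points of the other.

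For the lower bound, I would start from an arbitrary feasible pair $(m,G_{AB})$ for the program in \eqref{eq:op-quantity-PPT-cost}. As already noted in the proof of Proposition~\ref{prop:exact-cost-states}, the constraint $-(m-1)G_{AB}^{T_B}\le(m+1)G_{AB}^{T_B}$ forces $G_{AB}^{T_B}\ge 0$. I would then set $S_{AB}:=(m+1)G_{AB}$, so that $S_{AB}\ge 0$ and $\tr[S_{AB}]=m+1$. The upper operator inequality $\rho_{AB}^{T_B}\le(m+1)G_{AB}^{T_B}=S_{AB}^{T_B}$ is immediate, while the lower one follows from $\rho_{AB}^{T_B}\ge-(m-1)G_{AB}^{T_B}\ge-(m+1)G_{AB}^{T_B}=-S_{AB}^{T_B}$, using $G_{AB}^{T_B}\ge 0$. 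Hence $S_{AB}$ is feasible for $\EK$, giving $t^\star\le m+1$, i.e. $2^{\EK(\rho_{AB})}-1\le m$; taking the infimum over all feasible $m$ yields $\log_2(2^{\EK(\rho_{AB})}-1)\le\EPPTone(\rho_{AB})$.

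For the upper bound, I would run the reverse construction. Fix any $S_{AB}$ feasible for $\EK$ with $t:=\tr[S_{AB}]$, put $G_{AB}:=S_{AB}/t$ (so $G_{AB}\ge 0$, $\tr[G_{AB}]=1$, and $G_{AB}^{T_B}\ge 0$, since the feasibility constraints force $S_{AB}^{T_B}\ge 0$), and choose the integer $m:=\lceil t+1\rceil$. Feasibility of $S_{AB}$ reads $-t\,G_{AB}^{T_B}\le\rho_{AB}^{T_B}\le t\,G_{AB}^{T_B}$, and since $m-1\ge t$ and $m+1\ge t$ together with $G_{AB}^{T_B}\ge 0$, this implies $-(m-1)G_{AB}^{T_B}\le\rho_{AB}^{T_B}\le(m+1)G_{AB}^{T_B}$, so $(m,G_{AB})$ is feasible for \eqref{eq:op-quantity-PPT-cost}. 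Using $\lceil t+1\rceil< t+2$ then gives $\EPPTone(\rho_{AB})\le\log_2 m<\log_2(t+2)$, and letting $t\downarrow t^\star$ produces $\EPPTone(\rho_{AB})\le\log_2(2^{\EK(\rho_{AB})}+2)$.

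I expect the difficulties here to be bookkeeping rather than conceptual. The asymmetry between the coefficients $m-1$ and $m+1$ in \eqref{eq:op-quantity-PPT-cost} is precisely what yields the asymmetric $-1$ and $+2$ corrections: the binding requirement in the upper-bound direction is $m\ge t+1$, coming from the lower operator inequality, and rounding this up to an integer accounts for the extra $+1$, giving the bound $t+2$. The one point needing care is that the infimum defining $\EK$ need not be attained in the separable (possibly infinite-dimensional) setting, so I would argue with near-optimal feasible $S_{AB}$ and pass to the limit $t\downarrow t^\star$, verifying that $\lceil t+1\rceil$ remains below $t^\star+2$ as $t\to t^\star$, including the edge case in which $t^\star+1$ is itself an integer.
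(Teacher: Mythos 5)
Your proposal is correct and follows essentially the same route as the paper's proof: both directions convert feasible points between the one-shot program of Proposition~\ref{prop:exact-cost-states} and the SDP defining $E_\kappa$, with the substitution $S_{AB}=(m+1)G_{AB}$ (together with $G_{AB}^{T_B}\geq 0$) yielding the $-1$ correction, and the binding requirement $m-1\geq\operatorname{Tr}[S_{AB}]$ plus integer rounding yielding the $+2$. The only differences are cosmetic: you argue via explicit feasible-point conversions with a ceiling where the paper uses chains of relaxed infima with a floor, and your construction happens to bypass the paper's separate PPT/NPT case distinction in the upper bound, since $\operatorname{Tr}[S_{AB}]\geq 1$ automatically forces $m\geq 2$.
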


\begin{proof}
To begin with, note that the bounds  trivially hold in the case that $\rho_{AB}$ is a PPT state, with the lower bound being equal to $-\infty$. Thus, we can focus exclusively on the case of an NPT (non-PPT) state such that $E_\kappa(\rho_{AB}) > 0$ and $ \EPPTone(\rho_{AB})\geq \log_2(2) = 1 $. (This point is more important for the proof of the upper bound in \eqref{eq:bounds-for-one-shot-e-kap}.)

The proof of this lemma utilizes basic concepts from optimization theory. 
	Let us first prove the first inequality in \eqref{eq:bounds-for-one-shot-e-kap}. The key idea is to relax the bilinear optimization problem in \eqref{eq:op-quantity-PPT-cost} to a semi-definite optimization problem. When doing so, we find the following:
	\begin{align}
	&\EPPTone(\rho_{AB})\notag \\
	&=\inf \{\log_2 m: -(m-1) G_{AB}^{T_B}\le\rho_{AB}^{T_B}\le (m+1) G_{AB}^{T_B}, \, G_{AB}\ge0, \, \tr G_{AB}=1, m\in\mathbb{N}\}\\
		&\geq \inf \{\log_2 \mu: -(\mu-1) G_{AB}^{T_B}\le\rho_{AB}^{T_B}\le (\mu+1) G_{AB}^{T_B}, \, G_{AB}\ge0, \, \tr G_{AB}=1, \mu \geq 1 \}\\
	& \ge \inf \{\log_2 \mu: -(\mu+1) G_{AB}^{T_B}\le\rho_{AB}^{T_B}\le (\mu+1) G_{AB}^{T_B}, \, G_{AB}\ge0, \, \tr G_{AB}=1, \mu \geq 1\}\\
	& \geq  \inf \{\log_2 (\tr S_{AB}-1): -S_{AB}^{T_B}\le\rho_{AB}^{T_B}\le  S_{AB}^{T_B}, \, S_{AB}\ge0\}\\
	&=\log_2 (2^{E_\kappa(\rho_{AB})}-1).
	\end{align}
	The first inequality follows by removing the integer constraint, so that the optimization is then over a larger set.  The second inequality follows by relaxing the constraint  $-(\mu-1) G_{AB}^{T_B}\le\rho_{AB}^{T_B}$ to $-\left(\mu+1\right) G_{AB}^{T_B}\le\rho_{AB}^{T_B}$, which is possible because $\mu \geq 1$ and
	\begin{align}
	-(\mu-1) G_{AB}^{T_B}\le\rho_{AB}^{T_B}\le (\mu+1) G_{AB}^{T_B}
	& \quad \Rightarrow \quad   -(\mu-1) G_{AB}^{T_B}\le (\mu+1) G_{AB}^{T_B} \\
	& \quad  \Rightarrow \quad   G_{AB}^{T_B} \geq 0.
	\end{align}
	The third inequality follows by  setting $S_{AB}=(\mu+1)G_{AB}$ and noting that $\mu = \operatorname{Tr}[S_{AB}] -1$, as well as the fact that the set $\{S_{AB}\geq 0\}$ is larger than the set $\{(\mu+1)G_{AB} : G_{AB}\geq 0, \operatorname{Tr}[G_{AB}] = 1, \mu\geq 1 \}$. The last equality follows from the definition of $E_\kappa(\rho_{AB})$.
	
	The upper bound is a consequence of the following chain of inequalities:%
\begin{align}
& E_{\text{PPT}}^{(1)}(\rho_{AB}) \notag \\
& =\inf\left\{  \log_{2}m:-\left(
m-1\right)  G_{AB}^{T_{B}}\leq\rho_{AB}^{T_{B}}\leq\left(  m+1\right)
G_{AB}^{T_{B}},\ G_{AB}\geq0,\ \operatorname{Tr}G_{AB}=1,\ m\in\mathbb{N}, m\geq 2%
\right\}  \notag\\
& \leq\inf\left\{  \log_{2}m:-\left(  m-1\right)  G_{AB}^{T_{B}}\leq\rho
_{AB}^{T_{B}}\leq\left(  m-1\right)  G_{AB}^{T_{B}},\ G_{AB}\geq
0,\ \operatorname{Tr}G_{AB}=1,\ m\in\mathbb{N}, m \geq 2\right\}  \notag\\
& =\inf\left\{  \log_{2}\left\lfloor \mu\right\rfloor :-\left(  \left\lfloor
\mu\right\rfloor -1\right)  G_{AB}^{T_{B}}\leq\rho_{AB}^{T_{B}}\leq\left(
\left\lfloor \mu\right\rfloor -1\right)  G_{AB}^{T_{B}},\ G_{AB}%
\geq0,\ \operatorname{Tr}G_{AB}=1,\ \mu\geq 2\right\}  \notag\\
& \leq\inf\left\{  \log_{2}\left\lfloor \mu\right\rfloor :-\left(
\mu-2\right)  G_{AB}^{T_{B}}\leq\rho_{AB}^{T_{B}}\leq\left(  \mu-2\right)
G_{AB}^{T_{B}},\ G_{AB}\geq0,\ \operatorname{Tr}G_{AB}=1,\ \mu\geq 2\right\}
\notag \\
& \leq\inf\left\{  \log_{2}\mu:-\left(  \mu-2\right)  G_{AB}^{T_{B}}\leq
\rho_{AB}^{T_{B}}\leq\left(  \mu-2\right)  G_{AB}^{T_{B}},\ G_{AB}%
\geq0,\ \operatorname{Tr}G_{AB}=1,\ \mu\geq 2\right\}  \notag \\
& =\inf\left\{  \log_{2}(  \operatorname{Tr}[S_{AB}]+2)
:-S_{AB}^{T_{B}}\leq\rho_{AB}^{T_{B}}\leq S_{AB}^{T_{B}},\ S_{AB}%
\geq0\right\}  \notag \\
& =\inf\left\{  \log_{2}\!\left(  2^{E_{\kappa}(\rho_{AB})}+2\right)
:-S_{AB}^{T_{B}}\leq\rho_{AB}^{T_{B}}\leq S_{AB}^{T_{B}},\ S_{AB}%
\geq0\right\}  .
\end{align}
The first equality follows from the definition and the fact that we are focusing on an NPT state, for which it is not possible to have the optimal value be equal to $m=1$. For the first inequality,
consider that the operator inequality $\left(  m-1\right)  G_{AB}^{T_{B}}%
\leq\left(  m+1\right)  G_{AB}^{T_{B}}$ holds so that the following
implication holds%
\begin{equation}
\rho_{AB}^{T_{B}}\leq\left(  m-1\right)  G_{AB}^{T_{B}}\qquad\Longrightarrow
\qquad\rho_{AB}^{T_{B}}\leq\left(  m+1\right)  G_{AB}^{T_{B}}.
\end{equation}
Thus, the optimization in the first line is over a larger set than the
optimization in the second line. The second equality follows by allowing the
optimization to be over all $\mu\in\mathbb{R}$ such that $\mu\geq1$, but then
taking the floor in the constraints on $\mu$ and in the objective function.
The second inequality follows from reasoning similar to that given to justify
the first inequality. In this case, $-(\left\lfloor \mu\right\rfloor
-1)
\leq -(\mu-2)$ and $\mu-2\leq\left\lfloor \mu\right\rfloor
-1$, so that the set over which we are optimizing becomes smaller. The third
inequality follows because $\left\lfloor \mu\right\rfloor \leq\mu$ in the objective function.  The penultimate equality follows because
we can set $S_{AB}=\left(  \mu-2\right)  G_{AB}$, so that $\mu
=\operatorname{Tr}[S_{AB}]+2$ and because the sets $\{\left(  \mu-2\right)
G_{AB}:\mu\geq2,G_{AB}\geq0,\operatorname{Tr}[G_{AB}]=1\}$ and $\{S_{AB}%
\geq0\}$ are the same. The final equality follows from the definition of
$\kappa$-entanglement.
\end{proof}

\begin{remark}
We note that the original proof of the upper bound in \eqref{eq:bounds-for-one-shot-e-kap}, as given in \cite{WW18}, was
incomplete. It was subsequently revised in \cite[Lemma~V.8]{GS19}. For completeness, we have presented
the proof given above, which adopts some ideas from the revision given in \cite[Lemma~V.8]{GS19}.
\end{remark}

 \subsection{Semi-definite programming}\label{app:SDP}
 
The $\kappa$-entanglement $E_{\kappa}(\rho_{AB})$ of a bipartite state $\rho_{AB}$ can be computed by means of a semi-definite program (SDP) \cite{Vandenberghe1996}. Semi-definite optimization is a subfield of convex optimization concerned with the optimization of a linear objective function over the intersection of the cone of positive semi-definite matrices with an affine space. Most interior-point
methods for linear programming have been generalized to SDPs (e.g., \cite{Khachiyan1980}), which have polynomial worst-case complexity and have excellent performance in practice. Note that the CVX software~\cite{Grant2008} allows one to compute SDPs in practice.
 
In the following, we briefly introduce the basics of semi-definite programming, and we base our presentation on \cite{Watrous2011b}.

\begin{definition}
\label{def:SDPs}
	A {semi-definite program (SDP)} is defined by a triplet $\{\Psi,C,D\}$, where $C$ and $D$ are Hermitian operators  and
	$\Psi$ is a Hermiticity-preserving map.
	\vspace{-0.1cm}
	\begin{center}
		\begin{minipage}[t]{1.6in}
			\centerline{\underline{Primal problem}}\vspace{-7mm}
			\begin{align*}
			\inf \quad &  \tr[{D}{Y}]\\
			\text{subject to:}\quad &  \Psi(Y) \ge C,\\
			& Y\ge 0.
			\end{align*}
		\end{minipage}
				\hspace*{10mm}
				\begin{minipage}[t]{1.6in}
			\centerline{\underline{Dual problem}}\vspace{-7mm}
			\begin{align*}
			\sup\quad &  \tr[{C}{X}]\\
			\text{subject to:}\quad &  \Psi^\dag(X) \le D,\\
			& X\ge 0.
			\end{align*}
		\end{minipage}
	\end{center}
	where $\Psi^\dag$ is the dual map to $\Psi$ (i.e., it satisfies $\tr [Y^\dag \Psi^\dag(X)]=\tr [(\Psi(Y))^\dag X]$ for all linear operators $X$ and $Y$). Note that in many cases of interest, the
optimization may not be explicitly written as the standard form above, but one can instead recast it into the above form.
\end{definition}

Weak duality holds for all semi-definite programs, which states that the dual optimum  never exceeds the
primal optimum. The duality theory of SDP is fruitful and useful, and we refer to \cite{Watrous2011b} for more details.

By inspecting its definition, we see that the $\kappa$-entanglement of a bipartite state $\rho_{AB}$ can be computed by solving the following SDP:
\begin{equation}
\label{eq:prime SDP}
\begin{split}
 \inf \ & \tr [S_{AB}] \\
 \text{s.t.}\ & -S_{AB}^{T_B}\le\rho_{AB}^{T_B}\le S_{AB}^{T_B}, \\
 &  \, S_{AB}\ge 0.
\end{split}
\end{equation}
Supposing that the optimal value of the above SDP is $s$, then
 $E_\kappa(\rho_{AB}) = \log_2 s$. 

The Matlab codes for computing $E_\kappa$ are provided  \href{https://github.com/xinwang1/kappa-entanglement}{\textcolor{blue}{online}}.


\section{Properties of $\kappa$-entanglement}
\label{sec:kappa ent}

\subsection{Monotonicity under completely-PPT-preserving channels}

Throughout this work, we consider completely-PPT-preserving operations \cite{R99,R01}, defined as a bipartite operation $\mathcal{P}_{AB\to A'B'}$ (completely positive map) such that the map $T_{B'}\circ \mathcal{P}_{AB\to A'B'} \circ T_B$ is also completely positive, where $T_B$ and $T_{B'}$ denote the partial transpose map acting on the input system $B$ and the output system $B'$, respectively. If $\mathcal{P}_{AB\to A'B'}$ is also trace preserving, such that it is a quantum channel, and $T_{B'}\circ \mathcal{P}_{AB\to A'B'} \circ T_B$ is also completely positive, then we say that $\mathcal{P}_{AB\to A'B'}$ is a completely-PPT-preserving channel.

The most important property of the $\kappa$-entanglement measure is that it does not increase under the action of a completely-PPT-preserving channel. In fact, we prove a stronger statement, that $\kappa$-entanglement does not increase under selective completely-PPT-preserving operations.  Note that an LOCC channel \cite{Bennett1996c,CLMOW14}, as considered in entanglement theory, is a special kind of 
completely-PPT-preserving channel, as observed in \cite{R99,R01}. 

\begin{theorem}
[Monotonicity]
\label{prop:ent-monotone}Let $\rho_{AB}$ be a quantum state acting on a separable Hilbert
space, and let $\{\mathcal{P}_{AB\rightarrow A^{\prime}B^{\prime}}^{x}\}_{x}$
be a set of completely positive, trace non-increasing maps that are each completely
PPT-preserving, such that the sum map $\sum_{x}\mathcal{P}_{AB\rightarrow
A^{\prime}B^{\prime}}^{x}$ is quantum channel. Then the following entanglement
monotonicity inequality holds%
\begin{equation}
E_{\kappa}(\rho_{AB})\geq\sum_{x \, : \, p(x) > 0}p(x)E_{\kappa}\!\left(  \frac{\mathcal{P}%
_{AB\rightarrow A^{\prime}B^{\prime}}^{x}(\rho_{AB})}{p(x)}\right)
,\label{eq:mono-selective-E-kappa}%
\end{equation}
where $p(x)\coloneqq\operatorname{Tr}[\mathcal{P}_{AB\rightarrow A^{\prime}B^{\prime}%
}^{x}(\rho_{AB})]$. In particular, for a
completely-PPT-preserving quantum channel $\mathcal{P}_{AB\rightarrow
A^{\prime}B^{\prime}}$, the following inequality holds%
\begin{equation}
E_{\kappa}(\rho_{AB})\geq E_{\kappa}\!\left(  \mathcal{P}_{AB\rightarrow
A^{\prime}B^{\prime}}(\rho_{AB})\right)
.\label{eq:mono-non-selective-e-kappa}%
\end{equation}

\end{theorem}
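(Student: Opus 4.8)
The plan is to work with the exponentiated quantity $\kappa(\rho_{AB}) := 2^{E_\kappa(\rho_{AB})} = \inf\{\tr[S_{AB}] : S_{AB}\geq 0,\ -S_{AB}^{T_B}\leq \rho_{AB}^{T_B}\leq S_{AB}^{T_B}\}$, to prove the stronger expectation bound $\sum_{x:p(x)>0} p(x)\,\kappa(\rho_{A'B'}^x)\leq \kappa(\rho_{AB})$, and then to convert it to the claimed logarithmic inequality by convexity of $t\mapsto 2^t$. The strategy for the expectation bound is to take any feasible $S_{AB}$ for $\rho_{AB}$ and to show that each $\mathcal{P}_{AB\to A'B'}^x(S_{AB})$ furnishes a feasible operator for the corresponding output state $\rho_{A'B'}^x := \mathcal{P}_{AB\to A'B'}^x(\rho_{AB})/p(x)$.

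The key step is the feasibility transfer. First I would record that the two operator inequalities defining feasibility are equivalent to $S_{AB}^{T_B}\pm\rho_{AB}^{T_B}\geq 0$. Writing $\widetilde{\mathcal{P}}^x := T_{B'}\circ\mathcal{P}_{AB\to A'B'}^x\circ T_B$, the complete-PPT-preserving hypothesis says each $\widetilde{\mathcal{P}}^x$ is completely positive, hence positivity-preserving. The crucial algebraic identity, using that $T_B$ is an involution, is that for any Hermitian $Z_{AB}$ one has $\widetilde{\mathcal{P}}^x(Z_{AB}^{T_B}) = (\mathcal{P}_{AB\to A'B'}^x(Z_{AB}))^{T_{B'}}$. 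Applying $\widetilde{\mathcal{P}}^x$ to $S_{AB}^{T_B}\pm\rho_{AB}^{T_B}\geq 0$ therefore yields $(\mathcal{P}^x(S_{AB}))^{T_{B'}}\pm(\mathcal{P}^x(\rho_{AB}))^{T_{B'}}\geq 0$. For each $x$ with $p(x)>0$, dividing by $p(x)$ shows that $\mathcal{P}^x(S_{AB})/p(x)$ is feasible for $\rho_{A'B'}^x$ (it is positive semi-definite because $\mathcal{P}^x$ is completely positive and $S_{AB}\geq 0$), so $p(x)\,\kappa(\rho_{A'B'}^x)\leq \tr[\mathcal{P}^x(S_{AB})]$.

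Next I would sum over $x$ and invoke trace preservation of $\sum_x \mathcal{P}_{AB\to A'B'}^x$: since $\tr[\mathcal{P}^x(S_{AB})]\geq 0$ for every $x$, one has $\sum_{x:p(x)>0} p(x)\,\kappa(\rho_{A'B'}^x) \leq \sum_x \tr[\mathcal{P}^x(S_{AB})] = \tr[S_{AB}]$. Taking the infimum over all feasible $S_{AB}$ gives $\sum_{x:p(x)>0} p(x)\,\kappa(\rho_{A'B'}^x)\leq \kappa(\rho_{AB})$. Finally, since $\sum_{x:p(x)>0} p(x)=1$ (because $\sum_x\mathcal{P}^x$ is a channel and $\rho_{AB}$ a state) and $t\mapsto 2^t$ is convex, Jensen's inequality gives $2^{\sum_x p(x)E_\kappa(\rho_{A'B'}^x)}\leq \sum_x p(x)\,2^{E_\kappa(\rho_{A'B'}^x)}\leq 2^{E_\kappa(\rho_{AB})}$; taking $\log_2$ yields \eqref{eq:mono-selective-E-kappa}. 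The non-selective bound \eqref{eq:mono-non-selective-e-kappa} is then the special case of a single trace-preserving map.

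The main obstacle is the feasibility transfer: one must verify that the complete-PPT-preserving condition is precisely what licenses pushing both semi-definite constraints through each $\mathcal{P}^x$, and the clean way to see this is the involution identity $\widetilde{\mathcal{P}}^x(Z^{T_B}) = (\mathcal{P}^x(Z))^{T_{B'}}$. In the infinite-dimensional (separable) setting there is the minor additional point that the infimum defining $\kappa$ may fail to be attained or may be infinite; this is handled by arguing with near-optimal feasible $S_{AB}$ and by noting that the inequality is trivial when $\kappa(\rho_{AB})=\infty$.
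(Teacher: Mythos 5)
Your proposal is correct and follows essentially the same route as the paper's proof: the core step in both is pushing the feasibility constraints $S_{AB}^{T_B}\pm\rho_{AB}^{T_B}\geq 0$ through the completely positive map $T_{B'}\circ\mathcal{P}^{x}_{AB\to A'B'}\circ T_{B}$ to show that $\mathcal{P}^{x}_{AB\to A'B'}(S_{AB})/p(x)$ is feasible for each output state, then invoking trace preservation of the sum map and the concavity of $\log_2$ (equivalently, your Jensen step with $t\mapsto 2^{t}$). Your only repackaging is to first isolate the stronger expectation bound $\sum_{x:p(x)>0}p(x)\,2^{E_\kappa(\rho^{x}_{A'B'})}\leq 2^{E_\kappa(\rho_{AB})}$, which the paper's chain of inequalities establishes implicitly before applying concavity, so the two arguments coincide in substance.
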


\begin{proof}
Let $S_{AB}$ be such that%
\begin{equation}
S_{AB}\geq0, \qquad -S_{AB}^{T_{B}}\leq\rho_{AB}^{T_{B}}\leq S_{AB}^{T_{B}}%
.\label{eq:feasible-condition-for-e-kappa}%
\end{equation}
Since $\mathcal{P}_{AB\rightarrow A^{\prime}B^{\prime}}^{x}$ is
completely-PPT-preserving, we have that%
\begin{equation}
-(T_{B^{\prime}}\circ\mathcal{P}_{AB\rightarrow A^{\prime}B^{\prime}}^{x}\circ
T_{B})(S_{AB}^{T_{B}})\leq(T_{B^{\prime}}\circ\mathcal{P}_{AB\rightarrow
A^{\prime}B^{\prime}}^{x}\circ T_{B})(\rho_{AB}^{T_{B}})\leq(T_{B^{\prime}%
}\circ\mathcal{P}_{AB\rightarrow A^{\prime}B^{\prime}}^{x}\circ T_{B}%
)(S_{AB}^{T_{B}}),
\end{equation}
which reduces to the following for all $x$ such that $p(x)>0$:%
\begin{equation}
-\frac{[\mathcal{P}_{AB\rightarrow A^{\prime}B^{\prime}}^{x}(S_{AB}%
)]^{T_{B^{\prime}}}}{p(x)}\leq\frac{\lbrack\mathcal{P}_{AB\rightarrow
A^{\prime}B^{\prime}}^{x}(\rho_{AB})]^{T_{B^{\prime}}}}{p(x)}\leq\frac
{\lbrack\mathcal{P}_{AB\rightarrow A^{\prime}B^{\prime}}^{x}(S_{AB}%
)]^{T_{B^{\prime}}}}{p(x)}.\label{eq:feasible-for-mono-1}%
\end{equation}
Furthermore, since $S_{AB}\geq0$ and $\mathcal{P}_{AB\rightarrow A^{\prime
}B^{\prime}}^{x}$ is completely positive, we conclude the following for all $x$ such that $p(x)>0$:
\begin{equation}
\frac{\mathcal{P}_{AB\rightarrow A^{\prime}B^{\prime}}^{x}(S_{AB})}{p(x)}%
\geq0.\label{eq:feasible-for-mono-2}%
\end{equation}
Thus, the operator $\frac{\mathcal{P}_{AB\rightarrow A^{\prime}B^{\prime}}%
^{x}(S_{AB})}{p(x)}$ is feasible for $E_{\kappa}\!\left(  \frac{\mathcal{P}%
_{AB\rightarrow A^{\prime}B^{\prime}}^{x}(\rho_{AB})}{p(x)}\right)  $. Then we
find that%
\begin{align}
\log_2\operatorname{Tr}[S_{AB}]  & =\log_2\sum_{x}\operatorname{Tr}\mathcal{P}%
_{AB\rightarrow A^{\prime}B^{\prime}}^{x}(S_{AB})\\
& =\log_2\sum_{x \, : \, p(x)>0}p(x)\operatorname{Tr}\!\left[\frac{\mathcal{P}_{AB\rightarrow
A^{\prime}B^{\prime}}^{x}(S_{AB})}{p(x)}\right]\\
& \geq\sum_{x  \, : \, p(x)>0}p(x)\log_2\operatorname{Tr}\!\left[\frac{\mathcal{P}_{AB\rightarrow
A^{\prime}B^{\prime}}^{x}(S_{AB})}{p(x)}\right]\\
& \geq\sum_{x  \, : \, p(x)>0}p(x)E_{\kappa}\!\left(  \frac{\mathcal{P}_{AB\rightarrow
A^{\prime}B^{\prime}}^{x}(\rho_{AB})}{p(x)}\right)  .
\end{align}
The first equality follows from the assumption that the sum map $\sum
_{x}\mathcal{P}_{AB\rightarrow A^{\prime}B^{\prime}}^{x}$ is trace preserving.
The first inequality follows from concavity of the logarithm. The second
inequality follows from the definition of $E_{\kappa}$ and the fact that
$\frac{\mathcal{P}_{AB\rightarrow A^{\prime}B^{\prime}}^{x}(S_{AB})}{p(x)}$
satisfies \eqref{eq:feasible-for-mono-1} and \eqref{eq:feasible-for-mono-2}.
Since the inequality holds for an arbitrary $S_{AB}\geq0$ satisfying
\eqref{eq:feasible-condition-for-e-kappa}, we conclude the inequality in~\eqref{eq:mono-selective-E-kappa}. 

The inequality in
\eqref{eq:mono-non-selective-e-kappa}\ is a special case of that in
\eqref{eq:mono-selective-E-kappa}, in which the set $\{\mathcal{P}%
_{AB\rightarrow A^{\prime}B^{\prime}}^{x}\}_{x}$ is a singleton, consisting of a single
completely-PPT-preserving quantum channel.
\end{proof}

\subsection{Dual representation}

The optimization problem dual to $E_\kappa(\rho_{AB})$ in Definition~\ref{def:kappa-ent} is as follows:
\begin{equation}
\label{eq:a dual}
E^{\text{dual}}_\kappa(\rho_{AB}) \coloneqq\sup \{ \log_2 \tr \rho_{AB}(V_{AB}-W_{AB}):
 V_{AB}+W_{AB}\le \1_{AB},\, V_{AB}^{T_B},\, W_{AB}^{T_B}\ge 0\}.
\end{equation}
We show a proof of this in Section~\ref{sec:dual-derive}.
By weak duality \cite[Section~1.2.2]{Watrous2011b}, we have for any bipartite state $\rho_{AB}$ acting on a separable Hilbert space that
\begin{equation}
E^{\text{dual}}_\kappa(\rho_{AB})\leq E_\kappa(\rho_{AB}) .
\label{eq:weak-dual-kappa}
\end{equation}
For all finite-dimensional states $\rho_{AB}$, strong duality holds, so that
\begin{equation}
E_\kappa(\rho_{AB}) = E^{\text{dual}}_\kappa(\rho_{AB}).
\label{eq:strong-dual-e-kappa}
\end{equation}
This follows as a consequence of Slater's theorem and by choosing $V_{AB} =  \1_{AB}/2$ and $W_{AB} = \1_{AB} / 3$.

By employing the strong duality equality in \eqref{eq:strong-dual-e-kappa} for the finite-dimensional case, along with the approach from \cite{FAR11}, we conclude that the following equality holds for all bipartite states $\rho_{AB}$ acting on a separable Hilbert space:
\begin{equation}
E_\kappa(\rho_{AB}) = E^{\text{dual}}_\kappa(\rho_{AB}).
\label{eq:strong-dual-e-kappa-inf-dim}
\end{equation}
We provide an explicit proof of \eqref{eq:strong-dual-e-kappa-inf-dim} in Section~\ref{app:kappa-to-dual-infty}.

\subsubsection{Derivation of dual}

\label{sec:dual-derive}

Let us derive the dual form $\kappa$-entanglement in more detail. First we can
rewrite the primal SDP\ for the $\kappa$-entanglement in the standard form in Definition~\ref{def:SDPs} as%
\begin{equation}
\inf_{Y\geq0}\left\{  \operatorname{Tr}[DY]:\Psi(Y)\geq C\right\}  ,
\end{equation}
with%
\begin{equation}
D=I,\quad Y=S_{AB},\quad\Psi(Y)=%
\begin{bmatrix}
S_{AB}^{T_{B}} & 0\\
0 & S_{AB}^{T_{B}}%
\end{bmatrix}
,\quad C=%
\begin{bmatrix}
\rho_{AB}^{T_{B}} & 0\\
0 & -\rho_{AB}^{T_{B}}%
\end{bmatrix}
.
\end{equation}
Then setting%
\begin{equation}
X=%
\begin{bmatrix}
V_{AB} & 0\\
0 & W_{AB}%
\end{bmatrix}
,
\end{equation}
with $V_{AB},W_{AB}\geq0$, we find that%
\begin{align}
\operatorname{Tr}[\Psi(Y)X]  & =\operatorname{Tr}\left[
\begin{bmatrix}
S_{AB}^{T_{B}} & 0\\
0 & S_{AB}^{T_{B}}%
\end{bmatrix}%
\begin{bmatrix}
V_{AB} & 0\\
0 & W_{AB}%
\end{bmatrix}
\right]  \\
& =\operatorname{Tr}[S_{AB}^{T_{B}}(V_{AB}+W_{AB})]\\
& =\operatorname{Tr}[S_{AB}(V_{AB}+W_{AB})^{T_{B}}],
\end{align}
implying that%
\begin{equation}
\Psi^{\dag}(X)=(V_{AB}+W_{AB})^{T_{B}}.
\end{equation}
Then plugging into the standard form in Definition~\ref{def:SDPs}, we find that the dual is given by%
\begin{align}
& \sup_{X\geq0}\left\{  \operatorname{Tr}[CX]:\Psi^{\dag}(X)\leq D\right\}
\nonumber\\
& =\sup_{V_{AB},W_{AB}\geq0}\left\{  \operatorname{Tr}[\rho_{AB}^{T_{B}%
}\left(  V_{AB}-W_{AB}\right)  ]:(V_{AB}+W_{AB})^{T_{B}}\leq I_{AB}\right\}
\\
& =\sup_{V_{AB},W_{AB}\geq0}\left\{  \operatorname{Tr}[\rho_{AB}\left(
V_{AB}-W_{AB}\right)  ^{T_{B}}]:(V_{AB}+W_{AB})^{T_{B}}\leq I_{AB}\right\}
\label{eq:dual-for-witnessed-ent}\\
& =\sup_{V_{AB}^{T_{B}},W_{AB}^{T_{B}}\geq0}\left\{  \operatorname{Tr}%
[\rho_{AB}\left(  V_{AB}-W_{AB}\right)  ]:V_{AB}+W_{AB}\leq I_{AB}\right\}  .
\end{align}
The last line follows from the substitutions $V_{AB}\rightarrow V_{AB}^{T_{B}%
}$ and $W_{AB}\rightarrow W_{AB}^{T_{B}}$. Thus,%
\begin{equation}
E_{\kappa}^{\text{dual}}(\rho_{AB})=\log_{2}\sup_{V_{AB}^{T_{B}},W_{AB}%
^{T_{B}}\geq0}\left\{  \operatorname{Tr}[\rho_{AB}\left(  V_{AB}%
-W_{AB}\right)  ]:V_{AB}+W_{AB}\leq I_{AB}\right\}  ,
\end{equation}
as claimed.

\subsection{$\kappa$-entanglement as witnessed entanglement}

We note that the dual form of the $\kappa$-entanglement allows for
understanding it in terms of witnessed entanglement, the latter being a
concept discussed in \cite{FB05}. Recall that an entanglement witness is a Hermitian
operator $W_{AB}$ that satisfies $\operatorname{Tr}[W_{AB}\rho_{AB}]<0$ for an
entangled state $\rho_{AB}$ and $\operatorname{Tr}[W_{AB}\sigma_{AB}]\geq0$
for all separable states. The idea behind witnessed entanglement is to
intersect the set of all witnesses with another set $\mathcal{C}$ (call the
intersection $\mathcal{M}$) and then perform the following optimization:%
\begin{equation}
\max\left\{  0,-\inf_{W_{AB}\in\mathcal{M}}\operatorname{Tr}[W_{AB}\rho
_{AB}]\right\}  .\label{eq:witnessed-ent-gen-def}%
\end{equation}
As discussed in \cite{FB05}, this approach allows for quantifying entanglement.

We can arrive at the conclusion that $\kappa$-entanglement is a particular
kind of witnessed entanglement by exploiting the dual form and the equality
presented in \eqref{eq:strong-dual-e-kappa-inf-dim} and more generally in Section~\ref{app:kappa-to-dual-infty}. Recalling the dual form given in
\eqref{eq:dual-for-witnessed-ent}, we can write it as follows:%
\begin{align}
& E_{\kappa}^{\text{dual}}(\rho_{AB})  \notag \\
& =\log_{2}\sup\left\{  \operatorname{Tr}%
[\left(  K_{AB}-L_{AB}\right)  ^{T_{B}}\rho_{AB}]:\left(  L_{AB}%
+K_{AB}\right)  ^{T_{B}}\leq I_{AB},\ K_{AB},L_{AB}\geq0\right\}  \\
& =\log_{2}\sup\left\{  -\operatorname{Tr}[\left(  L_{AB}-K_{AB}\right)
^{T_{B}}\rho_{AB}]:\left(  L_{AB}+K_{AB}\right)  ^{T_{B}}\leq I_{AB}%
,\ K_{AB},L_{AB}\geq0\right\}  .\\
& =\log_{2}\left[  -\inf\left\{  \operatorname{Tr}[\left(  L_{AB}%
-K_{AB}\right)  ^{T_{B}}\rho_{AB}]:\left(  L_{AB}+K_{AB}\right)  ^{T_{B}}\leq
I_{AB},\ K_{AB},L_{AB}\geq0\right\}  \right]  .
\end{align}
Now setting $Z_{AB}$ and $Y_{AB}$ to be the following Hermitian operators:%
\begin{equation}
Z_{AB}:=L_{AB}-K_{AB},\qquad Y_{AB}:=L_{AB}+K_{AB},
\end{equation}
we find that%
\begin{align}
\operatorname{Tr}[\left(  L_{AB}-K_{AB}\right)  ^{T_{B}}\rho_{AB}] &
=\operatorname{Tr}[Z_{AB}^{T_{B}}\rho_{AB}],\\
\left(  L_{AB}+K_{AB}\right)  ^{T_{B}}\leq I_{AB}\quad &  \Longleftrightarrow
\quad Y_{AB}^{T_{B}}\leq I_{AB},\\
K_{AB},L_{AB}\geq0\quad &  \Longleftrightarrow\quad Y_{AB}-Z_{AB}%
\geq0,\ Z_{AB}+Y_{AB}\geq0.
\end{align}
Then we can rewrite the dual as%
\begin{equation}
E_{\kappa}^{\text{dual}}(\rho_{AB})=\log_{2}\left[  -\inf_{Z_{AB},Y_{AB}%
\in\text{Herm}}\left\{  \operatorname{Tr}[Z_{AB}^{T_{B}}\rho_{AB}%
]:Y_{AB}^{T_{B}}\leq I_{AB},\ -Y_{AB}\leq Z_{AB}\leq Y_{AB}\right\}  \right]
.
\end{equation}
We can then perform the final substitutions $Z_{AB}^{T_{B}}\rightarrow Z_{AB}$
and $Y_{AB}^{T_{B}}\rightarrow Y_{AB}$ in order to be fully consistent with
the formulation in \eqref{eq:witnessed-ent-gen-def}. So then%
\begin{equation}
E_{\kappa}^{\text{dual}}(\rho_{AB})=\log_{2}\left[  -\inf_{Z_{AB},Y_{AB}%
\in\text{Herm}}\left\{  \operatorname{Tr}[Z_{AB}\rho_{AB}]:Y_{AB}\leq
I_{AB},\ -Y_{AB}^{T_{B}}\leq Z_{AB}^{T_{B}}\leq Y_{AB}^{T_{B}}\right\}
\right]  .
\end{equation}
Defining the set%
\begin{equation}
\mathcal{K}:=\left\{  Z_{AB}:Z_{AB}\in\text{Herm},\ \exists Y_{AB}%
\in\text{Herm},\ Y_{AB}\leq I_{AB},\ -Y_{AB}^{T_{B}}\leq Z_{AB}^{T_{B}}\leq
Y_{AB}^{T_{B}}\right\}  ,
\end{equation}
we can write $E_{\kappa}^{\text{dual}}(\rho_{AB})$ as follows:%
\begin{equation}
E_{\kappa}^{\text{dual}}(\rho_{AB})=\log_{2}\left[  -\inf_{Z_{AB}%
\in\mathcal{K}}\operatorname{Tr}[Z_{AB}\rho_{AB}]\right]
\label{eq:final-form-kappa-witnessed-ent}
\end{equation}
Note that%
\begin{equation}
\inf_{Z_{AB}\in\mathcal{K}}\operatorname{Tr}[Z_{AB}\rho_{AB}]\leq-1
\end{equation}
because we can always pick $Z_{AB}=-I_{AB}$ and $Y_{AB}=I_{AB}$, and we get
that $\operatorname{Tr}[Z_{AB}\rho_{AB}]=-1$ for these choices. So this
implies that the expression inside the logarithm in \eqref{eq:final-form-kappa-witnessed-ent} is never smaller than one.

Any $Z_{AB}$ chosen from the set $\mathcal{K}$ leads to an entanglement
witness after summing it with the identity. In this case, we have
$\operatorname{Tr}[\left(  Z_{AB}+I_{AB}\right)  \rho_{AB}]\geq0$ for
PPT\ states. To see this, suppose that $\rho_{AB}$ is separable and thus PPT.
Then we find for any $Z_{AB}\in\mathcal{K}$ that%
\begin{align}
\operatorname{Tr}[Z_{AB}\rho_{AB}] &  =\operatorname{Tr}[Z_{AB}^{T_{B}}%
\rho_{AB}^{T_{B}}]\\
&  \geq-\operatorname{Tr}[Y_{AB}^{T_{B}}\rho_{AB}^{T_{B}}]\\
&  =-\operatorname{Tr}[Y_{AB}\rho_{AB}]\\
&  \geq-\operatorname{Tr}[I_{AB}\rho_{AB}]\\
&  =-1,
\end{align}
which implies that%
\begin{equation}
\operatorname{Tr}[\left(  Z_{AB}+I_{AB}\right)  \rho_{AB}]\geq0,
\end{equation}
for all PPT\ states. The first inequality follows from the assumption that
$\rho_{AB}^{T_{B}}\geq0$ and the last from the assumption that $\rho_{AB}%
\geq0$. Thus, every $Z_{AB}\in\mathcal{K}$ summed with the identity is an
entanglement witness. Additionally, by employing the faithfulness of $\kappa
$-entanglement (discussed later in Proposition~\ref{prop:faithfulness}), we conclude that for every
NPT\ state, there exists $Z_{AB}\in\mathcal{K}$ such that $\operatorname{Tr}%
[\left(  Z_{AB}+I_{AB}\right)  \rho_{AB}]<0$. So the development above
establishes a link between $\kappa$-entanglement and the theory of
entanglement witnesses.

\subsection{Additivity}

Both the primal and dual SDPs for $E_\kappa$ are important, as the combination of them allows for proving the following additivity of $E_\kappa$ with respect to tensor-product states.

\begin{proposition}[Additivity]\label{lemma:add}
	For any two bipartite states $\rho_{AB}$ and $\omega_{A'B'}$ acting on separable Hilbert spaces, the following additivity identity holds
	\begin{equation}
	E_\kappa(\rho_{AB}\ox\omega_{A'B'})
	=E_\kappa(\rho_{AB})+E_\kappa(\omega_{A'B'}).
	\label{eq:additivity-e-kappa-states}
	\end{equation}
	where the bipartition on the left-hand side is understood as $AA'|BB'$.
\end{proposition}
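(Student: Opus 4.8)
The plan is to prove the two inequalities separately, exploiting the fact that $E_\kappa=\log_2$ of an optimal SDP value, so that additivity of $E_\kappa$ is equivalent to multiplicativity of the optimal value $2^{E_\kappa}$ under tensor products (with the cut understood as $AA'|BB'$). I would obtain the ``$\le$'' direction (subadditivity) from the primal SDP in \eqref{eq:prime SDP}, and the ``$\ge$'' direction (superadditivity) from the dual SDP in \eqref{eq:dual-for-witnessed-ent}. This is precisely why both the primal and the dual formulations are needed.

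For subadditivity, let $S_{AB}$ and $S'_{A'B'}$ be optimal for $E_\kappa(\rho_{AB})$ and $E_\kappa(\omega_{A'B'})$, and set $S:=S_{AB}\ox S'_{A'B'}\ge0$. Feasibility of $S$ for the joint problem reduces to the two operator inequalities $S_{AB}^{T_B}\ox S'^{T_{B'}}_{A'B'}\pm\rho_{AB}^{T_B}\ox\omega_{A'B'}^{T_{B'}}\ge0$. I would derive these from the four nonnegative operators $(S_{AB}^{T_B}\pm\rho_{AB}^{T_B})\ox(S'^{T_{B'}}_{A'B'}\pm\omega_{A'B'}^{T_{B'}})\ge0$, each being a tensor product of positive semi-definite operators (using $S_{AB}^{T_B},S'^{T_{B'}}_{A'B'}\ge0$ together with the feasibility constraints of $S_{AB}$ and $S'_{A'B'}$), and then adding the two sign combinations that produce the $+$ and $-$ constraints, respectively. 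Since $\tr[S]=\tr[S_{AB}]\tr[S'_{A'B'}]$, this yields $2^{E_\kappa(\rho_{AB}\ox\omega_{A'B'})}\le 2^{E_\kappa(\rho_{AB})}2^{E_\kappa(\omega_{A'B'})}$, i.e.\ $E_\kappa(\rho_{AB}\ox\omega_{A'B'})\le E_\kappa(\rho_{AB})+E_\kappa(\omega_{A'B'})$.

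For superadditivity I would tensor dual-optimal solutions. Take $(V_{AB},W_{AB})$ and $(V'_{A'B'},W'_{A'B'})$ optimal in the form \eqref{eq:dual-for-witnessed-ent}, so that $V,W,V',W'\ge0$, $(V+W)^{T_B}\le \1$, $(V'+W')^{T_{B'}}\le \1$, and the objective values equal $2^{E_\kappa(\rho_{AB})}$ and $2^{E_\kappa(\omega_{A'B'})}$ by strong duality \eqref{eq:strong-dual-e-kappa-inf-dim}. I would then form $\tilde V:=V\ox V'+W\ox W'$ and $\tilde W:=V\ox W'+W\ox V'$, which are manifestly positive semi-definite, and which satisfy $\tilde V-\tilde W=(V-W)\ox(V'-W')$, so that the joint dual objective factorizes as $\tr[\rho_{AB}\ox\omega_{A'B'}\,(\tilde V-\tilde W)^{T_{BB'}}]=2^{E_\kappa(\rho_{AB})}2^{E_\kappa(\omega_{A'B'})}$. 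The only remaining thing to check is the single joint dual constraint $(\tilde V+\tilde W)^{T_{BB'}}=(V+W)^{T_B}\ox(V'+W')^{T_{B'}}\le \1$.

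This last constraint is the crux and the step I expect to be the main obstacle. Writing $M:=V+W\ge0$ and $M':=V'+W'\ge0$, the factor constraints give only $M^{T_B}\le \1$ and $M'^{T_{B'}}\le \1$; but since the partial transpose is not a positive map, $M^{T_B}$ and $M'^{T_{B'}}$ are in general indefinite, and a tensor product of two indefinite operators each bounded above by $\1$ need not be bounded above by $\1$ (a product of two sufficiently negative eigenvalues can exceed $1$; indeed $M^{T_B}=(\Pi^{\mathcal A})^{T_B}$ already has an eigenvalue more negative than $-1$ in dimension $d\ge4$). Hence the naive tensoring does not immediately certify feasibility. To close this gap I would argue that one may restrict attention to dual optimizers for which $0\le(V+W)^{T_B}\le \1$: complementary slackness against a primal-optimal $S_{AB}$ forces $(V+W)^{T_B}=\1$ on the support of $S_{AB}$, and one then modifies $V+W$ on the orthogonal complement by a PPT-preserving adjustment that leaves $V-W$ (hence the objective) and the remaining feasibility conditions intact while eliminating the negative part of $(V+W)^{T_B}$. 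Once $0\le(V+W)^{T_B}\le \1$ and $0\le(V'+W')^{T_{B'}}\le \1$ both hold, their tensor product is a product of positive semi-definite operators each $\le \1$ and is therefore $\le \1$, establishing feasibility of $(\tilde V,\tilde W)$ and hence $2^{E_\kappa(\rho_{AB}\ox\omega_{A'B'})}\ge 2^{E_\kappa(\rho_{AB})}2^{E_\kappa(\omega_{A'B'})}$. Combining the two directions gives the claimed additivity; I expect the positivity-of-the-partial-transpose issue in this final constraint to be the only genuinely delicate point, with everything else reducing to bookkeeping.
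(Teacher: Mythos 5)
Your subadditivity half is exactly the paper's argument: tensor two primal-feasible operators and certify joint feasibility by expanding the four positive semi-definite products $(S_{AB}^{T_B}\pm\rho_{AB}^{T_B})\ox(R_{A'B'}^{T_{B'}}\pm\omega_{A'B'}^{T_{B'}})$ and adding them in pairs, which is precisely \eqref{eq:for-additivity-1}--\eqref{eq:for-additivity-4}. Your superadditivity construction is also the paper's: in the form \eqref{eq:a dual} the paper sets $R=V^1\ox V^2+W^1\ox W^2$ and $S=V^1\ox W^2+W^1\ox V^2$, which is your $(\tilde V,\tilde W)$ after the substitution $V\to V^{T_B}$, $W\to W^{T_B}$ relating \eqref{eq:a dual} to \eqref{eq:dual-for-witnessed-ent}. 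Where you depart from the paper is in refusing to accept the joint constraint $R+S=(V^1+W^1)\ox(V^2+W^2)\le\1$ as automatic, and your caution is warranted: the paper asserts that ``one can verify'' this inequality for \emph{arbitrary} feasible pairs, but the assertion fails in general. In the paper's notation, $(V^1,W^1)=(\tfrac{1}{2}(\1-d\,\Phi^{d}),0)$ is feasible for \eqref{eq:a dual}, since $(V^1)^{T_B}=\tfrac{1}{2}(\1-F)=\Pi^{\mathcal{A}}\ge0$ (with $F$ the swap operator) and $V^1+W^1\le\tfrac{1}{2}\1$; yet $V^1+W^1$ has eigenvalue $(1-d)/2\le-3/2$ for $d\ge4$, so tensoring two such solutions yields an operator with an eigenvalue of at least $9/4$, violating $R+S\le\1$. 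This is your $\Pi^{\mathcal{A}}$ example viewed on the other side of the partial-transpose substitution, so you have correctly located a step that the printed proof does not justify. (Such feasible points have objective at most $1/2$ and are far from optimal, so no contradiction with additivity arises; but an argument quantifying over arbitrary feasible pairs, as both you and the paper do, does not go through.)

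The difficulty is that your proposed repair is not a proof, so the gap remains open in your write-up as well. The claim that one may restrict attention to dual optimizers with $0\le(V+W)^{T_B}\le\1$ \emph{is} the missing lemma, and the ``PPT-preserving adjustment'' supposed to produce them is never constructed. Complementary slackness does force $(V+W)^{T_B}=\1$ on the support of a primal optimizer, so the negative part $N$ of $(V+W)^{T_B}$ sits on the orthogonal complement; but any modification ``leaving $V-W$ intact'' must add the same operator $\Delta/2$ to both $V$ and $W$ with $\Delta^{T_B}\ge N$, and then positivity of $V+\Delta/2$ and $W+\Delta/2$ fails in general, because the minimal choice $\Delta=N^{T_B}$ is the partial transpose of a positive semi-definite operator and is generically indefinite---the very obstruction you identified reappears one level down. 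It is also not automatic that such a restriction preserves the optimal value: tightening it all the way to $(V+W)^{T_B}=\1$ collapses the dual value to $\Vert\rho_{AB}^{T_B}\Vert_1=2^{E_N(\rho_{AB})}$, which the paper shows can be strictly smaller than $2^{E_\kappa(\rho_{AB})}$ (the rank-two antisymmetric example). So establishing that (near-)optimal dual solutions exist with the extra operator-norm control $\Vert(V+W)^{T_B}\Vert_\infty\le1$---or finding a genuinely different route to superadditivity---is the actual crux, not bookkeeping; neither your proposal nor, as written, the paper's proof supplies it.
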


\begin{proof}
	From Definition~\ref{def:kappa-ent}, we can write $E_\kappa(\rho_{AB})$ as 
	\begin{align}
	E_\kappa(\rho_{AB})=\inf \{\log_2 \tr S_{AB}: -S^{T_B}_{AB} \le \rho_{AB}^{T_B} \le S^{T_B}_{AB}, \, S_{AB}\ge 0\}.
		\end{align}
		Let $S_{AB}$ be an arbitrary operator satisfying $-S^{T_B}_{AB} \le \rho_{AB}^{T_B} \le S^{T_B}_{AB}, \, S_{AB}\ge 0$, and let  $R_{A'B'}$ be an arbitrary operator satisfying $-R^{T_{B'}}_{A'B'} \le \omega_{A'B'}^{T_{B'}} \le R^{T_{B'}}_{A'B'}, \, R_{A'B'}\ge 0$.
		From these inequalities, we conclude that
		\begin{align}
	 0 & \leq ( S^{T_B}_{AB} + \rho_{AB}^{T_B}) \otimes ( R^{T_{B'}}_{A'B'} + \omega_{A'B'} ^{T_{B'}}) \notag \\
	 & =
	S^{T_B}_{AB} \otimes R^{T_{B'}}_{A'B'} + \rho^{T_B}_{AB} \otimes R^{T_{B'}}_{A'B'} 
	+S^{T_B}_{AB} \otimes \omega^{T_{B'}}_{A'B'} 
	+\rho^{T_B}_{AB} \otimes \omega^{T_{B'}}_{A'B'} \label{eq:for-additivity-1} ,
	  \\
	0 & \leq (S^{T_B}_{AB} - \rho_{AB}^{T_B} 	) \otimes (R^{T_{B'}}_{A'B'} - \omega_{A'B'} ^{T_{B'}} )\notag \\
	& = S^{T_B}_{AB} \otimes R^{T_{B'}}_{A'B'} - \rho^{T_B}_{AB} \otimes R^{T_{B'}}_{A'B'} 
	-S^{T_B}_{AB} \otimes \omega^{T_{B'}}_{A'B'} 
	+\rho^{T_B}_{AB} \otimes \omega^{T_{B'}}_{A'B'} \label{eq:for-additivity-2} .
		\end{align}
		Adding \eqref{eq:for-additivity-1} and \eqref{eq:for-additivity-2}, we conclude that
		\begin{equation}
				-(S_{AB} \otimes R_{A'B'})^{T_{BB'}} \le (\rho_{AB} \otimes \omega_{A'B'})^{T_{BB'}}.
		\end{equation}
		Furthermore, we conclude that
		\begin{align}
	 0 & \leq ( S^{T_B}_{AB} + \rho_{AB}^{T_B}) \otimes ( R^{T_{B'}}_{A'B'} - \omega_{A'B'} ^{T_{B'}}) \notag \\
	 & =
	S^{T_B}_{AB} \otimes R^{T_{B'}}_{A'B'} + \rho^{T_B}_{AB} \otimes R^{T_{B'}}_{A'B'} 
	-S^{T_B}_{AB} \otimes \omega^{T_{B'}}_{A'B'} 
	-\rho^{T_B}_{AB} \otimes \omega^{T_{B'}}_{A'B'} \label{eq:for-additivity-3} ,
	  \\
	0 & \leq (S^{T_B}_{AB} - \rho_{AB}^{T_B} 	) \otimes (R^{T_{B'}}_{A'B'} + \omega_{A'B'} ^{T_{B'}} )\notag \\
	& = S^{T_B}_{AB} \otimes R^{T_{B'}}_{A'B'} - \rho^{T_B}_{AB} \otimes R^{T_{B'}}_{A'B'} 
	+S^{T_B}_{AB} \otimes \omega^{T_{B'}}_{A'B'} 
	-\rho^{T_B}_{AB} \otimes \omega^{T_{B'}}_{A'B'} \label{eq:for-additivity-4} .
		\end{align}
		Adding \eqref{eq:for-additivity-3} and \eqref{eq:for-additivity-4}, we conclude that
		\begin{equation}
				(\rho_{AB} \otimes \omega_{A'B'})^{T_{BB'}} \le (S_{AB} \otimes R_{A'B'})^{T_{BB'}}.
		\end{equation}
		Then it follows that
		\begin{equation}
		-(S_{AB} \otimes R_{A'B'})^{T_{BB'}} \le (\rho_{AB} \otimes \omega_{A'B'})^{T_{BB'}} \le (S_{AB} \otimes R_{A'B'})^{T_{BB'}}, \, S_{AB} \otimes R_{A'B'}\ge 0,
		\label{eq:for-additivity-5}
		\end{equation}
		so that
		\begin{equation}
		E_\kappa(\rho_{AB}\ox\omega_{A'B'}) \leq \log_2 \tr S_{AB} \otimes R_{A'B'} = \log_2 \tr S_{AB} +\log_2 \tr R_{A'B'}.
		\end{equation}
		Since the inequality holds for all $S_{AB}$ and $R_{A'B'}$ satisfying the constraints above, we conclude that
	\begin{align}\label{eq:a sub}
		E_\kappa(\rho_{AB}\ox\omega_{A'B'})\le E_\kappa(\rho_{AB})+E_\kappa(\omega_{A'B'}).
	\end{align}
	
	To see the super-additivity of $E_\kappa$, i.e., the opposite inequality, let  $\{V^1_{AB},W^1_{AB}\}$ and $\{V^2_{A'B'},W^2_{A'B'}\}$
	be arbitrary operators satisfying the conditions in \eqref{eq:a dual} for
	$\rho_{AB}$ and $\omega_{A'B'}$, respectively.
	Now we choose 
	\begin{align}
		R_{ABA'B'}
		&=V^1_{AB} \ox V^2_{A'B'}
		+ W^1_{AB} \ox W^2_{A'B'},\\
		S_{ABA'B'}
		&=V^1_{AB} \ox W^2_{A'B'}
		+ W^1_{AB} \ox V^2_{A'B'}.
	\end{align}
	One can verify from \eqref{eq:a dual} that
	\begin{align}
	R_{ABA'B'}^{T_{BB'}},\, S_{ABA'B'}^{T_{BB'}}& \ge 0 , \\
	R_{ABA'B'}+S_{ABA'B'} & =(V^1_{AB}+W^1_{AB})\ox(V^2_{AB}+W^2_{AB})\le \1_{ABA'B'},
	\end{align}
	which implies that $\{R_{ABA'B'},S_{ABA'B'}\}$ is a feasible solution to  \eqref{eq:a dual} for $E_\kappa(\rho_{AB}\ox\omega_{A'B'})$. 
		Thus, we have that
	\begin{align}	 
	E^{\text{dual}}_\kappa(\rho_{AB}\ox\omega_{A'B'})
	&
	\ge \log_2 \tr (\rho_{AB}\ox\omega_{A'B'}) (R_{ABA'B'}-S_{ABA'B'})\\
	&=\log_2 [\tr \rho_{AB} (V^1_{AB}-W^1_{AB}) \cdot \tr\omega_{A'B'} (V^2_{A'B'}-W^2_{A'B'})]\\
	&= \log_2 (\tr \rho_{AB} (V^1_{AB}-W^1_{AB})) + \log_2( \tr\omega_{A'B'} (V^2_{A'B'}-W^2_{A'B'})).
	\end{align}
	Since the inequality has been shown for arbitrary $\{V^1_{AB},W^1_{AB}\}$ and $\{V^2_{A'B'},W^2_{A'B'}\}$
	satisfying the conditions in \eqref{eq:a dual} for
	$\rho_{AB}$ and $\omega_{A'B'}$, respectively, we conclude that
	\begin{equation}
	\label{eq:super-add-kappa}
	E^{\text{dual}}_\kappa(\rho_{AB}\ox\omega_{A'B'}) \geq 
E^{\text{dual}}_\kappa(\rho_{AB})+E^{\text{dual}}_\kappa(\omega_{A'B'}).
	\end{equation}
	Applying \eqref{eq:a sub}, \eqref{eq:super-add-kappa}, \eqref{eq:strong-dual-e-kappa-inf-dim}, and the more general equality in Section~\ref{app:kappa-to-dual-infty}, we conclude \eqref{eq:additivity-e-kappa-states}.
\end{proof}


\subsection{Relation to logarithmic negativity}

The following proposition establishes an inequality  relating $E_\kappa$ to the logarithmic negativity \cite{Vidal2002,Plenio2005b}, defined as in \eqref{eq:log-neg}.

\begin{proposition}
\label{prop:connect-to-log-neg}
Let $\rho_{AB}$ be a  bipartite state acting on a separable Hilbert space.
Then%
\begin{equation}
E_{\kappa}(\rho_{AB})\geq E_{N}(\rho_{AB}%
).\label{eq:kappa-greater-than-log-neg}%
\end{equation}
If $\rho_{AB}$ satisfies the condition $|\rho_{AB}^{T_B}|^{T_B} \geq 0$, then
\begin{equation}
E_{\kappa}(\rho_{AB})= E_{N}(\rho_{AB}%
).\label{eq:kappa-equal-log-neg}%
\end{equation}
\end{proposition}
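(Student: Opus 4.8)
The plan is to treat the two claims separately: first establish the general inequality $E_{\kappa}(\rho_{AB}) \geq E_N(\rho_{AB})$ by a direct estimate on an arbitrary feasible point of the primal SDP for $E_\kappa$, and then, under the stated hypothesis, prove the reverse inequality by exhibiting an explicit feasible point whose objective value equals $E_N(\rho_{AB})$.

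For the lower bound, I would take an arbitrary $S_{AB}$ feasible for $E_\kappa(\rho_{AB})$, so that $S_{AB}\geq 0$ and $-S_{AB}^{T_B}\leq \rho_{AB}^{T_B}\leq S_{AB}^{T_B}$; adding the two operator inequalities gives $S_{AB}^{T_B}\geq 0$. The goal is to bound $\tr[S_{AB}]=\tr[S_{AB}^{T_B}]$ (using trace invariance under partial transpose) from below by $\|\rho_{AB}^{T_B}\|_1 = \tr[|\rho_{AB}^{T_B}|]$. I would decompose the Hermitian operator $\rho_{AB}^{T_B}=P-N$ into its positive and negative parts $P,N\geq 0$ with mutually orthogonal supports, and let $\Pi_P,\Pi_N$ be the projections onto those supports, so $\Pi_P+\Pi_N\leq \1_{AB}$ and $\Pi_P N = \Pi_N P = 0$. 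Testing $S_{AB}^{T_B}\geq \rho_{AB}^{T_B}$ against $\Pi_P\geq 0$ yields $\tr[\Pi_P S_{AB}^{T_B}]\geq \tr[P]$, and testing $S_{AB}^{T_B}\geq -\rho_{AB}^{T_B}$ against $\Pi_N\geq 0$ yields $\tr[\Pi_N S_{AB}^{T_B}]\geq \tr[N]$. Since $S_{AB}^{T_B}\geq 0$ and $\1_{AB}-\Pi_P-\Pi_N\geq 0$, I obtain $\tr[S_{AB}^{T_B}]\geq \tr[(\Pi_P+\Pi_N)S_{AB}^{T_B}]\geq \tr[P]+\tr[N]=\|\rho_{AB}^{T_B}\|_1$. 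Taking the infimum over feasible $S_{AB}$ and applying $\log_2$ gives \eqref{eq:kappa-greater-than-log-neg}.

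For the equality case, under the hypothesis $|\rho_{AB}^{T_B}|^{T_B}\geq 0$, I would propose the explicit candidate $S_{AB}:=|\rho_{AB}^{T_B}|^{T_B}$ and verify feasibility: its positivity is exactly the hypothesis, while the sandwich constraint $-S_{AB}^{T_B}\leq \rho_{AB}^{T_B}\leq S_{AB}^{T_B}$ becomes $-|\rho_{AB}^{T_B}|\leq \rho_{AB}^{T_B}\leq |\rho_{AB}^{T_B}|$, which holds for every Hermitian operator. Its objective value is $\tr[S_{AB}]=\tr[|\rho_{AB}^{T_B}|]=\|\rho_{AB}^{T_B}\|_1$, so $E_\kappa(\rho_{AB})\leq E_N(\rho_{AB})$; combined with the lower bound this yields \eqref{eq:kappa-equal-log-neg}.

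The only delicate step is the lower bound: one cannot simply conclude the operator inequality $S_{AB}^{T_B}\geq |\rho_{AB}^{T_B}|$ from the sandwich constraint, since this implication fails for noncommuting operators, so the projection-and-trace argument above is needed to extract the trace-norm estimate. Everything else is routine verification, and because the spectral decomposition into positive and negative parts together with the projections onto their supports still makes sense for a trace-class operator, the argument carries over to the separable-Hilbert-space setting without modification.
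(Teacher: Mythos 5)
Your proof is correct, and it departs from the paper's in an instructive way on both parts. For the inequality \eqref{eq:kappa-greater-than-log-neg}, the paper goes through the dual program: it exhibits the dual feasible point $(V_{AB},W_{AB})=(\Pi_{AB}^{P},\Pi_{AB}^{N})$ built from the projections onto the positive and negative parts of $\rho_{AB}^{T_{B}}$, concludes $E_{\kappa}^{\text{dual}}(\rho_{AB})\geq E_{N}(\rho_{AB})$, and then invokes weak duality \eqref{eq:weak-dual-kappa}. You instead bound every primal feasible $S_{AB}$ directly, pairing the constraints $S_{AB}^{T_{B}}\geq\pm\rho_{AB}^{T_{B}}$ with those same projections $\Pi_{P},\Pi_{N}$ and using $S_{AB}^{T_{B}}\geq 0$ to discard the contribution of $\1_{AB}-\Pi_{P}-\Pi_{N}$. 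The two arguments have the same content---your test operators are exactly the paper's dual variables, and your estimate is weak duality carried out by hand for this particular dual point---but yours is self-contained: it needs neither the derivation of the dual SDP nor the weak-duality inequality, which is a small advantage in the separable-Hilbert-space setting where duality statements require care. Your closing remark, that one cannot upgrade the sandwich constraint to the operator inequality $S_{AB}^{T_{B}}\geq|\rho_{AB}^{T_{B}}|$, correctly identifies why some projection/dual argument is unavoidable here. For the equality \eqref{eq:kappa-equal-log-neg}, both you and the paper exhibit a primal feasible point with objective value $\|\rho_{AB}^{T_{B}}\|_{1}$, but your choice $S_{AB}=|\rho_{AB}^{T_{B}}|^{T_{B}}$ is the precise one: positivity of $S_{AB}$ is exactly the hypothesis, and the constraint in \eqref{eq:a prime} becomes $-|\rho_{AB}^{T_{B}}|\leq\rho_{AB}^{T_{B}}\leq|\rho_{AB}^{T_{B}}|$, which holds for any Hermitian operator. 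The paper instead writes the choice as $S_{AB}=|\rho_{AB}^{T_{B}}|$, which read literally would require verifying $|\rho_{AB}^{T_{B}}|^{T_{B}}\geq\pm\rho_{AB}^{T_{B}}$, a condition not immediately implied by the hypothesis $|\rho_{AB}^{T_{B}}|^{T_{B}}\geq0$ since the partial transpose does not preserve operator inequalities; your version is the correct reading (it is the paper's choice after the substitution $S_{AB}\rightarrow S_{AB}^{T_{B}}$), so on this point your write-up is actually tighter than the paper's.
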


\begin{proof}
Consider from the dual formulation of $E_{\kappa}(\rho_{AB})$ in \eqref{eq:a dual} that%
\begin{equation}
E^{\text{dual}}_{\kappa}(\rho_{AB})  =\sup\log_2 \{ \operatorname{Tr}\rho_{AB}(V_{AB}-W_{AB}):  V_{AB}+W_{AB}\leq \1_{AB},\ \ V_{AB}^{T_{B}},\ W_{AB}^{T_{B}%
}\geq0 \}.
\end{equation}
Using the fact that the transpose map is its own adjoint, we have that%
\begin{equation}
E^{\text{dual}}_{\kappa}(\rho_{AB})  =\sup\log_2\{\operatorname{Tr}\rho_{AB}^{T_{B}}%
(V_{AB}^{T_{B}}-W_{AB}^{T_{B}})
 :  V_{AB}+W_{AB}\leq \1_{AB},\ \ V_{AB}^{T_{B}},\ W_{AB}^{T_{B}%
}\geq 0 \} .
\end{equation}
Then by a substitution, we can write this as%
\begin{equation}
E^{\text{dual}}_{\kappa}(\rho_{AB})   =\sup\log_2 \{ \operatorname{Tr}\rho_{AB}^{T_{B}}%
(V_{AB}-W_{AB}) :
 V_{AB}^{T_{B}}+W_{AB}^{T_{B}}\leq \1_{AB},\ \ V_{AB}%
,\ W_{AB}\geq0 \} .
\label{eq:dual-for-log-neg}
\end{equation}
Consider a decomposition of $\rho_{AB}^{T_{B}}$ into its positive and negative
part%
\begin{equation}
\rho_{AB}^{T_{B}}=P_{AB}-N_{AB}.
\end{equation}
Let $\Pi_{AB}^{P}$ be the projection onto the positive part, and let $\Pi
_{AB}^{N}$ be the projection onto the negative part. Consider that%
\begin{equation}
\left\vert \rho_{AB}^{T_{B}}\right\vert =P_{AB}+N_{AB}.
\end{equation}
Then we can pick $V_{AB}=\Pi_{AB}^{P}\geq0$ and $W_{AB}=\Pi_{AB}^{N}\geq0$ in \eqref{eq:dual-for-log-neg}, to
find that%
\begin{align}
\operatorname{Tr}\rho_{AB}^{T_{B}}(\Pi_{AB}^{P}-\Pi_{AB}^{N})  &
=\operatorname{Tr}\left(  P_{AB}-N_{AB}\right)  (\Pi_{AB}^{P}-\Pi_{AB}^{N})\\
& =\operatorname{Tr}P_{AB}\Pi_{AB}^{P}+N_{AB}\Pi_{AB}^{N}\\
& =\operatorname{Tr}P_{AB}+N_{AB}\\
& =\operatorname{Tr}\left\vert \rho_{AB}^{T_{B}}\right\vert 
 =\left\Vert \rho_{AB}^{T_{B}}\right\Vert _{1}.
\end{align}
Furthermore, we have for this choice that%
\begin{align}
V_{AB}^{T_{B}}+W_{AB}^{T_{B}}  & =\left(  \Pi_{AB}^{P}\right)  ^{T_{B}%
}+\left(  \Pi_{AB}^{N}\right)  ^{T_{B}}\\
& =\left(  \Pi_{AB}^{P}+\Pi_{AB}^{N}\right)  ^{T_{B}}
 =\1_{AB}^{T_{B}}
 =\1_{AB}.
\end{align}
So this implies the inequality in \eqref{eq:kappa-greater-than-log-neg}, after combining with \eqref{eq:weak-dual-kappa}.

If $\rho_{AB}$ satisfies the condition $|\rho_{AB}^{T_B}|^{T_B} \geq 0$, then we pick $S_{AB} = |\rho_{AB}^{T_B}|$ in \eqref{eq:a prime} and conclude that
\begin{equation}
E_\kappa(\rho_{AB}) \leq E_N(\rho_{AB}).
\end{equation}
Combining with \eqref{eq:kappa-greater-than-log-neg} gives \eqref{eq:kappa-equal-log-neg} for this special case.
\end{proof}

\subsection{Normalization, faithfulness, no convexity, no monogamy}

In this section, we prove that $E_\kappa$ is normalized on maximally entangled states, and for finite-dimensional states, that it achieves its largest value on maximally entangled states. We also show that $E_\kappa$ is faithful, in the sense that it is non-negative and equal to zero if and only if the state is a PPT state. Finally, we provide simple examples that demonstrate that $E_\kappa$ is neither convex nor monogamous.

\begin{proposition}
[Normalization]\label{prop:normalize MES}Let $\Phi_{AB}^{M}$ be a maximally entangled state of Schmidt
rank~$M$. Then%
\begin{equation}\label{eq:normalize MES}
E_{\kappa}(\Phi_{AB}^{M})=\log_2 M.
\end{equation}
Furthermore, for any bipartite state $\rho_{AB}$, the following bound holds
\begin{equation}\label{eq:dim bound of EK}
E_{\kappa}(\rho_{AB})\le \log_2 \min\{d_A,  d_B\},
\end{equation}
where $d_A$ and $d_B$ denote the dimensions of systems $A$ and $B$, respectively.
\end{proposition}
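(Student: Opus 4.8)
The plan is to treat the two claims separately: I would establish \eqref{eq:normalize MES} by matching an SDP-feasible upper bound against a logarithmic-negativity lower bound, and \eqref{eq:dim bound of EK} by exhibiting an explicit feasible operator for the primal SDP of Definition~\ref{def:kappa-ent}.

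For the maximally entangled state I would first compute the partial transpose. Since $\Phi_{AB}^M=\frac1M\sum_{i,j}\ketbra{ii}{jj}$, one has $(\Phi_{AB}^M)^{T_B}=\frac1M F_{AB}$, where $F_{AB}=\sum_{i,j}\ketbra{ij}{ji}$ is the swap operator, satisfying $F_{AB}^2=\1_{AB}$, hence $-\1_{AB}\le F_{AB}\le\1_{AB}$ and $|F_{AB}|=\1_{AB}$. For the upper bound $E_\kappa(\Phi_{AB}^M)\le\log_2 M$ I would take the feasible point $S_{AB}=\frac1M\1_{AB}$: then $S_{AB}\ge0$, $S_{AB}^{T_B}=\frac1M\1_{AB}\ge\pm\frac1M F_{AB}=\pm(\Phi_{AB}^M)^{T_B}$, and $\tr[S_{AB}]=\frac1M\cdot M^2=M$. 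For the matching lower bound I would invoke $E_\kappa\ge E_N$ from Proposition~\ref{prop:connect-to-log-neg} together with $E_N(\Phi_{AB}^M)=\log_2\|(\Phi_{AB}^M)^{T_B}\|_1=\log_2(\tr|F_{AB}|/M)=\log_2 M$. In fact, because $|(\Phi_{AB}^M)^{T_B}|^{T_B}=(\frac1M\1_{AB})^{T_B}=\frac1M\1_{AB}\ge0$, the binegativity hypothesis of Proposition~\ref{prop:connect-to-log-neg} holds and directly yields $E_\kappa=E_N=\log_2 M$.

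For the dimension bound the crux is the operator inequality
\begin{equation}
-\1_A\otimes\rho_B \le \rho_{AB}^{T_A}\le \1_A\otimes\rho_B ,
\end{equation}
where I use that $E_\kappa$ is unchanged if the partial transpose is taken on $A$ rather than $B$ (apply the global transpose, a positive and trace-preserving map, to the constraints in Definition~\ref{def:kappa-ent}, noting $(X^{T_B})^{T}=X^{T_A}$ and replacing the variable $S_{AB}$ by $S_{AB}^{T}$). Granting this inequality, $S_{AB}=\1_A\otimes\rho_B$ is feasible with $\tr[S_{AB}]=d_A\tr[\rho_B]=d_A$, so $E_\kappa(\rho_{AB})\le\log_2 d_A$; the symmetric argument with the transpose on $B$ and $S_{AB}=\rho_A\otimes\1_B$ gives $E_\kappa(\rho_{AB})\le\log_2 d_B$, and taking the smaller of the two yields \eqref{eq:dim bound of EK}.

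To prove the key inequality I would reduce to pure states: both $\1_A\otimes\rho_B-\rho_{AB}^{T_A}$ and $\1_A\otimes\rho_B+\rho_{AB}^{T_A}$ are linear in $\rho_{AB}$, so writing $\rho_{AB}=\sum_k p_k\psi^k_{AB}$ it suffices to prove positivity for each pure $\psi_{AB}=\ketbra{\psi}{\psi}$ with marginal $\psi_B$. Using a Schmidt decomposition $\ket\psi=\sum_i\sqrt{\lambda_i}\ket{ii}$, the operators $\1_A\otimes\psi_B\mp\psi_{AB}^{T_A}$ are block-diagonal in the Schmidt basis: the blocks on $\ket{ii}$ vanish for the minus sign and equal $2\lambda_i\ge0$ for the plus sign, while each pair $\{\ket{ij},\ket{ji}\}$ with $i\ne j$ gives the block $\left[\begin{smallmatrix}\lambda_j & \mp\sqrt{\lambda_i\lambda_j}\\ \mp\sqrt{\lambda_i\lambda_j} & \lambda_i\end{smallmatrix}\right]$, whose determinant is $\lambda_i\lambda_j-\lambda_i\lambda_j=0$ and whose trace is nonnegative, so it is positive semidefinite. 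Summing the blocks establishes the inequality for pure states and hence in general. The main obstacle is exactly this operator inequality; once the Schmidt-basis block structure is identified, the vanishing $2\times2$ determinant makes it routine, and the symmetry and trace bookkeeping are immediate. (Alternatively, \eqref{eq:dim bound of EK} follows operationally from monotonicity (Theorem~\ref{prop:ent-monotone}) and \eqref{eq:normalize MES}, since $\rho_{AB}$ can be prepared from $\Phi^{\min\{d_A,d_B\}}$ by teleporting the smaller subsystem, an LOCC and hence completely-PPT-preserving channel.)
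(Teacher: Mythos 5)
Your proof is correct; your treatment of \eqref{eq:normalize MES} coincides with the paper's, while your main argument for the dimension bound \eqref{eq:dim bound of EK} takes a genuinely different route. For \eqref{eq:normalize MES} the paper does exactly what your closing remark suggests: it verifies the binegativity condition $|(\Phi_{AB}^{M})^{T_B}|^{T_B}=\frac{1}{M}\1_{AB}\ge 0$ and invokes the equality case of Proposition~\ref{prop:connect-to-log-neg}; your explicit feasible point $S_{AB}=\frac{1}{M}\1_{AB}$ together with the $E_N$ lower bound is the same content made slightly more explicit. For \eqref{eq:dim bound of EK} the paper argues operationally (this is precisely your final parenthetical): assuming $d_A\le d_B$, Bob prepares $\rho_{AB}$ locally and teleports system $A$ through $\Phi^{d_A}$, so an LOCC and hence completely-PPT-preserving channel converts $\Phi^{d_A}$ into $\rho_{AB}$, and Theorem~\ref{prop:ent-monotone} combined with \eqref{eq:normalize MES} gives the bound. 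Your primary argument instead exhibits an explicit feasible operator for the SDP of Definition~\ref{def:kappa-ent}, resting on the inequality $\pm\rho_{AB}^{T_A}\le\1_A\ox\rho_B$, which you correctly reduce to pure states by linearity and verify via the $2\times 2$ Schmidt-block computation with vanishing determinant (just note that the Schmidt bases must be completed to full orthonormal bases of $A$ and $B$, with zero Schmidt coefficients on the complement, so that the blocks exhaust the whole space). What each buys: your route is constructive and self-contained at the SDP level, needing neither the monotonicity theorem nor any appeal to teleportation, and the operator inequality is a useful standalone fact; the paper's route is shorter given its already-developed machinery and keeps the operational meaning in view. One cleanup: your side-switching step is stated in a slightly garbled way---no change of variable $S_{AB}\to S_{AB}^{T}$ is needed, since applying the global transpose (which preserves positive semidefiniteness) to the constraint operators shows that the very same $S$ satisfying the $T_A$ constraints also satisfies the $T_B$ constraints, because $(X^{T_A})^{T_{AB}}=X^{T_B}$. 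Even simpler, you can avoid switching sides altogether by choosing $S_{AB}=\1_A\ox\rho_B^{T}$, so that $S_{AB}^{T_B}=\1_A\ox\rho_B$ and the identical block computation yields $\pm\rho_{AB}^{T_B}\le\1_A\ox\rho_B$, directly in the form required by Definition~\ref{def:kappa-ent}.
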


\begin{proof}
Consider that $\Phi_{AB}^{M}$ satisfies the condition 
$\left\vert (\Phi_{AB}^{M})^{T_{B}}\right\vert ^{T_{B}}\geq 0$
because
\begin{equation}
\left\vert (\Phi_{AB}^{M})^{T_{B}}\right\vert ^{T_{B}}=\frac{1}{M}\left\vert
F_{AB}\right\vert ^{T_{B}}=\frac{1}{M}\left(  \1_{AB}\right)  ^{T_{B}}=\frac
{1}{M}\1_{AB}\geq0,
\end{equation}
where $F_{AB}$ is the unitary swap operator, such that $F_{AB}=\Pi
_{AB}^{\mathcal{S}}-\Pi_{AB}^{\mathcal{A}}$, with $\Pi_{AB}^{\mathcal{S}}$ and
$\Pi_{AB}^{\mathcal{A}}$ the respective projectors onto the symmetric and
antisymmetric subspaces. Thus, by Proposition~\ref{prop:connect-to-log-neg}, it follows that%
\begin{align}
E_{\kappa}(\Phi_{AB}^{M})  & =E_{N}(\Phi_{AB}^{M})=\log_2\left\Vert (\Phi
_{AB}^{M})^{T_{B}}\right\Vert _{1}=\log_2\operatorname{Tr}\left\vert (\Phi
_{AB}^{M})^{T_{B}}\right\vert \\
& =\log_2\operatorname{Tr}\frac{1}{M}\left\vert F_{AB}\right\vert =\log_2 \frac{1}%
{M} \operatorname{Tr}\1_{AB}=\log_2 M,
\end{align}
demonstrating \eqref{eq:normalize MES}.

To see \eqref{eq:dim bound of EK}, let us suppose without loss of generality that $d_A\le d_B$.
Given the bipartite state $\rho_{AB}$, Bob can first locally prepare a state $\rho_{AB}$ and teleport the $A$ system to Alice using a maximally entangled state $\Phi^{d_A}$ shared with Alice, which implies that there exists a completely-PPT-preserving channel that converts $\Phi^{d_A}$ to $\rho_{AB}$. Therefore, by the monotonicity of $E_\kappa$ with respect to completely-PPT-preserving channels (Theorem~\ref{prop:ent-monotone}), we find that
\begin{equation}
\log_2 d_A=\EK(\Phi^{d_A}) \ge \EK(\rho_{AB}).
\end{equation}
This concludes the proof.
\end{proof}

\begin{proposition}
[Faithfulness]
\label{prop:faithfulness}For a state $\rho_{AB}$ acting on a separable Hilbert space, we
have that $E_{\kappa}(\rho_{AB})\geq0$ and $E_{\kappa}(\rho_{AB})=0$ if and
only if $\rho_{AB}^{T_{B}}\geq 0$.
\end{proposition}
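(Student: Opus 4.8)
The plan is to prove the three assertions---non-negativity, the ``if'' direction, and the ``only if'' direction---by leaning on the already-established inequality $E_{\kappa}(\rho_{AB})\geq E_{N}(\rho_{AB})$ from Proposition~\ref{prop:connect-to-log-neg}, together with the elementary fact that any Hermitian trace-class operator $X$ obeys $\|X\|_{1}\geq\operatorname{Tr}[X]$, with equality precisely when $X\geq 0$ (since $\operatorname{Tr}|X|-\operatorname{Tr}X=2\operatorname{Tr}X_{-}$, which vanishes iff the negative part $X_{-}$ vanishes).

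For non-negativity, I would use that the partial transpose preserves trace, so that $\operatorname{Tr}[\rho_{AB}^{T_{B}}]=\operatorname{Tr}[\rho_{AB}]=1$ and hence $E_{N}(\rho_{AB})=\log_{2}\|\rho_{AB}^{T_{B}}\|_{1}\geq\log_{2}\operatorname{Tr}[\rho_{AB}^{T_{B}}]=0$; combined with $E_{\kappa}\geq E_{N}$ this yields $E_{\kappa}(\rho_{AB})\geq 0$. (Self-containedly, every feasible $S_{AB}$ satisfies $\rho_{AB}^{T_{B}}\leq S_{AB}^{T_{B}}$, so taking traces and using trace-invariance of $T_{B}$ gives $\operatorname{Tr}[S_{AB}]\geq 1$.)

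For the ``if'' direction I would exhibit an explicit feasible point: assuming $\rho_{AB}^{T_{B}}\geq 0$, the choice $S_{AB}:=\rho_{AB}$ is feasible, since $S_{AB}\geq 0$ and $-S_{AB}^{T_{B}}=-\rho_{AB}^{T_{B}}\leq\rho_{AB}^{T_{B}}\leq\rho_{AB}^{T_{B}}=S_{AB}^{T_{B}}$, where the left-hand inequality is exactly the PPT assumption. This gives $E_{\kappa}(\rho_{AB})\leq\log_{2}\operatorname{Tr}[\rho_{AB}]=0$, and with non-negativity we obtain $E_{\kappa}(\rho_{AB})=0$. For the ``only if'' direction, if $E_{\kappa}(\rho_{AB})=0$ then $0=E_{\kappa}(\rho_{AB})\geq E_{N}(\rho_{AB})=\log_{2}\|\rho_{AB}^{T_{B}}\|_{1}\geq 0$ forces $\|\rho_{AB}^{T_{B}}\|_{1}=\operatorname{Tr}[\rho_{AB}^{T_{B}}]=1$; by the equality condition above, $\rho_{AB}^{T_{B}}\geq 0$.

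The one point requiring care is that $\rho_{AB}$ may act on an infinite-dimensional separable Hilbert space, so the infimum defining $E_{\kappa}$ need not be attained by a minimizer. I expect this to be the main subtlety, and I would sidestep it by routing the ``only if'' argument through the inequality $E_{\kappa}\geq E_{N}$---which already holds in the separable setting by Proposition~\ref{prop:connect-to-log-neg}---rather than through a hypothetical optimal $S_{AB}$. Everything else reduces to verifying feasibility and invoking the stated trace-norm identities.
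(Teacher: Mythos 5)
Your proof is correct and follows essentially the same route as the paper's: feasibility of $S_{AB}=\rho_{AB}$ for the ``if'' direction, and the chain $0=E_{\kappa}(\rho_{AB})\geq E_{N}(\rho_{AB})\geq 0$ together with the equality condition $\|\rho_{AB}^{T_{B}}\|_{1}=\operatorname{Tr}[\rho_{AB}^{T_{B}}]$ forcing the negative part of $\rho_{AB}^{T_{B}}$ to vanish for the ``only if'' direction. The only cosmetic difference is in non-negativity, where the paper plugs the feasible point $V_{AB}=\1_{AB}$, $W_{AB}=0$ into the dual and invokes weak duality, while you bound the primal directly by taking traces of the constraint $\rho_{AB}^{T_{B}}\leq S_{AB}^{T_{B}}$ (or use $E_{N}\geq 0$); both are valid and equally simple.
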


\begin{proof}
To see that $E_{\kappa}(\rho_{AB})\geq0$, take $V_{AB}=\1_{AB}$ and $W_{AB}=0$
in \eqref{eq:a dual}, so that $E_{\kappa}^{\text{dual}}(\rho_{AB})\geq0$. Then we conclude that
$E_{\kappa}(\rho_{AB})\geq0$ from the weak duality inequality in \eqref{eq:weak-dual-kappa}.

Now suppose that $\rho_{AB}^{T_{B}}\geq0$. Then we can set $S_{AB}=\rho_{AB}$ in \eqref{eq:a prime},
so that the conditions $-S_{AB}^{T_B}\le\rho_{AB}^{T_B}\le S_{AB}^{T_B}$ and $  S_{AB}\ge 0$ are satisfied. Then $\operatorname{Tr}S_{AB}=1$, so
that $E_{\kappa}(\rho_{AB})\leq0$. Combining with the fact that $E_{\kappa
}(\rho_{AB})\geq0$ for all states, we conclude that $E_{\kappa}(\rho_{AB})=0$
if $\rho_{AB}^{T_{B}}\geq0$.

Finally, suppose that $E_{\kappa}(\rho_{AB})=0$. Then, by Proposition~\ref{prop:connect-to-log-neg}, $E_{N}(\rho_{AB})=0$, so
that $\left\Vert \rho_{AB}^{T_{B}}\right\Vert _{1}=1$. Decomposing $\rho
_{AB}^{T_{B}}$ into positive and negative parts as $\rho_{AB}^{T_{B}}=P-N$
(such that $P,N\geq0$ and $PN=0$), we have that $1=\operatorname{Tr}\rho
_{AB}=\operatorname{Tr}\rho_{AB}^{T_{B}}=\operatorname{Tr}P-\operatorname{Tr}%
N$. But we also have by assumption that $1=\left\Vert \rho_{AB}^{T_{B}%
}\right\Vert _{1}=\operatorname{Tr}P+\operatorname{Tr}N$. Subtracting these
equations gives $\operatorname{Tr}N=0$, which implies that $N=0$. From this,
we conclude that $\rho_{AB}^{T_{B}}=P\geq0$.
\end{proof}

\begin{proposition}[No convexity]
The $\kappa$-entanglement measure is not generally convex.
\end{proposition}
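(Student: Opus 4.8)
The plan is to refute convexity by a single explicit two-qubit counterexample, computing $\EK$ at two endpoints and at their midpoint and observing that the convexity inequality fails strictly. Following the construction in the main text, I take $\rho_1 := \Phi_2$ (the two-qubit maximally entangled state), $\rho_2 := \frac{1}{2}(\proj{00}+\proj{11})$, and their average $\rho := \frac{1}{2}(\rho_1+\rho_2)$. Convexity would require $\EK(\rho) \le \frac{1}{2}(\EK(\rho_1)+\EK(\rho_2))$, so it suffices to evaluate all three quantities and exhibit the reverse strict inequality.

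The two endpoint values follow immediately from properties already established. First, $\EK(\rho_1) = \EK(\Phi_2) = 1$ by normalization (Proposition~\ref{prop:normalize MES}). Second, $\rho_2$ is diagonal in the computational basis, so $\rho_2^{T_B} = \rho_2 \ge 0$; hence $\rho_2$ is a PPT state and $\EK(\rho_2)=0$ by faithfulness (Proposition~\ref{prop:faithfulness}).

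The only genuine computation is $\EK(\rho)$, which I would route through the logarithmic negativity via Proposition~\ref{prop:connect-to-log-neg}. Writing $\rho$ and then $\rho^{T_B}$ in the basis $\{\ket{00},\ket{01},\ket{10},\ket{11}\}$, one finds that $\rho^{T_B}$ is block-diagonal: the $\ket{00}$ and $\ket{11}$ diagonal entries each contribute $1/2$, while the $\{\ket{01},\ket{10}\}$ block contributes eigenvalues $\pm 1/4$. The negative eigenvalue confirms that $\rho$ is NPT. Passing to $|\rho^{T_B}|$ replaces the $\pm 1/4$ pair by $1/4,1/4$, so $|\rho^{T_B}|$ is diagonal and therefore satisfies $|\rho^{T_B}|^{T_B} = |\rho^{T_B}| \ge 0$. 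This is exactly the hypothesis of the equality case of Proposition~\ref{prop:connect-to-log-neg}, giving $\EK(\rho) = E_N(\rho) = \log_2 \|\rho^{T_B}\|_1 = \log_2 \frac{3}{2}$.

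Assembling the three values, I obtain $\EK(\rho) = \log_2 \frac{3}{2} > \frac{1}{2} = \log_2 \sqrt{2} = \frac{1}{2}(\EK(\rho_1)+\EK(\rho_2))$, where the strict inequality is just $3/2 > \sqrt{2}$; this contradicts convexity. The only (minor) obstacle is the evaluation of $\EK(\rho)$, and the key simplification is recognizing that the midpoint state satisfies $|\rho^{T_B}|^{T_B} \ge 0$, which collapses the defining SDP for $\EK$ to the directly computable logarithmic negativity. No regularization or asymptotics are required, since the claim concerns only the single-letter measure $\EK$.
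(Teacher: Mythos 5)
Your proof is correct and takes essentially the same route as the paper: the identical counterexample $\rho_1=\Phi_2$, $\rho_2=\frac{1}{2}(\proj{00}+\proj{11})$, $\rho=\frac{1}{2}(\rho_1+\rho_2)$, with the midpoint value $\EK(\rho)=\log_2\frac{3}{2}$ obtained by collapsing $\EK$ to the logarithmic negativity via Proposition~\ref{prop:connect-to-log-neg}. The only difference is that the paper invokes the general two-qubit fact $|\rho_{AB}^{T_B}|^{T_B}\geq 0$ from \cite{Ishizaka2004a} to equate $\EK$ and $E_N$ for all three states, whereas you verify that hypothesis by direct computation for the one state where it is actually needed and handle the endpoints through normalization (Proposition~\ref{prop:normalize MES}) and faithfulness (Proposition~\ref{prop:faithfulness}), making your argument self-contained but otherwise the same.
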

\begin{proof}
Due to Proposition~\ref{prop:connect-to-log-neg} and the fact that $|\rho_{AB}^{T_B}|^{T_B} \geq 0$ holds for any two-qubit state \cite{Ishizaka2004a},  the non-convexity of $\EK$ boils down to  finding a two-qubit example for which the logarithmic negativity is not convex. In particular, let us choose the two-qubit states $\rho_1=\Phi_2$,  $\rho_2=\frac{1}{2}(\proj{00}+\proj{11})$, and their average $\rho=\frac{1}{2} (\rho_1+\rho_2)$.
By direct calculation, we obtain 
\begin{align}
\EK(\rho_1)&=E_N(\rho_1)=1,\\
\EK(\rho_2)&=E_N(\rho_2)=0,\\
\EK(\rho)&=E_N(\rho)=\log_2 \frac{3}{2}.
\end{align}
Therefore, we have	
\begin{align}
\EK(\rho) 
		> \frac 1 2(\EK(\rho_1)+\EK(\rho_2)),
\end{align}
which concludes the proof.
\end{proof}

\bigskip

 An entanglement measure $E$ is monogamous \cite{CKW00,T04,KWin04} if the following inequality holds for all tripartite states $\rho_{ABC}$:
\begin{align}\label{eq:monogamy}
E(\rho_{AB})+E(\rho_{AC})\le E(\rho_{A(BC)}),
\end{align}
where the entanglement in $E(\rho_{A(BC)})$ is understood to be with respect to the bipartite cut between systems $A$ and $BC$.
It is known that some entanglement measures satisfy the monogamy inequality above \cite{CKW00,KWin04}.
However, the $\kappa$-entanglement measure is not monogamous, as we show by example in what follows.

\begin{proposition}[No monogamy]
The $\kappa$-entanglement measure is not generally monogamous.
\end{proposition}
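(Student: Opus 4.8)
The plan is to prove non-monogamy by exhibiting the explicit pure tripartite state already announced in the main text, namely $\ket\psi_{ABC}=\frac12(\ket{000}+\ket{011}+\sqrt2\ket{110})$, and showing that it violates the monogamy inequality~\eqref{eq:monogamy}. Since $\EK=\EPPT$ by Theorem~\ref{thm:PPT cost}, I would work throughout with $\EK$. The argument has three stages: (i) evaluate $\EK(\psi_{A(BC)})$ for the cut $A|BC$; (ii) evaluate $\EK(\psi_{AB})$ and $\EK(\psi_{AC})$ for the two marginals; and (iii) compare the resulting numbers against~\eqref{eq:monogamy}.

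For stage (i), I would regroup the amplitudes according to system $A$, writing $\ket\psi = \ket0_A\ox\frac12(\ket{00}+\ket{11})_{BC}+\ket1_A\ox\frac1{\sqrt2}\ket{10}_{BC}$. The two $BC$ vectors appearing here are orthogonal and each has squared norm $1/2$, so this display is already a Schmidt decomposition with both Schmidt coefficients equal to $1/2$. Hence $\psi_{A(BC)}$ is, up to a local isometry on $BC$, a maximally entangled state of Schmidt rank $2$. Invoking the normalization property (Proposition~\ref{prop:normalize MES}) together with the invariance of $\EK$ under local isometries (a consequence of monotonicity in both directions under completely-PPT-preserving channels, Theorem~\ref{prop:ent-monotone}), I obtain $\EK(\psi_{A(BC)})=\log_2 2 = 1$.

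For stage (ii), I would compute the marginals $\psi_{AB}=\tr_C[\proj{\psi}]$ and $\psi_{AC}=\tr_B[\proj{\psi}]$ explicitly as $4\times4$ matrices, form the partial transposes $\psi_{AB}^{T_B}$ and $\psi_{AC}^{T_C}$, and read off their spectra. In each case the partial transpose reduces to diagonal entries together with a single $2\times2$ off-diagonal block of the form $\bigl(\begin{smallmatrix}1/4 & \sqrt2/4\\ \sqrt2/4 & 0\end{smallmatrix}\bigr)$, whose eigenvalues are $1/2$ and $-1/4$; the full spectrum is $\{1/4,1/2,1/2,-1/4\}$ for both marginals, giving trace norm $3/2$ and hence $E_N(\psi_{AB})=E_N(\psi_{AC})=\log_2(3/2)$. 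Because $\psi_{AB}$ and $\psi_{AC}$ are two-qubit states, the condition $|\rho^{T_B}|^{T_B}\ge0$ holds automatically~\cite{Ishizaka2004a}, so Proposition~\ref{prop:connect-to-log-neg} upgrades the negativity values to $\EK(\psi_{AB})=\EK(\psi_{AC})=\log_2(3/2)$.

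Stage (iii) is then immediate: $\EK(\psi_{AB})+\EK(\psi_{AC})=2\log_2(3/2)=\log_2(9/4)>\log_2 2=\EK(\psi_{A(BC)})$, contradicting~\eqref{eq:monogamy} and establishing the claim. The individual computations are elementary, so there is no deep obstacle; the only points demanding care are the bookkeeping of the partial transpose on the correct subsystem ($B$ for $\psi_{AB}$ but $C$ for $\psi_{AC}$), and the justification that $\EK$ coincides with $E_N$ on the two marginals, which is precisely where the two-qubit hypothesis and Proposition~\ref{prop:connect-to-log-neg} are essential.
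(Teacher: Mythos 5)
Your proposal is correct and follows essentially the same route as the paper's own proof: the same state $\ket\psi_{ABC}$, the same regrouping showing $\psi_{A(BC)}$ is maximally entangled of Schmidt rank $2$ across the $A|BC$ cut, the same marginal spectra $\{1/4,1/2,1/2,-1/4\}$ giving $E_N=\log_2(3/2)$, and the same upgrade $\EK=E_N$ on two-qubit states via Proposition~\ref{prop:connect-to-log-neg} and the binegativity fact of \cite{Ishizaka2004a}. The only cosmetic difference is that you phrase the $A|BC$ evaluation via Schmidt decomposition plus local-isometry invariance, while the paper states local equivalence to $\ket{\Phi}_{AB}\otimes\ket{0}_C$ directly; these are the same argument.
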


\begin{proof}
Consider a state ${\ket\psi}{\bra \psi}_{ABC}$ of three qubits, where
\begin{align}
{\ket\psi}_{ABC}=\frac{1}{2}(\ket{000}_{ABC}+\ket{011}_{ABC}+\sqrt 2 \ket {110}_{ABC}).
\end{align}
Due to the fact that $\ket\psi_{ABC}$ can be written as
\begin{equation}
\ket\psi_{ABC} = [\ket 0_A \otimes \ket \Phi_{BC} + \ket 1_A \otimes \ket{10}_{BC}]/\sqrt{2},
\end{equation}
where $\ket{ \Phi}_{BC} = [\ket{00}_{BC} + \ket{11}_{BC} ] / \sqrt{2}$ and $\ket{ \Phi}_{BC}$ is orthogonal to $\ket{10}_{BC}$, this state is locally equivalent to $\ket{\Phi}_{AB}\otimes \ket 0_C$ with respect to the bipartite cut $A|BC$.
By direct calculation, one then finds that
\begin{align}
\EK(\psi_{A(BC)}) & =\EK(\Phi_{AB})=E_N(\Phi_{AB})=1 ,\\ 
	\EK(\psi_{AB}) & =E_N(\psi_{AB}) =\log_2 \frac 3 2, \\
	\EK(\psi_{AC}) & =E_N(\psi_{AC})= \log_2 \frac 3 2,
	\end{align}
which implies that
\begin{align}
\EK(\psi_{AB})+\EK(\psi_{AC}) >\EK(\psi_{A(BC)}).
\end{align}
This concludes the proof.
\end{proof}

\section{$\kappa$-entanglement measure is equal to the exact PPT-entanglement cost}

\begin{theorem}
\label{th:exact cost}
Let $\rho_{AB}$ be a  bipartite state acting on a separable Hilbert space. Then the exact PPT-entanglement cost of $\rho_{AB}$ is given by
	\begin{align}
	\label{eq:main result EPPT}		\EPPT(\rho_{AB})=E_\kappa(\rho_{AB}).
	\end{align}
\end{theorem}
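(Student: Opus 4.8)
The plan is to derive \eqref{eq:main result EPPT} by a squeeze argument that feeds the tensor powers $\rho_{AB}^{\ox n}$ into the one-shot bounds of Proposition~\ref{lemma: one-shot} and then collapses the regularization using the additivity of $\kappa$-entanglement from Proposition~\ref{lemma:add}. First I would dispose of the degenerate case: if $\rho_{AB}^{T_B}\ge 0$, then faithfulness (Proposition~\ref{prop:faithfulness}) gives $E_\kappa(\rho_{AB})=0$, while a PPT state is freely preparable by a completely-PPT-preserving channel so that $\EPPT(\rho_{AB})=0$ as well, and \eqref{eq:main result EPPT} holds trivially. I would then focus on the NPT case, where $E_\kappa(\rho_{AB})>0$, which is exactly the property that makes the limit clean.

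Applying Proposition~\ref{lemma: one-shot} to $\rho_{AB}^{\ox n}$ and then invoking the additivity identity $E_\kappa(\rho_{AB}^{\ox n})=n\,E_\kappa(\rho_{AB})$ from Proposition~\ref{lemma:add}, I obtain
\[
\log_2\!\left(2^{n E_\kappa(\rho_{AB})}-1\right)\le \EPPTone(\rho_{AB}^{\ox n})\le \log_2\!\left(2^{n E_\kappa(\rho_{AB})}+2\right).
\]
Dividing by $n$ and writing $E_\kappa \equiv E_\kappa(\rho_{AB})$, I would factor $2^{n E_\kappa}$ out of each logarithm to rewrite this as
\[
E_\kappa + \tfrac{1}{n}\log_2\!\left(1 - 2^{-nE_\kappa}\right) \le \tfrac{1}{n}\EPPTone(\rho_{AB}^{\ox n}) \le E_\kappa + \tfrac{1}{n}\log_2\!\left(1 + 2\cdot 2^{-nE_\kappa}\right).
\]
Since $E_\kappa>0$, we have $2^{-nE_\kappa}\to 0$ as $n\to\infty$, so both correction terms vanish in the limit. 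By the squeeze theorem, the sequence $\tfrac{1}{n}\EPPTone(\rho_{AB}^{\ox n})$ converges to $E_\kappa$; in particular the limit exists, so its limit superior equals $E_\kappa$. Recalling the definition $\EPPT(\rho_{AB})=\limsup_{n\to\infty}\tfrac{1}{n}\EPPTone(\rho_{AB}^{\ox n})$ from \eqref{eq:asympt-cost}, this establishes $\EPPT(\rho_{AB})=E_\kappa(\rho_{AB})$.

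I do not anticipate a genuine obstacle at this final stage, since the two substantive ingredients have already been secured: the tight one-shot characterization of Proposition~\ref{lemma: one-shot} and, most importantly, the additivity of $E_\kappa$, which is what converts $E_\kappa(\rho_{AB}^{\ox n})$ into $n\,E_\kappa(\rho_{AB})$ and thereby makes the regularized rate single-letter. The only point requiring care is the role of the hypothesis $E_\kappa>0$: the factored bounds show the $\pm$ corrections are of order $2^{-nE_\kappa}/n$ and vanish \emph{precisely because} $E_\kappa$ is strictly positive, so separating out the PPT case at the outset is what keeps the squeeze valid. As a bonus, this argument shows that the regularized quantity is in fact an honest limit rather than merely a limit superior, which is slightly stronger than the theorem demands and matches the remark in the main text that the limit inferior and limit superior coincide here.
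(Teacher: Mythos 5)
Your proof is correct and takes essentially the same route as the paper's: feed $\rho_{AB}^{\otimes n}$ into the one-shot bounds of Proposition~\ref{lemma: one-shot}, collapse $E_\kappa(\rho_{AB}^{\otimes n})$ to $n E_\kappa(\rho_{AB})$ via the additivity of Proposition~\ref{lemma:add}, and take the limit. If anything, your write-up is more careful than the paper's terse proof (which only spells out the upper bound and says the lower bound follows "by a similar method"), since you make explicit the PPT/NPT case split needed for the lower bound to be non-vacuous and you note that the limit genuinely exists rather than being merely a limit superior.
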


\begin{proof}
The main idea behind the proof is to employ the one-shot bound in Proposition~\ref{lemma: one-shot} and then the additivity relation from Proposition~\ref{lemma:add}.
Consider that
\begin{align}
	\EPPT(\rho_{AB})&=\limsup_{n\to \infty}\frac{1}{n} \EPPTone(\rho_{AB}^{\ox n})\\
	&\le \limsup_{n\to \infty}\frac{1}{n}\log (2^{E_\kappa(\rho_{AB}^{\ox n})}+2)\\
	&=\limsup_{n\to \infty}\frac{1}{n}\log (2^{nE_\kappa(\rho_{AB})}+2)\\
	&=E_\kappa(\rho_{AB}).
	\end{align}
By a similar method, it is easy to show that $\EPPT(\rho_{AB})\ge E_\kappa(\rho_{AB})$.
\end{proof}

\section{Irreversibility of exact PPT entanglement manipulation}

\label{sec:examples-states irreversible}

The following example indicates the irreversibility of exact PPT entanglement manipulation, and it also implies that $\EPPT = E_{\kappa}$ is generally not equal to the logarithmic negativity $E_N$.
Consider the following rank-two state supported on the $3\times 3$ antisymmetric subspace \cite{Wang2016d}:
\begin{equation}
\rho^{v}_{AB} : =\frac{1}{2}(\proj{v_1}_{AB}+\proj{v_2}_{AB})
\end{equation}
with 
\begin{align}
\ket {v_1}_{AB} & := (\ket {01}_{AB}-\ket{10}_{AB})/{\sqrt 2},\\
 \ket {v_2}_{AB} & := (\ket {02}_{AB}-\ket{20}_{AB})/{\sqrt 2}.
\end{align}
For the state $\rho^{v}_{AB}$, 
the following holds
\begin{align}
R_{\max}(\rho^v_{AB}) & = E_N(\rho^{v}_{AB})
=\log_2\! \left(1+\frac{1}{\sqrt 2}\right) \\
& < \EPPT(\rho^v_{AB}) = E_{\kappa}(\rho^v_{AB}) = 1 \\
& < \log_2 Z(\rho^v_{AB})= 
\log_2 \!\left(1+\frac{13}{4\sqrt 2}\right),
\label{eq:strict-APE}
\end{align}
where $R_{\max}(\rho_v)$ denotes the max-Rains relative entropy \cite{WD16pra}.
The strict inequalities in \eqref{eq:strict-APE} also imply that both the lower and upper bounds from \eqref{eq:ape-bnds}, i.e., from \cite{Audenaert2003}, are generally not tight.

\section{Separations between $E_\kappa$ and $E_N$}

In this section, we show that  $E_\kappa$ is generally not equal to $E_N$. To do so, we employ numerical analysis of several classes of concrete examples. The codes for these numerical experiments are available \href{https://github.com/xinwang1/kappa-entanglement}{\textcolor{blue}{online}} and the SDPs are computed using the CVX software \cite{Grant2008}.

\textbf{Example 1}: Let us consider the following class of rank-two two-qutrit states:
\begin{align}
\sigma_p =p\proj{v_1}+(1-p)\proj{v_2},
\end{align}
with $\ket {v_1}={1}/{\sqrt 2}(\ket {01}-\ket{10}), \ket {v_2}={1}/{\sqrt 2}(\ket {02}-\ket{20})$. We show that $E_{\PPT}(\sigma_p) = E_\kappa(\sigma_p) > E_N(\sigma_p)$ for $0<p<1$ by the numerical comparison in Figure~\ref{fig:example 1}.

\begin{figure}
\centering
\includegraphics[width=0.45\textwidth]{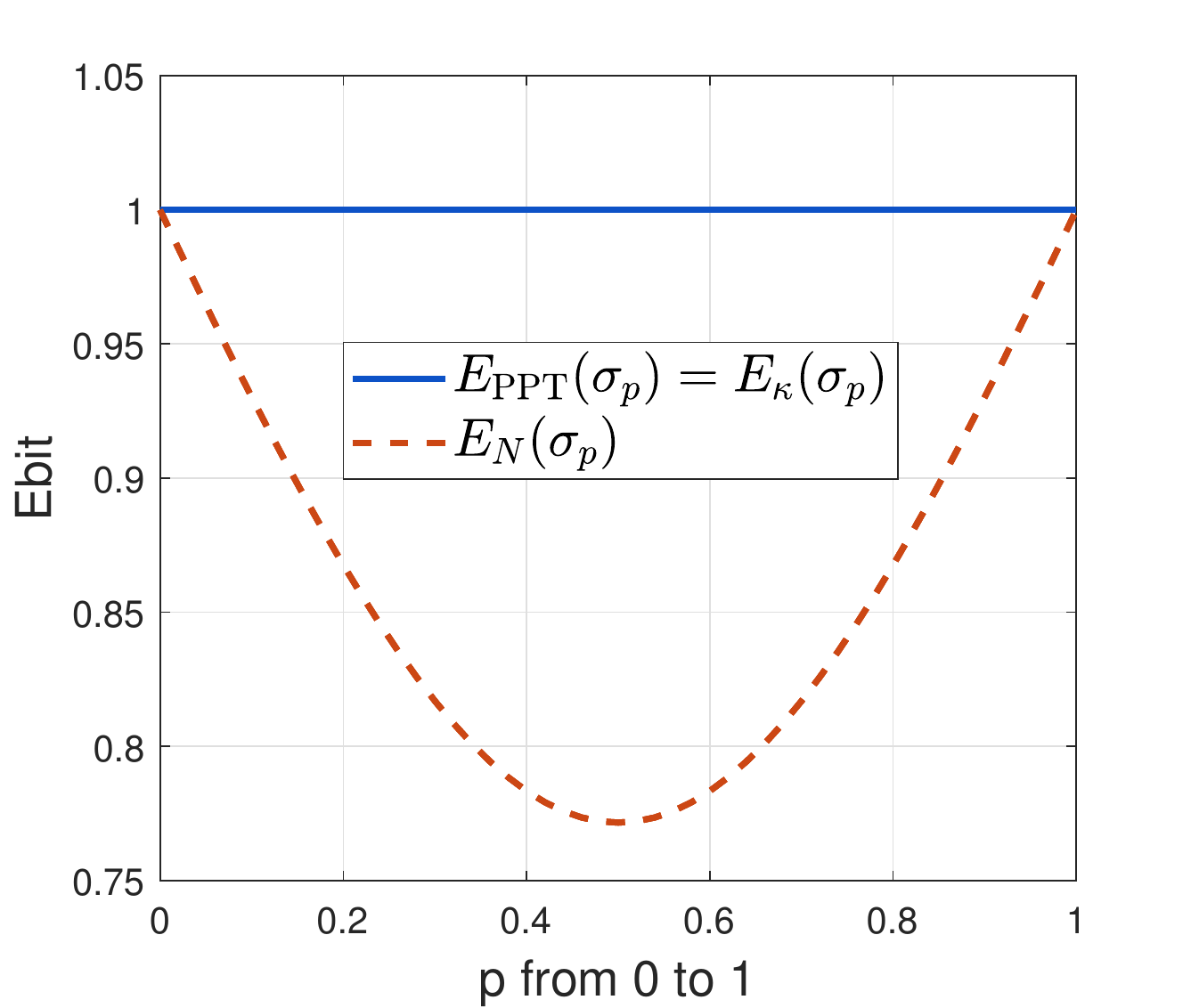}
\caption{$\kappa$-entanglement and  logarithmic negativity of $\sigma_p$}
\label{fig:example 1}
\end{figure}
	
\textbf{Example 2}: Let us consider the following class of rank-three two-qutrit states:
\begin{align}
\omega_p =\frac{p}{2}\proj{u_1}+\frac{1-p}{2}\proj{u_2} + \frac{1}{2}\proj{u_3},
\end{align}
with $\ket {u_1}=\ket{000}$, $\ket {u_2}={1}/{\sqrt 2}(\ket {02}+\ket{20})$, and $\ket{u_3}= {1}/{\sqrt 2}(\ket {12}+\ket{21})$.
The numerical comparison is presented in Figure~\ref{fig:example 2}.

\begin{figure}
\centering
\includegraphics[width=0.45\textwidth]{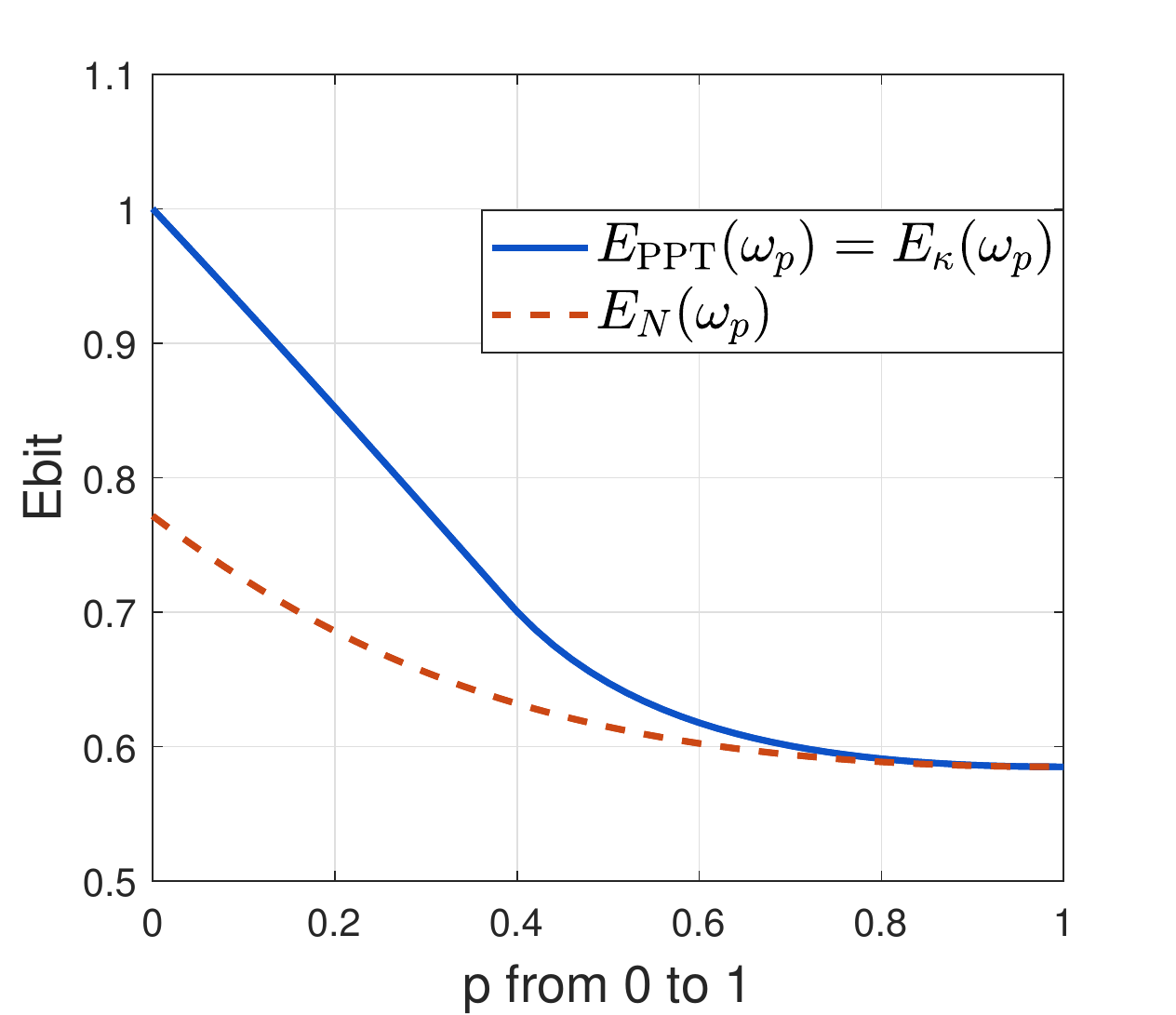}
\caption{$\kappa$-entanglement and  logarithmic negativity of $\omega_p$}
\label{fig:example 2}
\end{figure}
	
\textbf{Example 3}: Let us consider the following class of full-rank two-qutrit states:
\begin{align}
\tau_p =\frac{3p}{4}\proj{w_1}+\frac{3(1-p)}{4}\proj{w_2} + \frac{I/9}{4},
\end{align}
with $\ket {w_1}={1}/{\sqrt 3}(\ket {00}+\ket{12}+\ket{21})$, $\ket {w_2}={1}/{\sqrt 2}(\ket {02}+\ket{20})$.
The numerical comparison is presented  in Figure~\ref{fig:example 3}.

\begin{figure}
\centering
\includegraphics[width=0.45\textwidth]{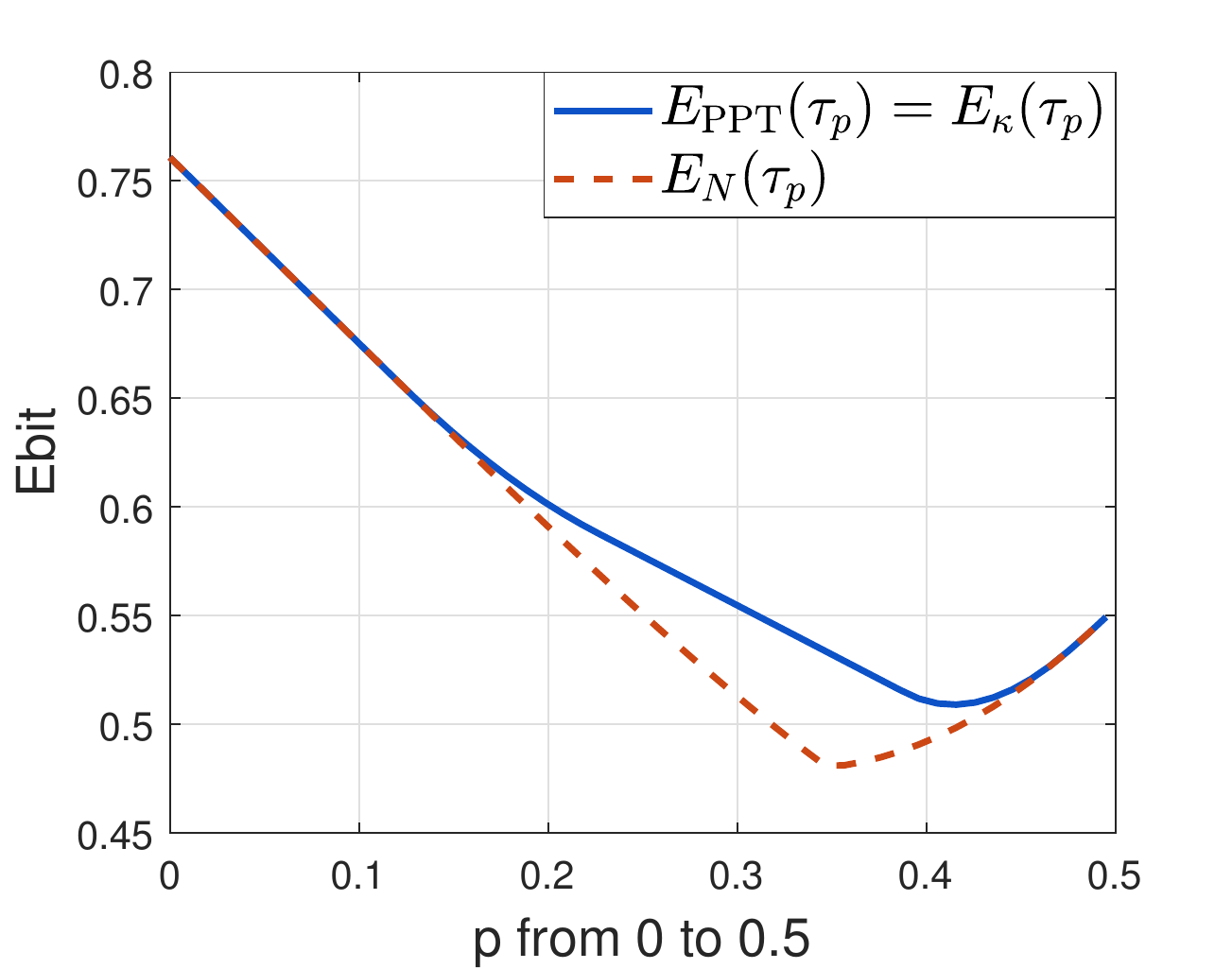}
\caption{$\kappa$-entanglement and  logarithmic negativity of $\tau_p$}
\label{fig:example 3}
\end{figure}

\section{Equality of $E_{\kappa}$ and $E^{\text{dual}}_{\kappa}$ for states acting on separable Hilbert spaces}

\label{app:kappa-to-dual-infty}

In this section, we prove that%
\begin{equation}
E_{\kappa}(\rho_{AB})=E_{\kappa}^{\text{dual}}(\rho_{AB}%
),\label{eq:kappa-dual-infty}%
\end{equation}
for a state $\rho_{AB}$ acting on a separable Hilbert space. To begin with,
let us recall that the following inequality always holds from weak duality%
\begin{equation}
E_{\kappa}(\rho_{AB})\geq E_{\kappa}^{\text{dual}}(\rho_{AB}).
\end{equation}
So our goal is to prove the opposite inequality. We suppose throughout that
$E_{\kappa}^{\text{dual}}(\rho_{AB})<\infty$. Otherwise, the desired equality
in \eqref{eq:kappa-dual-infty}\ is trivially true. We also suppose that
$\rho_{AB}$ has full support. Otherwise, it is finite-dimensional and the
desired equality in \eqref{eq:kappa-dual-infty}\ is trivially true, or it has only finitely many zero entries, in which case it is isomorphic to a state with full support.

To this end, consider sequences $\{\Pi_{A}^{k}\}_{k}$ and $\{\Pi_{B}^{k}%
\}_{k}$ of projectors weakly converging to the identities $\1_{A}$ and $\1_{B}$
and such that $\Pi_{A}^{k}\leq\Pi_{A}^{k^{\prime}}$ and $\Pi_{B}^{k}\leq
\Pi_{B}^{k^{\prime}}$ for $k^{\prime}\geq k$. Furthermore, we suppose that
$[\Pi_{B}^{k}]^{T_{B}}=\Pi_{B}^{k}$ for all $k$. Then define%
\begin{equation}
\rho_{AB}^{k}\coloneqq\left(  \Pi_{A}^{k}\otimes\Pi_{B}^{k}\right)  \rho_{AB}\left(
\Pi_{A}^{k}\otimes\Pi_{B}^{k}\right)  .
\end{equation}
It follows that \cite{D67}%
\begin{equation}
\lim_{k\rightarrow\infty}\left\Vert \rho_{AB}-\rho_{AB}^{k}\right\Vert _{1}=0.
\end{equation}

We now prove that%
\begin{equation}
E_{\kappa}^{\text{dual}}(\rho_{AB})\geq E_{\kappa}^{\text{dual}}(\rho_{AB}%
^{k})\label{eq:dual-projection}%
\end{equation}
for all $k$. Let $A^{k}$ and $B^{k}$ denote the subspaces onto which $\Pi
_{A}^{k}$ and $\Pi_{B}^{k}$ project. Let $V_{A^{k}B^{k}}^{k}$ and
$W_{A^{k}B^{k}}^{k}$ be arbitrary operators satisfying $V_{AB}^{k}+W_{AB}%
^{k}\leq \1_{A^{k}B^{k}}=\left(  \Pi_{A}^{k}\otimes\Pi_{B}^{k}\right)  $,
$[V_{A^{k}B^{k}}^{k}]^{T_{B}},[W_{A^{k}B^{k}}^{k}]^{T_{B}}\geq0$. Set%
\begin{align}
\overline{V}_{AB}^{k}  & \coloneqq\left(  \Pi_{A}^{k}\otimes\Pi_{B}^{k}\right)
V_{A^{k}B^{k}}^{k}\left(  \Pi_{A}^{k}\otimes\Pi_{B}^{k}\right)  ,\\
\overline{W}_{AB}^{k}  & \coloneqq\left(  \Pi_{A}^{k}\otimes\Pi_{B}^{k}\right)
W_{A^{k}B^{k}}^{k}\left(  \Pi_{A}^{k}\otimes\Pi_{B}^{k}\right)  ,
\end{align}
and note that%
\begin{align}
\overline{V}_{AB}^{k}+\overline{W}_{AB}^{k}  & \leq \1_{AB},\\
\lbrack\overline{V}_{AB}^{k}]^{T_{B}},[\overline{W}_{AB}^{k}]^{T_{B}}  &
\geq0.
\end{align}
Then%
\begin{align}
\operatorname{Tr}\rho_{AB}^{k}(V_{A^{k}B^{k}}^{k}-W_{A^{k}B^{k}}^{k})  &
=\operatorname{Tr}\left(  \Pi_{A}^{k}\otimes\Pi_{B}^{k}\right)  \rho
_{AB}\left(  \Pi_{A}^{k}\otimes\Pi_{B}^{k}\right)  (V_{A^{k}B^{k}}%
^{k}-W_{A^{k}B^{k}}^{k})\\
& =\operatorname{Tr}\rho_{AB}\left(  \Pi_{A}^{k}\otimes\Pi_{B}^{k}\right)
(V_{A^{k}B^{k}}^{k}-W_{A^{k}B^{k}}^{k})\left(  \Pi_{A}^{k}\otimes\Pi_{B}%
^{k}\right)  \\
& =\operatorname{Tr}\rho_{AB}(\overline{V}_{AB}^{k}-\overline{W}_{AB}^{k})\\
& \leq E_{\kappa}^{\text{dual}}(\rho_{AB}).
\end{align}
Since the inequality holds for arbitrary $V_{A^{k}B^{k}}^{k}$ and
$W_{A^{k}B^{k}}^{k}$ satisfying the conditions above, we conclude the
inequality in \eqref{eq:dual-projection}.

Thus, we conclude that%
\begin{equation}
E_{\kappa}^{\text{dual}}(\rho_{AB})\geq\limsup_{k\rightarrow\infty}E_{\kappa
}^{\text{dual}}(\rho_{AB}^{k}).\label{eq:lim-sup-lower}%
\end{equation}

Now let us suppose that $E_{\kappa}^{\text{dual}}(\rho_{AB})<\infty$. Then for
all $V_{AB}$ and $W_{AB}$ satisfying $V_{AB}+W_{AB}\leq \1_{AB}$,
$[V_{AB}]^{T_{B}},[W_{AB}]^{T_{B}}\geq0$, as well as $\operatorname{Tr}%
\rho_{AB}(V_{AB}-W_{AB})\geq0$, we have that%
\begin{equation}
\operatorname{Tr}\rho_{AB}(V_{AB}-W_{AB})<\infty.
\end{equation}
Since $\rho_{AB}$ has full support, this means that%
\begin{equation}
\left\Vert V_{AB}-W_{AB}\right\Vert _{\infty}<\infty.
\end{equation}
Considering that from H\"older's inequality%
\begin{equation}
\left\vert \operatorname{Tr}(\rho_{AB}-\rho_{AB}^{k})(V_{AB}-W_{AB}%
)\right\vert \leq\left\Vert \rho_{AB}-\rho_{AB}^{k}\right\Vert _{1}\left\Vert
V_{AB}-W_{AB}\right\Vert _{\infty},
\end{equation}
and setting%
\begin{align}
V_{AB}^{k}  & \coloneqq\left(  \Pi_{A}^{k}\otimes\Pi_{B}^{k}\right)  V_{AB}\left(
\Pi_{A}^{k}\otimes\Pi_{B}^{k}\right)  ,\\
W_{AB}^{k}  & \coloneqq\left(  \Pi_{A}^{k}\otimes\Pi_{B}^{k}\right)  W_{AB}\left(
\Pi_{A}^{k}\otimes\Pi_{B}^{k}\right)  ,
\end{align}
we conclude that%
\begin{align}
\operatorname{Tr}\rho_{AB}(V_{AB}-W_{AB})  & \leq\liminf_{k\rightarrow\infty
}\operatorname{Tr}\rho_{AB}^{k}(V_{AB}-W_{AB})\\
& =\liminf_{k\rightarrow\infty}\operatorname{Tr}\rho_{AB}^{k}(V_{AB}%
^{k}-W_{AB}^{k})\\
& \leq\liminf_{k\rightarrow\infty}\sup_{V^{k},W^{k}}\operatorname{Tr}\rho
_{AB}^{k}(V_{AB}^{k}-W_{AB}^{k})\\
& =\liminf_{k\rightarrow\infty}E_{\kappa}^{\text{dual}}(\rho_{AB}^{k}).
\end{align}
Since the inequality holds for arbitrary $V_{AB}$ and $W_{AB}$ satisfying the
above conditions, we conclude that%
\begin{equation}
E_{\kappa}^{\text{dual}}(\rho_{AB})\leq\liminf_{k\rightarrow\infty}E_{\kappa
}^{\text{dual}}(\rho_{AB}^{k}).\label{eq:lim-inf-upper}%
\end{equation}

Putting together \eqref{eq:lim-sup-lower} and \eqref{eq:lim-inf-upper}, we
conclude that%
\begin{equation}
E_{\kappa}^{\text{dual}}(\rho_{AB})=\lim_{k\rightarrow\infty}E_{\kappa
}^{\text{dual}}(\rho_{AB}^{k}).\label{eq:dual-limits}%
\end{equation}

From strong duality for the finite-dimensional case, we have for all $k$ that%
\begin{equation}
E_{\kappa}^{\text{dual}}(\rho_{AB}^{k})=E_{\kappa}(\rho_{AB}^{k}),
\end{equation}
and thus that%
\begin{equation}
\lim_{k\rightarrow\infty}E_{\kappa}^{\text{dual}}(\rho_{AB}^{k})=\lim
_{k\rightarrow\infty}E_{\kappa}(\rho_{AB}^{k}).\label{eq:finite-str-dual}%
\end{equation}

It thus remains to prove that%
\begin{equation}
\lim_{k\rightarrow\infty}E_{\kappa}(\rho_{AB}^{k})=E_{\kappa}(\rho_{AB}).
\end{equation}
We first prove that%
\begin{equation}
E_{\kappa}(\rho_{AB})\geq\limsup_{k\rightarrow\infty}E_{\kappa}(\rho_{AB}%
^{k}).\label{eq:lim-sup-lower_e_kappa}%
\end{equation}
Let $S_{AB}$ be an arbitrary operator satisfying
\begin{equation}
S_{AB}\geq0,\qquad-S_{AB}^{T_{B}}\leq\rho_{AB}^{T_{B}}\leq S_{AB}^{T_{B}}.
\label{eq:conditions-for-S-limits}
\end{equation}
Then, defining $S_{AB}^{k}=\left(  \Pi_{A}^{k}\otimes\Pi_{B}^{k}\right)
S_{AB}\left(  \Pi_{A}^{k}\otimes\Pi_{B}^{k}\right)  $, we have that%
\begin{equation}
S_{AB}^{k}\geq0,\qquad-[S_{AB}^{k}]^{T_{B}}\leq\lbrack\rho_{AB}^{k}]^{T_{B}%
}\leq\lbrack S_{AB}^{k}]^{T_{B}}.
\end{equation}
Then%
\begin{equation}
\log_2\operatorname{Tr}S_{AB}\geq\log_2\operatorname{Tr}S_{AB}^{k}\geq E_{\kappa
}(\rho_{AB}^{k}).
\end{equation}
Since the inequality holds for all $S_{AB}$ satisfying \eqref{eq:conditions-for-S-limits}, we conclude that%
\begin{equation}
E_{\kappa}(\rho_{AB})\geq E_{\kappa}(\rho_{AB}^{k})
\end{equation}
for all $k$, and thus \eqref{eq:lim-sup-lower_e_kappa}\ holds.

The rest of the proof follows \cite{FAR11} closely.
Since the
condition $\Pi_{A}^{k}\leq\Pi_{A}^{k^{\prime}}$ and $\Pi_{B}^{k}\leq\Pi
_{B}^{k^{\prime}}$ for $k^{\prime}\geq k$ holds, in fact the same sequence of
steps as above allows for concluding that%
\begin{equation}
E_{\kappa}(\rho_{AB}^{k^{\prime}})\geq E_{\kappa}(\rho_{AB}^{k}),
\end{equation}
meaning that the sequence is monotone non-decreasing with $k$. Thus, we can
define%
\begin{equation}
\mu\coloneqq\lim_{k\rightarrow\infty}E_{\kappa}(\rho_{AB}^{k})\in\mathbb{R}^{+},
\end{equation}
and note from the above that%
\begin{equation}
\mu\leq E_{\kappa}(\rho_{AB}).
\end{equation}
For each $k$, let $S_{AB}^{k}$ denote an optimal operator such that
$E_{\kappa}(\rho_{AB}^{k})=\log_2\operatorname{Tr}S_{AB}^{k}$. From the fact
that $S_{AB}^{k}\geq0$, and $\operatorname{Tr}S_{AB}^{k}\leq2^{\mu}$, we
conclude that $\{S_{AB}^{k}\}_{k}$ is a bounded sequence in the trace class
operators. Since the trace class operators form the dual space of the compact
operators $\mathcal{K}(\mathcal{H}_{AB})$ \cite{RS78}, we can apply the Banach--Alaoglu
theorem \cite{RS78} to find a subsequence $\{S_{AB}^{k}\}_{k\in\Gamma}$ with a
weak$^{\ast}$ limit $\widetilde{S}_{AB}$ in the trace class operators such that
$\widetilde{S}_{AB}\geq0$ and $\operatorname{Tr}[\widetilde{S}_{AB}]\leq2^{\mu}$.
Furthermore, the sequences $[\rho_{AB}^{k}]^{T_{B}}+[S_{AB}^{k}]^{T_{B}}$ and
$[S_{AB}^{k}]^{T_{B}}-[\rho_{AB}^{k}]^{T_{B}}$ converge in the weak operator
topology to $\rho_{AB}^{T_{B}}+\widetilde{S}_{AB}^{ T_{B}}$ and $\widetilde{S}_{AB}^{ T_{B}%
}-\rho_{AB}^{T_{B}}$, respectively, and we can then conclude that $\rho
_{AB}^{T_{B}}+\widetilde{S}_{AB}^{ T_{B}},\widetilde{S}_{AB}^{ T_{B}}-\rho_{AB}^{T_{B}}\geq0$.
But this means that%
\begin{equation}
E_{\kappa}(\rho_{AB})\leq\log_2\operatorname{Tr}\widetilde{S}_{AB}\leq\mu,
\end{equation}
which implies that%
\begin{equation}
E_{\kappa}(\rho_{AB})\leq\liminf_{k\rightarrow\infty}E_{\kappa}(\rho_{AB}%
^{k}).\label{eq:lim-inf-upper_e_kappa}%
\end{equation}
Putting together
\eqref{eq:lim-sup-lower_e_kappa} and \eqref{eq:lim-inf-upper_e_kappa}, we
conclude that%
\begin{equation}
E_{\kappa}(\rho_{AB})=\lim_{k\rightarrow\infty}E_{\kappa}(\rho_{AB}%
^{k}).\label{eq:primal-limits}%
\end{equation}
Finally, putting together \eqref{eq:dual-limits}, \eqref{eq:finite-str-dual},
and \eqref{eq:primal-limits},\ we conclude \eqref{eq:kappa-dual-infty}.


\end{document}